\documentclass[letterpaper,11pt]{article}

\usepackage{amsmath}
\usepackage{amssymb}
\usepackage{amsthm}
\usepackage{mathtools}
\usepackage{xcolor}
\usepackage{complexity}
\usepackage{fullpage}
\usepackage[whole]{bxcjkjatype}
\usepackage{microtype}
\usepackage{enumitem}
\usepackage{thm-restate}
\usepackage{epigraph}
\usepackage{authblk}
\usepackage{hyperref}
\usepackage{cleveref}
\usepackage[
  backend=biber,
  style=alphabetic,
  natbib=false,
  maxnames=99,
  maxalphanames=4,
  minalphanames=3,
  useprefix=true
]{biblatex}

\hypersetup{
  linkcolor={red!75!black},
  citecolor={blue!75!black},
  urlcolor={red!75!black},
}

\theoremstyle{plain}
\newtheorem{theorem}{Theorem}[section]
\newtheorem{lemma}[theorem]{Lemma}
\newtheorem{proposition}[theorem]{Proposition}
\newtheorem{corollary}[theorem]{Corollary}

\theoremstyle{definition}
\newtheorem{definition}[theorem]{Definition}

\theoremstyle{remark}

\newcommand{\problem}[1]{\ensuremath{\mathrm{#1}}}
\renewcommand{\vec}[1]{\ensuremath{\boldsymbol{#1}}}
\DeclarePairedDelimiter\abs{|}{|}
\DeclarePairedDelimiter\norm{\lVert}{\rVert}
\newcommand{\dist}{\mathrm{dist}}
\newcommand{\SVP}{\problem{SVP}}
\newcommand{\GapSVP}{\problem{GapSVP}}
\newcommand{\coGapSVP}{\problem{coGapSVP}}
\newcommand{\GapCVP}{\problem{GapCVP}}
\newcommand{\LDLP}{\problem{LDLP}}
\newcommand{\CosetLDLP}{\problem{CosetLDLP}}
\newcommand{\CoeffLDLP}{\problem{CoeffLDLP}}

\usepackage[draft,multiuser,inline,nomargin]{fixme}
\usepackage[left=1in, right=1in, top=1in]{geometry}

\begin{document}

\title{On the Complexity of Locally Dense Lattices}

\author[1]{Shuichi Hirahara}
\author[2]{Kazuki Ogitsuka}
\affil[1]{National Institute of Informatics, Tokyo, Japan \protect\\ \texttt{\href{mailto:s_hirahara@nii.ac.jp}{s\_hirahara@nii.ac.jp}}}
\affil[2]{The Graduate University for Advanced Studies, SOKENDAI, Tokyo, Japan \protect\\ \texttt{\href{mailto:ogitsuka@nii.ac.jp}{ogitsuka@nii.ac.jp}}}

\date{}
\maketitle

\begin{abstract}
\emph{Locally dense lattices} are central gadgets used to prove the hardness of the Shortest Vector Problem and related lattice problems.
Informally, a locally dense lattice is a lattice $\mathcal{L}$ that contains exponentially many lattice vectors inside some $\ell_p$ ball centered at $\vec{s}$ with radius at most an $\alpha < 1$ fraction of the length of its shortest nonzero lattice vector.

In this paper, taking a ``meta'' viewpoint on locally dense lattices, we introduce the \emph{Locally Dense Lattice Problem} (LDLP), the decision problem of determining whether a given input specifies a locally dense lattice.
Our main result is that LDLP in $\ell_p$ norms for all finite $p \geq \log_2 3$ and for the infinity norm is complete for the second level of the polynomial hierarchy.

We also compare two standard definitions of local density that appear in prior work. 
Micciancio's original definition (FOCS 1998 and SICOMP 2001) uses integer coefficient vectors, while later work by Micciancio (ToC 2012) and by Bennett and Peikert (RANDOM 2023) uses short vectors in a shifted coset. 
We show that the corresponding promise problems are mutually reducible in deterministic polynomial time, which shows that the two formulations are robust.
\end{abstract}

\newpage
\section{Introduction}
A \emph{lattice} is the set of all integer linear combinations of some $n$ linearly independent vectors $\vec{b}_1,\ldots,\vec{b}_n \in \mathbb{R}^m$, i.e.,
\begin{align*}
    \mathcal{L} = \mathcal{L}(\vec{b}_1,\ldots,\vec{b}_n) \coloneq \left\{\sum^n_{i=1} a_i\vec{b}_i\ :\ a_1,\ldots,a_n\in \mathbb{Z}\right\}.
\end{align*}
The set of vectors $\vec{b}_1,\ldots,\vec{b}_n$ is called a \emph{basis} of $\mathcal{L}$. 
A basis can be represented by the matrix $\mathbf{B} := (\vec{b}_1,\ldots,\vec{b}_n)$, and the lattice generated by $\mathbf{B}$ is denoted $\mathcal{L}(\mathbf{B})$.
Lattices have been studied in mathematics and theoretical computer science,
and have found applications in cryptography.
Lattice-based cryptography, one of the most promising candidates for post-quantum cryptography, is based on the conjectured hardness of computational problems that concern lattices.
We refer the reader to the survey by Peikert \cite{10.1561/0400000074/Pei2016} on lattice-based cryptography.

\paragraph*{Locally Dense Lattices.}
One of the most fundamental computational problems on lattices is the Shortest Vector Problem (SVP). The task of SVP is to decide whether $\mathcal{L}(\mathbf{B})$ contains some non-zero vector of length at most $d$ or not for a given basis $\mathbf{B}$ and a given parameter $d$.%
\footnote{SVP is sometimes used for the associated search problem; throughout this paper, we use SVP for the decision version.}
The length of a vector is usually measured by the $\ell_2$ norm.
More generally, the version of SVP in which the length is measured by the $\ell_p$ norm is denoted by $\SVP_p$ for $p \in [1, \infty]$.

A line of hardness reductions for $\SVP_p$, initiated by Micciancio \cite{doi:10.1137/S0097539700373039/Mic01,journals/toc/Micciancio12} and further developed in subsequent work \cite{bennett_et_al:LIPIcs.APPROX/RANDOM.2023.37,wan2026}, uses a gadget called a \emph{locally dense lattice}.
Informally, such a gadget consists of a lattice $\mathcal{L}$, a shift vector $\vec{s}$, a radius bound \(r \le \alpha \cdot \lambda_1^{(p)}(\mathcal{L})\), and a linear map $\mathbf{T}$ such that, for every Boolean vector \(\vec{u}\in\{0,1\}^k\), there exists a vector \(\vec{x}\in(\vec{s}+\mathcal{L})\cap\mathcal{B}_p(r)\) satisfying \(\mathbf{T}\vec{x}=\vec{u}\).
Here $\alpha < 1$ is a constant, and $\lambda_1^{(p)}(\mathcal{L})$ denotes the minimum $\ell_p$ norm of non-zero vectors in $\mathcal{L}$.

There are two closely related formulations of locally dense lattices in the literature.
Both use a linear map \(\mathbf{T}\) to certify the presence of many short vectors, but they differ in what is used as the preimage under \(\mathbf{T}\).

Micciancio's original reduction \cite{doi:10.1137/S0097539700373039/Mic01} uses a \emph{coefficient} formulation, which asks for
\[
\{0,1\}^k \subseteq
\{\mathbf{T} \vec{x} : \vec{x} \in \mathbb{Z}^n,\ 
\|\mathbf{A}\vec{x}-\vec{s}\|_p \le \alpha \ell\}.
\]
We refer to this as a \((p,\alpha,k)\)-\emph{coefficient-locally dense lattice}.
Later work of Micciancio \cite{journals/toc/Micciancio12}, Bennett and Peikert \cite{bennett_et_al:LIPIcs.APPROX/RANDOM.2023.37}, and Wan \cite{wan2026} is instead phrased in terms of short vectors in a shifted coset, together with a linear map on those coset vectors:
\[
\{0,1\}^k \subseteq
\{\mathbf{T} \vec{v} : \vec{v} \in
(\vec{s}+\mathcal{L}(\mathbf{A})) \cap \mathcal{B}_p(\alpha \ell)\}.
\]
We call this a \((p,\alpha,k)\)-\emph{coset-locally dense lattice}.

In both formulations, the inclusion of \(\{0,1\}^k\) implies that the small ball contains at least \(2^k\) distinct vectors.
Indeed, preimages of two distinct Boolean vectors must be distinct because \(\mathbf{T}\) is a function.
In the coefficient formulation, the column rank of \(\mathbf{A}\) implies that distinct coefficient vectors yield distinct lattice points; in the coset formulation, the preimages are the coset vectors themselves.
For coefficient-locally dense lattices, Micciancio \cite{doi:10.1137/S0097539700373039/Mic01} constructed locally dense lattices using Schnorr--Adleman prime lattices.
For coset-locally dense lattices, Micciancio \cite{journals/toc/Micciancio12} constructed locally dense lattices via ``Construction D'' \cite{Conway99_book} applied to BCH codes, while Bennett and Peikert \cite{bennett_et_al:LIPIcs.APPROX/RANDOM.2023.37} did so via ``Construction A'' \cite{Conway99_book} applied to Reed--Solomon codes.
Randomized constructions are known for both types of locally dense lattices.
Recently, Wan \cite{wan2026} gave a deterministic construction of coset-locally dense lattices, derandomizing the construction of Bennett and Peikert based on ``Construction A'' applied to Reed--Solomon codes.
Using this deterministic construction, Wan proved that \(\SVP_p\) is \(\NP\)-hard under deterministic polynomial-time reductions.
This raises the question of whether deterministic constructions also exist for coefficient-locally dense lattices, and whether the two definitions of locally dense lattices are robust.

\subsection{Our results}
In this paper, we initiate the study of locally dense lattices from a ``meta'' viewpoint.
In the language of explicit constructions \cite{Korten21_focs_conf}, prior work \cite{doi:10.1137/S0097539700373039/Mic01,journals/toc/Micciancio12, bennett_et_al:LIPIcs.APPROX/RANDOM.2023.37,wan2026} has largely studied locally dense lattices from the perspective of constructing them explicitly.
Taking a step back, we introduce the following \emph{decision} problem, which asks to decide whether an input is a locally dense lattice or not (see \Cref{def:ldlp} for the formal definition).%
\footnote{This work was initiated before the breakthrough result by Wan \cite{wan2026} that proves the deterministic $\NP$-hardness of SVP. Our original motivation for introducing LDLP was an approach for proving the deterministic $\NP$-hardness of SVP; see \Cref{appendix:meta}.}
\begin{quote}
    \textbf{$(p,\alpha)$-Locally Dense Lattice Problem ($(p,\alpha)$-LDLP)}
    
    Given as input a tuple $(\mathbf{A}, \vec{s}, \ell, \mathbf{T})$,
    does it hold that $\lambda^{(p)}_1(\mathcal{L}(\mathbf{A})) \geq \ell$ and $\{0,1\}^k\subseteq\{\mathbf{T}\vec{v} : \vec{v}\in (\vec{s} + \mathcal{L}(\mathbf{A})) \cap \mathcal{B}_p(\alpha\cdot\ell)\}$?
\end{quote}

As shown in \Cref{prop:pi2p}, $(p,\alpha)$-$\LDLP$ belongs to $\Pi_2^p=\coNP^{\NP}$ for every $p \in [1,\infty]$ and every $\alpha > 0$.
Our main result shows that this upper bound is tight for all finite \(p \ge \log_2 3\) and for \(p=\infty\).

\begin{theorem}
The following holds:
\label{thm:ldlp-is-pi2-complete-lp}
\begin{itemize}
    \item \label{item:finite}For every constant $p \in [\log_2 3, \infty)$ and every constant $\alpha\in(2^{-1/p},1)$, $(p, \alpha)$-$\LDLP$ is $\Pi_2^p$-complete under deterministic polynomial-time many-one reductions.
    \item \label{item:infty}For every constant $\alpha\in(1/2,1)$, $(\infty, \alpha)$-$\LDLP$ is $\Pi_2^p$-complete under deterministic polynomial-time many-one reductions.
\end{itemize}
\end{theorem}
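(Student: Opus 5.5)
Membership in $\Pi_2^p$ is \Cref{prop:pi2p}, so the work is $\Pi_2^p$-hardness. I would reduce from a canonical $\Pi_2^p$-complete problem: $\forall\exists$-3SAT, i.e.\ decide whether $\forall \vec{u}\in\{0,1\}^k\ \exists \vec{w}\ \varphi(\vec{u},\vec{w})$ holds for a 3-CNF $\varphi$. This shape matches the LDLP condition directly. The universal variables $\vec{u}$ become the target Boolean vector $\mathbf{T}\vec{v}$. The existential variables, together with the clause-satisfaction witnesses, become the remaining coordinates of a short vector $\vec{v}\in\vec{s}+\mathcal{L}(\mathbf{A})$. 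The condition $\lambda_1^{(p)}(\mathcal{L}(\mathbf{A}))\ge \ell$ has to be forced unconditionally, so that the universal-quantifier part carries all of the hardness.

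The reduction starts from the classical NP-hardness of $\GapCVP_p$ for Boolean/0-1 solutions, in the style of the Arora--Babai--Stern--Sweedyk reduction from exact set cover or 3SAT. Each variable $x$ gets an integer coordinate $z_x$ and a coordinate pair $\{z_x,\,1-z_x\}$ (realized via the shift $\vec{s}$). Each clause gets a gadget whose contribution to the $\ell_p^p$ norm is minimal exactly when the clause is satisfied by Boolean values. All these coordinates are scaled so that a satisfying Boolean assignment gives a coset vector with every coordinate in $\{\pm1/2\}$ up to scaling, or in $\{0,\pm1\}$ after appropriate shifting. The total is then exactly $(\alpha\ell)^p$, while any non-Boolean or unsatisfying choice exceeds it. For $\mathbf{T}$, I would take the projection that reads off the coordinates $z_{u_i}$ (undoing the shift), so that $\mathbf{T}\vec{v}=\vec{u}$. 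Completeness and soundness then reduce to: for each $\vec{u}$, there is a coset vector in the ball with $\mathbf{T}\vec{v}=\vec{u}$ iff $\varphi(\vec{u},\cdot)$ is satisfiable.

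The threshold conditions are the crux. Shifting each variable coordinate by $1/2$ makes every coset coordinate a half-integer, of absolute value $\ge 1/2$, with equality iff the coordinate corresponds to a Boolean value. The lattice itself is integral, so every nonzero lattice vector has some coordinate of absolute value $\ge 1$. To guarantee $\lambda_1\ge\ell$ I would append an identity-like block or scale so that every nonzero lattice vector has norm at least $\ell$. The short coset vectors have $N$ coordinates equal to $\pm1/2$, hence $\ell_p$ norm $N^{1/p}/2$. The ratio of ``one coordinate of size $1$ spread'' to ``all coordinates $1/2$'' is what forces $\alpha>2^{-1/p}$ in the finite case and $\alpha>1/2$ for $\ell_\infty$. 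Indeed, for $\ell_\infty$ every coset vector has norm $1/2$ while lattice vectors have norm $\ge1$, which gives $\alpha\in(1/2,1)$ directly; padding with dummy coordinates lets me tune the ratio to any prescribed $\alpha$.

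For finite $p$, a clause must be encoded so that a coordinate is exactly $\pm1/2$ iff the clause is satisfied. For instance, $\sum_{x\in C}\pm z_x - c_C$ with a free slack integer can be made half-integral, but an unsatisfied clause should force a coordinate of value $\ge 3/2$. Comparing $(3/2)^p$ against the $(1/2)^p$-type budget, or $1$ against $2\cdot(1/2)^p$, is presumably where $p\ge\log_2 3$ enters (since $2^p\ge 3$). The main obstacle I expect is exactly this clause gadget. The clause with three literals has $7$ satisfying patterns, and a single linear coordinate plus slack must give all of them the same minimal cost while keeping $\lambda_1$ large. I would first try encoding each clause by a slack variable $t_C\in\{0,1,2\}$ realized as a sum of two Boolean slacks. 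The coordinate $\ell_1+\ell_2+\ell_3-t_C-1$ is then $0$ when satisfied with some slack choice, and $-1$ otherwise. I would set the weights so that the penalty $1$ versus the $2^{p}\cdot(1/2)^p$ budget gap closes exactly when $2^p\ge 3$, and then verify that short lattice vectors cannot exploit the slack coordinates.
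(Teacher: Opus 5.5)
There is a genuine gap in the finite-$p$ case. It is not the clause gadget but the requirement $\lambda_1^{(p)}(\mathcal{L}(\mathbf{A}))\ge\ell$ itself.

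With coordinatewise half-integer shifts, every Boolean coset vector on $N$ variable coordinates has $\ell_p$ norm at least $N^{1/p}/2$. Integrality, however, only certifies $\lambda_1^{(p)}\ge 1$. For an arbitrary formula the lattice really does contain vectors of norm $O(1)$: for example $\mathbf{A}\vec{e}_i$ for a variable that occurs in few clauses, or, if you weight the clause rows heavily, $\mathbf{A}(\vec{e}_i-\vec{e}_j)$ for two variables with identical occurrences.

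Neither of your proposed fixes helps. Scaling multiplies radius and minimum distance by the same factor. An appended block $c\mathbf{I}$, shifted or not, adds at least as much to $\|\cdot\|_p^p$ of the Boolean coset vectors as to the short lattice vectors. So the ratio of radius to minimum distance stays around $N^{1/p}/2\gg 1$ and never drops below $1$. For the same reason your heuristic for the threshold $2^{-1/p}$ does not match: your construction gives $N^{1/p}/2$.

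Note also that on YES instances your reduction must output a genuine $(p,\alpha,k)$-coset-locally dense lattice, with $k$ the number of universal variables. So any correct reduction must contain an explicit locally dense lattice for finite $p$ somewhere, and a coordinatewise ABSS-style gadget cannot supply one.

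The paper supplies this ingredient with Wan's Reed--Solomon construction, \Cref{lem:binary-rs-gadget-lp} with $r=k+t$. There, every $(\vec{u},\vec{v})\in\{0,1\}^{k+t}$ is the projection of a coset vector of norm at most $\alpha\ell$, while $\lambda_1^{(p)}\ge\ell$. The $2^{-1/p}$ threshold comes from binary coset vectors of weight $h\approx(1+\varepsilon)k$ measured against minimum distance $(2k)^{1/p}$.

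The hypothesis $p\ge\log_2 3$ enters in \Cref{lem:short-rs-coset-is-binary}, not in a clause gadget. The all-ones parity row pins $\sum_i z_i=h$, and $|b|^p-b\ge 1$ for every integer $b\notin\{0,1\}$ (for $b=2$ this is exactly $2^p\ge 3$). Hence \emph{every} coset vector in the $\alpha\ell$-ball is binary, and this is what makes soundness go through.

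Clauses are then enforced by prepending the block $M(\mathbf{C}\vec{x}-\vec{b})$ with $M=\ell+1$. Using 1-in-3-SAT, \Cref{lem:1in3-clause-gadget} gives an exact encoding with gap $2$. Satisfying assignments pay nothing and violating ones pay at least $2M>\ell$, so no balancing of satisfying patterns and no slack variables are needed. The bound $\lambda_1^{(p)}\ge\ell$ is simply inherited from the lower block.

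Your $\ell_\infty$ sketch, by contrast, is essentially the paper's argument: odd coordinates $2z_i-1$, $\lambda_1^{(\infty)}\ge 2$, and clause violations of size at least $2$. Keep in mind that $\mathbf{T}$ must be linear on coset vectors, so you need an unshifted copy of the universal coordinates at the scale where $\lambda_1^{(\infty)}=2$. The paper arranges this by proving hardness for $\CoeffLDLP$ and then applying \Cref{prop:coefficient-to-coset}.
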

Locally dense lattices are central ingredients in hardness proofs for \(\SVP_p\) \cite{doi:10.1137/S0097539700373039/Mic01,journals/toc/Micciancio12,bennett_et_al:LIPIcs.APPROX/RANDOM.2023.37}.
Although locally dense lattices can be constructed explicitly, as shown by Wan's deterministic construction of coset-locally dense lattices \cite{wan2026}, which underlies his deterministic $\NP$-hardness result for \(\SVP_p\) and is also used in our proof of \Cref{thm:ldlp-is-pi2-complete-lp}, \Cref{thm:ldlp-is-pi2-complete-lp} shows that verification remains computationally intractable.
More precisely, given only a candidate tuple \((\mathbf A,\vec s,\ell,\mathbf T)\), deciding whether it satisfies the two conditions defining LDLP is hard in general, unless the polynomial hierarchy collapses.
Thus, the theorem separates explicit construction from verification: deterministic algorithms may output locally dense lattices, while verifying from the tuple alone that an arbitrary candidate satisfies the definition remains hard.
Also, Wan's deterministic construction is what makes our proof of the \(\Pi_2^p\)-hardness of LDLP possible.

The two formulations of locally dense lattices described above lead to two natural promise problem variants of LDLP. 
In the coefficient variant, the preimages under \(\mathbf{T}\) are coefficient vectors; in the coset variant, they are vectors in the shifted coset.
We define these promise problems, denoted \(\CoeffLDLP\) and \(\CosetLDLP\), in \Cref{sec:equivalence}; see \Cref{def:promise-coset-ldlp,def:var-ldlp}.

Our second result shows that these two promise problems are mutually reducible
in deterministic polynomial time.

\begin{theorem}[Informal; see \Cref{thm:ldlp-variant-reductions}]
\label{thm:ldlp-variant-reductions-intr}
The following holds (up to a small change in parameters):
\begin{itemize}
    \item There is a deterministic polynomial-time many-one reduction from $\CoeffLDLP$ to $\CosetLDLP$.
    \item There is a deterministic polynomial-time many-one reduction from $\CosetLDLP$ to $\CoeffLDLP$.
\end{itemize}
Moreover, in both reductions, the hypercube dimension \(k\) is preserved.
\end{theorem}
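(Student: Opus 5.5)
The plan is to prove the two directions separately. In each, we map an instance of one promise problem to an instance of the other over the same lattice norm structure, keeping $k$ and $\mathbf{T}$'s target space $\{0,1\}^k$ fixed. The only real work is translating between coefficient vectors $\vec{x}\in\mathbb{Z}^n$ and coset vectors $\vec{v}\in\vec{s}+\mathcal{L}(\mathbf{A})$.

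\textbf{Coefficient to coset.} Given $(\mathbf{A},\vec{s},\ell,\mathbf{T})$ with $\{0,1\}^k\subseteq\{\mathbf{T}\vec{x}:\vec{x}\in\mathbb{Z}^n,\ \|\mathbf{A}\vec{x}-\vec{s}\|_p\le\alpha\ell\}$, we note that $\mathbf{A}\vec{x}-\vec{s}$ ranges over the coset $-\vec{s}+\mathcal{L}(\mathbf{A})$. Since $\mathbf{A}$ has full column rank, $\vec{x}$ is recovered linearly from $\vec{v}=\mathbf{A}\vec{x}-\vec{s}$ via a left inverse $\mathbf{A}^{+}$ (computable in polynomial time over $\mathbb{Q}$), namely $\vec{x}=\mathbf{A}^{+}(\vec{v}+\vec{s})$. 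This map is affine rather than linear, so $\mathbf{T}\vec{x}=\mathbf{T}\mathbf{A}^{+}\vec{v}+\mathbf{T}\mathbf{A}^{+}\vec{s}$. To absorb the constant, we add one extra coordinate to the ambient space: the lattice basis becomes $\binom{\mathbf{A}}{\mathbf{0}}$, the shift becomes $\binom{-\vec{s}}{\varepsilon}$, and the map becomes $\mathbf{T}'(\vec{v},t)=\mathbf{T}\mathbf{A}^{+}\vec{v}+t\,\mathbf{T}\mathbf{A}^{+}\vec{s}/\varepsilon$. The last coordinate is identically $\varepsilon$ on the coset. Choosing $\varepsilon$ tiny increases norms by an arbitrarily small amount, which the ``small change in parameters'' absorbs by replacing $\alpha$ with some $\alpha'$ slightly larger. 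The minimum distance is unchanged. NO instances are handled by the same correspondence in reverse, since it is a bijection between coefficient vectors and coset vectors preserving the $\mathbf{T}$-values.

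\textbf{Coset to coefficient.} Given a coset instance, we first make $\vec{s}$ expressible. In the coefficient formulation the ball is centered at $\vec{s}$ and points $\mathbf{A}\vec{x}$ lie in $\mathcal{L}(\mathbf{A})$. Setting $\vec{s}'=-\vec{s}$ gives $\mathbf{A}\vec{x}-\vec{s}'=\mathbf{A}\vec{x}+\vec{s}\in\vec{s}+\mathcal{L}$, with the same norm. The map to be realized is $\vec{x}\mapsto\mathbf{T}(\mathbf{A}\vec{x}+\vec{s})=\mathbf{T}\mathbf{A}\vec{x}+\mathbf{T}\vec{s}$, which again has a constant term. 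We handle it with the same trick: add a coordinate to the lattice with a new basis vector $\vec{e}$ of huge length $M\gg\ell$, and target value $0$ in that coordinate. Any coefficient vector within the radius must then have coefficient $0$ on $\vec{e}$; otherwise the extra coordinate contributes at least $M$, which cannot happen. This kills the affine term in the wrong place, though. An alternative is to put the constant into a coordinate forced to equal $1$: use basis $\begin{pmatrix}\mathbf{A}&\vec{s}\\0&M\end{pmatrix}$ and center $(\vec{0},M)$. Short vectors then force the last coefficient to equal $1$, so $\mathbf{T}'(\vec{x},c)=\mathbf{T}\mathbf{A}\vec{x}+c\,\mathbf{T}\vec{s}$ is linear and correct.

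\textbf{Main obstacle.} The main difficulty is preserving $\lambda_1\ge\ell$ (approximately) in this second construction, because the new lattice contains vectors $M(c)$-combinations like $(\mathbf{A}\vec{x}+c\vec{s},\,cM)$. For $c\neq0$ these have length at least $M\ge\ell$, which is fine. For $c=0$ they are just old lattice vectors. Then the norm of short coset vectors is unchanged apart from the $|cM-M|=0$ term, so the parameters are exactly preserved. I expect the remaining care to be in the NO direction and in rational bit-size bookkeeping. If the promise NO case is ``$\lambda_1<\ell$ or some $\vec{u}$ uncovered,'' both constructions transfer each failure witness directly, given the bijections noted above.
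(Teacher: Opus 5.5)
Your coset-to-coefficient reduction is correct, and it is a different and slightly leaner construction than the paper's. Take $\mathbf{A}'=\begin{pmatrix}\mathbf{A}&\vec{s}\\ \mathbf{0}&M\end{pmatrix}$, center $(\vec{0},M)$, $\ell'=\ell$, $\mathbf{T}'=(\mathbf{T}\mathbf{A}\mid\mathbf{T}\vec{s})$, and an integer $M>\max\{1,\alpha\}\ell$. This gives full column rank and $\lambda_1^{(p)}(\mathcal{L}(\mathbf{A}'))\ge\ell$. It also forces the extra coefficient $c$ to equal $1$ whenever the residual has norm at most $\alpha\ell$, because $(c-1)M$ appears as a coordinate. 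The residual is then $(\vec{s}+\mathbf{A}\vec{z},0)$ and $\mathbf{T}'$ returns $\mathbf{T}(\vec{s}+\mathbf{A}\vec{z})$, so completeness and soundness hold with $\alpha$, $\ell$, $k$ unchanged. The paper instead adds $k$ auxiliary coefficients $\vec{y}$, penalizes $C(\mathbf{T}(\vec{s}+\mathbf{A}\vec{z})-\vec{y})$, and reads off $\vec{y}$ via $\mathbf{T}'=(\mathbf{O}\mid\mathbf{I}_k)$. Your homogenization uses a single extra row and column.

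The coefficient-to-coset direction, however, has a genuine gap. Its output is not a legal $\CosetLDLP$ instance: $\mathbf{T}'$ and $\vec{s}'$ must be integral, but $\mathbf{T}\mathbf{A}^{+}$, $\mathbf{T}\mathbf{A}^{+}\vec{s}/\varepsilon$ and $\varepsilon$ are not. This is not a bookkeeping issue. Take $\mathbf{A}=(2)$, $\vec{s}=(1)$, $\ell=2$, $\mathbf{T}=(1)$, which is a YES instance for every $\alpha\ge 1/2$. Every vector in your padded coset, even after scaling the lattice by any $N$, has the form $(N(2x-1),\varepsilon)$. So any integer $\mathbf{T}'=(a\ b)$ takes values $aN(2x-1)+b\varepsilon$ of a single parity and can never cover $\{0,1\}$. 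Clearing denominators in a rational $\mathbf{T}'$ instead puts the image in $\{0,D\}^k$. The underlying point is that for a non-primitive lattice, $\vec{x}$ is not an integer-linear function of $\mathbf{A}\vec{x}-\vec{s}$ together with constant coordinates.

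The missing idea is to place $\vec{x}$ itself inside the coset vector. The paper outputs $\mathbf{A}'=\binom{M\mathbf{A}}{\mathbf{I}_n}$, $\vec{s}'=\binom{-M\vec{s}}{\vec{0}}$, $\ell'=M\ell$, $\mathbf{T}'=(\mathbf{O}\mid\mathbf{T})$. This adds $\|\vec{x}\|_p$ to the norm. That cost is controlled by the a priori bound $\|\vec{x}\|_\infty\le(\sqrt{n}H)^n$ for coefficient vectors with $\|\mathbf{A}\vec{x}-\vec{s}\|_p\le\alpha\ell$, proved by Cramer's rule and Hadamard's inequality (\Cref{fact:upper-vector}). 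With it, an $M$ of polynomial bit length ensures $\|\vec{x}\|_p\le\beta M\ell$. That bound, not a tiny constant coordinate, is the real source of the additive slack $\beta$, and nothing in your argument plays its role.
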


Applying these reductions to YES instances yields the following robustness statement for locally dense lattices themselves.
For technical reasons, the passage from coefficient-locally dense lattices to coset-locally dense lattices may require an arbitrarily small additive increase in the allowed radius, whereas the reverse direction preserves the radius.
\begin{corollary}
\label{thm:among-ldls}
For any $p \geq 1$, $\alpha > 0$, $\beta > 0$ and $k \geq 1$, the following holds:
\begin{itemize}
    \item There exists a deterministic polynomial-time algorithm that, given a $(p, \alpha, k)$-coefficient-locally dense lattice as input, outputs a $(p, \alpha + \beta, k)$-coset-locally dense lattice.
    \item There exists a deterministic polynomial-time algorithm that, given a $(p, \alpha, k)$-coset-locally dense lattice as input, outputs a $(p, \alpha, k)$-coefficient-locally dense lattice.
\end{itemize}
\end{corollary}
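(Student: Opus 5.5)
The plan is to use the structure of \Cref{thm:ldlp-variant-reductions}: each direction is a lifting of the input tuple that, on YES instances, maps a locally dense lattice to one of the other kind. The main issue in both directions is that the relation between a coefficient vector $\vec{x}$ and the corresponding coset vector $\vec{v}=\mathbf{A}\vec{x}-\vec{s}$ is affine, not linear. So the naive choice $\mathbf{T}'=\mathbf{T}\mathbf{A}$ (or $\mathbf{T}\mathbf{A}^{+}$) does not give a linear map. I will fix this by adding extra coordinates or columns that turn the affine relation into a linear one.

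\emph{Coset to coefficient (radius preserved).} Let $(\mathbf{A},\vec{s},\ell,\mathbf{T})$ be a coset-locally dense lattice. Fix a rational $M\ge \ell$ and define
\[
\mathbf{A}'=\begin{pmatrix}\mathbf{A} & \vec{s}\\ \vec{0}^{\top} & M\end{pmatrix},\qquad \vec{s}'=\begin{pmatrix}\vec{0}\\ M\end{pmatrix},\qquad \mathbf{T}'=\begin{pmatrix}\mathbf{T}\mathbf{A} & \mathbf{T}\vec{s}\end{pmatrix}.
\]
The columns of $\mathbf{A}'$ are linearly independent because of the last coordinate, even if $\vec{s}\in\mathrm{span}(\mathbf{A})$.

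For the minimum distance, take a nonzero lattice vector $\mathbf{A}'(\vec{x},c)=(\mathbf{A}\vec{x}+c\vec{s},Mc)$.
\begin{itemize}
\item If $c\ne 0$, its norm is at least $M\ge\ell$.
\item If $c=0$, it is a nonzero vector of $\mathcal{L}(\mathbf{A})$, so its norm is at least $\ell$.
\end{itemize}
Hence $\lambda_1^{(p)}(\mathcal{L}(\mathbf{A}'))\ge \ell$.

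For density, let $\vec{u}\in\{0,1\}^k$. By assumption there is $\vec{v}=\vec{s}+\mathbf{A}\vec{x}$ with $\|\vec{v}\|_p\le\alpha\ell$ and $\mathbf{T}\vec{v}=\vec{u}$. Take coefficients $(\vec{x},1)$. Then $\mathbf{A}'(\vec{x},1)-\vec{s}'=(\vec{v},0)$, whose norm is $\|\vec{v}\|_p\le\alpha\ell$. Also $\mathbf{T}'(\vec{x},1)=\mathbf{T}\mathbf{A}\vec{x}+\mathbf{T}\vec{s}=\vec{u}$. So we obtain a $(p,\alpha,k)$-coefficient-locally dense lattice, and the construction is clearly deterministic polynomial time.

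\emph{Coefficient to coset (radius increases by $\beta\ell$).} Let $(\mathbf{A},\vec{s},\ell,\mathbf{T})$ be a coefficient-locally dense lattice, and let $\varepsilon>0$ be a rational to be chosen. Define
\[
\mathbf{A}'=\begin{pmatrix}\mathbf{A}\\ \varepsilon \mathbf{I}_n\end{pmatrix},\qquad \vec{s}'=\begin{pmatrix}-\vec{s}\\ \vec{0}\end{pmatrix},
\]
and let $\mathbf{T}'$ be $\varepsilon^{-1}\mathbf{T}$ applied to the last $n$ coordinates. Then the coset vector for coefficients $\vec{x}$ is $\vec{v}'=(\mathbf{A}\vec{x}-\vec{s},\varepsilon\vec{x})$, and $\mathbf{T}'\vec{v}'=\mathbf{T}\vec{x}$ exactly. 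Appending coordinates cannot shorten any lattice vector, so $\lambda_1^{(p)}$ does not decrease.

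It remains to bound the norm. We have
\[
\|\vec{v}'\|_p\le \|\mathbf{A}\vec{x}-\vec{s}\|_p+\varepsilon\|\vec{x}\|_p .
\]
So I need an a priori bound $R$ on $\|\vec{x}\|_p$ for every $\vec{x}$ with $\|\mathbf{A}\vec{x}-\vec{s}\|_p\le\alpha\ell$.

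\emph{Main obstacle: computing $R$.} First, $\|\mathbf{A}\vec{x}\|_p\le\|\vec{s}\|_p+\alpha\ell$. Since $\mathbf{A}$ has full column rank, the rational left inverse $\mathbf{A}^{+}=(\mathbf{A}^{\top}\mathbf{A})^{-1}\mathbf{A}^{\top}$ satisfies $\vec{x}=\mathbf{A}^{+}(\mathbf{A}\vec{x})$. Hence
\[
\|\vec{x}\|_p\le \|\mathbf{A}^{+}\|_{\mathrm{entrywise\,sum}}\,\bigl(\|\vec{s}\|_1+\alpha\ell\cdot m\bigr)=:R .
\]
This crude bound uses $\|\cdot\|_p\le\|\cdot\|_1\le m\|\cdot\|_\infty$. $R$ is computable in polynomial time, and its bit length is polynomial by Cramer's rule.

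Now set $\varepsilon=\beta\ell/R$, rounded down to a rational of polynomial bit length. Then every witness $\vec{x}$ gives
\[
\|\vec{v}'\|_p\le(\alpha+\beta)\ell ,\qquad \mathbf{T}'\vec{v}'=\mathbf{T}\vec{x}.
\]
Since every $\vec{u}\in\{0,1\}^k$ has such a witness with $\mathbf{T}\vec{x}=\vec{u}$, the output is a $(p,\alpha+\beta,k)$-coset-locally dense lattice, computed in deterministic polynomial time. The only delicate points are keeping all encodings polynomial-size and handling $p=\infty$, where the same inequalities hold verbatim.
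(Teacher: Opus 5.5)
Your proof is correct in substance. In the coefficient-to-coset direction it is essentially the paper's construction from \Cref{prop:coefficient-to-coset}, but it needs one repair.

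\Cref{def:ldl} requires $\mathbf{A}'$, $\vec{s}'$ and $\mathbf{T}'$ to be integral, and your blocks $\varepsilon\mathbf{I}_n$ and $\varepsilon^{-1}\mathbf{T}$ are not. The fix is to take $\varepsilon=1/M$ with the integer $M=\lceil R/(\beta\ell)\rceil$. Then multiply $\mathbf{A}'$, $\vec{s}'$ and $\ell$ by $M$, which rescales all norms uniformly. The output becomes exactly the paper's $\mathbf{A}'=(M\mathbf{A};\mathbf{I}_n)$ (stacked vertically), $\vec{s}'=(-M\vec{s};\vec{0})$, $\ell'=M\ell$ and $\mathbf{T}'=(\mathbf{O}\ \mathbf{T})$.

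Your a priori bound via the rational left inverse $(\mathbf{A}^{\top}\mathbf{A})^{-1}\mathbf{A}^{\top}$ is a valid substitute for the Cramer--Hadamard bound of \Cref{fact:upper-vector}. All that matters is that $R$ has polynomial bit length. For the same integrality reason, in the other direction $M$ should be an integer such as $\lceil\ell\rceil$, not just a rational.

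In the coset-to-coefficient direction you take a genuinely different route. The paper (\Cref{prop:coset-to-coefficient}) adds $k$ auxiliary coefficients $\vec{y}$ together with a penalty block $C(\mathbf{T}(\vec{s}+\mathbf{A}\vec{z})-\vec{y})$. This forces $\vec{y}=\mathbf{T}(\vec{s}+\mathbf{A}\vec{z})$ for short solutions, so the output map can be the fixed projection $(\mathbf{O}\ \mathbf{I}_k)$. You instead homogenize. The single extra basis column $(\vec{s},M)$ with target $(\vec{0},M)$ makes the coefficient vector $(\vec{x},1)$ reproduce the coset vector exactly. This turns the affine map $\vec{x}\mapsto\mathbf{T}(\mathbf{A}\vec{x}+\vec{s})$ into the linear map $(\mathbf{T}\mathbf{A}\ \ \mathbf{T}\vec{s})$.

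Your version costs one extra row and column instead of $k$. The corollary only needs the YES direction, but your construction is also sound as a promise reduction once $M>\max\{1,\alpha\}\ell$. Any coefficient vector with $c\neq 1$ has residual norm at least $M|c-1|>\alpha\ell$. For $c=1$ the residual is $(\vec{v},0)$ with $\vec{v}\in\vec{s}+\mathcal{L}(\mathbf{A})$ and $\mathbf{T}'(\vec{x},1)=\mathbf{T}\vec{v}$.

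What the paper's version buys is an output map that is a $0/1$ coordinate projection independent of the input. Yours instead carries the possibly large entries of $\mathbf{T}\mathbf{A}$ and $\mathbf{T}\vec{s}$.
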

Therefore, combined with Wan's deterministic construction of coset-locally dense lattices, this yields a deterministic construction of coefficient-locally dense lattices.

\subsection{Techniques}
\paragraph*{Hardness of LDLP.}
The proof of \Cref{thm:ldlp-is-pi2-complete-lp} reduces from
\(\forall\exists\)~1-in-3-SAT.  
The main difficulty in the proof is to make the short coset vectors in an LDLP instance play the role of Boolean assignments.
In the reduction from
\[
        \forall \vec{u}\in\{0,1\}^k\ \exists \vec{v}\in\{0,1\}^t:
        \psi(\vec{u},\vec{v}),
\]
we need two properties.
First, for completeness, every assignment should appear as the projection of some short vector in one shifted coset.
Second, for soundness, every sufficiently short vector in that coset should itself be binary: otherwise a short non-Boolean vector might satisfy the properties of LDLP without corresponding to any truth assignment.
The first property can be satisfied by using a locally dense lattice. On the other hand, the second property does not follow from the properties of a general locally dense lattice.
In order to satisfy the second property, we use Wan's deterministic locally dense lattice \cite{wan2026} to realize all Boolean assignments \((\vec{u},\vec{v})\) as projections of short coset vectors.
Moreover, we observe that, for \(p\geq \log_2 3\), every sufficiently short vector in the relevant coset is a binary vector. 
Thus short coset vectors can be interpreted as Boolean assignments.

We then encode the 1-in-3 constraints by a linear system \(\mathbf{C}\vec{z}=\vec{b}\) with a constant gap: for Boolean \(\vec{z}\), the system holds exactly when \(\vec{z}\) satisfies all clauses, and otherwise some coordinate of \(\mathbf{C}\vec{z}-\vec{b}\) has absolute value at least \(2\).
The final coset vector has the form
\[
        \begin{pmatrix}
        M(\mathbf{C}\vec{x}-\vec{b})\\
        \vec{x}
        \end{pmatrix}.
\]
Hence the upper block vanishes exactly for satisfying assignments, while any
unsatisfied clause makes the vector too long.  The target hypercube vector fixes
the universal assignment \(\vec{u}\), and local density holds exactly when some
existential assignment \(\vec{v}\) satisfies \(\psi(\vec{u},\vec{v})\).

For \(\ell_\infty\), we first prove hardness for CoeffLDLP.
The top block \(2\mathbf{I}\) together with target \(\mathbf{1}\) forces every
short integer coefficient vector to be Boolean, since each coordinate
\(2z_i-1\) is an odd integer of absolute value less than \(2\).
The remaining rows encode the same 1-in-3 constraints. 
Combining this CoeffLDLP hardness with \Cref{thm:ldlp-variant-reductions-intr} yields Item 2 of \Cref{thm:ldlp-is-pi2-complete-lp}.

\paragraph*{Equivalence of the two formulations.}
The proof of \Cref{thm:ldlp-variant-reductions} and \Cref{thm:among-ldls} proceeds by mutual polynomial-time reductions between the two LDLP variants.
The difficulty is that passing between coefficient vectors and coset vectors does not preserve the auxiliary map \(\mathbf{T}\). 
For the reduction from \(\CoeffLDLP\) to \(\CosetLDLP\), we first prove an a priori bound of polynomial bit length on the relevant coefficient vectors, and then encode a coefficient vector \(\vec{x}\) as
\[
        \begin{pmatrix}
        M(\mathbf{A}\vec{x}-\vec{s})\\
        \vec{x}
        \end{pmatrix},
\]
so that the lower block keeps the information needed to apply the original
map \(\mathbf{T}\), while the scaling \(M\) absorbs its norm contribution into
the additive slack \(\beta\).  In the reverse direction, we add coefficient
variables \(\vec{y}\) intended to equal the Boolean image of a coset vector and
use the residual
\[
        \begin{pmatrix}
        \vec{s}+\mathbf{A}\vec{z}\\
        C(\mathbf{T}(\vec{s}+\mathbf{A}\vec{z})-\vec{y})
        \end{pmatrix}.
\]
The second block penalizes any mismatch, so short solutions must satisfy
\(\vec{y}=\mathbf{T}(\vec{s}+\mathbf{A}\vec{z})\), after which the new
coefficient map outputs \(\vec{y}\) directly.  The two reductions preserve
\(k\), and hence establish the claimed equivalence.

\subsection{Related work}
Several hardness results for \(\SVP_p\) do not use the locally dense lattices discussed above, including Khot's results~\cite{journals/jcss/Kho06} in high \(\ell_p\) norms, the results for \(\SVP_\infty\) of van Emde Boas \cite{1571698600177703168/vanEmdeBoas} and Dinur \cite{DINUR200255}, and recent deterministic hardness results \cite{hecht2025deterministichardnessapproximationuniquesvp, hair2025svppnphardp,hittmeir2026finegraineddeterministichardnessshortest, hair2026finite}.

Variants of locally dense lattices also appear in several related settings where the formal gadget differs slightly from the one considered in this paper.
In particular, Khot's reduction~\cite{10.1145/1089023.1089027/Kho05}, the results of Haviv and Regev~\cite{journals/toc/HR12}, and later work on fine-grained hardness and parameterized complexity  \cite{bennett_et_al:LIPIcs.CCC.2020.36/BP20,conf/stoc/AggarwalS18,conf/innovations/BennettPT22,conf/stoc/BennettCGR23} use sparsification or related randomized filtering rather than a linear map onto a Boolean hypercube.

There is also a close coding-theoretic analogue: locally dense codes underlie randomized and deterministic hardness results for Minimum Distance Problem (MDP) \cite{1159759/DMS03,6268342/CW12,10.1109/CCC.2014.17/Mic14,6868217/AK14}, and recent work on sparse vector problems in codes, subspaces, and lattices uses related constructions, with sparsity as the relevant parameter \cite{bhattiprolu2025inapproximabilityfindingsparsevectors}.

The quantifier structure of \(\LDLP\) is also related to the Covering Radius Problem (CRP).
The CRP asks whether every point of the ambient space is close to a lattice or a code, whereas \(\LDLP\) asks whether every Boolean label has a short preimage in one shifted coset.
Guruswami, Micciancio, and Regev \cite{journals/cc/GuruswamiMR2005} studied the CRP for lattices and codes, and proved \(\Pi_2^p\)-hardness of approximation for the code version for some constant factor.
Haviv and Regev~\cite{journals/ChicagoTCS/HavivR12} later proved \(\Pi_2^p\)-hardness of approximation for the lattice version in large \(\ell_p\) norms.
Another related problem at the second level of the polynomial hierarchy is Linear Discrepancy, for which Manurangsi~\cite{MANURANGSI2021106164} proved \(\Pi_2^p\)-hardness of approximation via a reduction from a quantified SAT problem.
More recently, Bennett and Ly~\cite{bennett2026hardnessbinarycoveringradius} studied the Binary Covering Radius Problem, a promise variant that connects CRP and Linear Discrepancy, and proved new hardness results for large \(\ell_p\) norms and for \(\ell_\infty\).

\subsection{Open problems}
Our main result shows that \((p,\alpha)\)-\(\LDLP\) is \(\Pi_2^p\)-complete for every finite \(p \ge \log_2 3\) and every \(\alpha \in (2^{-1/p},1)\).
It remains open whether the same holds for every \(1 \le p < \log_2 3\) and every \(\alpha \in (2^{-1/p},1)\).
The obstacle in our proof is the step that forces sufficiently short vectors in Wan's shifted coset to be Boolean.
Regev and Rosen~\cite{conf/stoc/RegevR06} gave randomized reductions between lattice problems in different \(\ell_p\) norms using embeddings between normed spaces.
Since our hardness result includes the Euclidean norm, it is natural to ask whether their reductions can be combined with our construction to obtain hardness of \(\LDLP\) under randomized reductions for every \(p \ge 1\).

There is an analogous recognition problem for codes.
Locally dense codes have been used in hardness results for MDP~\cite{1159759/DMS03,6268342/CW12,10.1109/CCC.2014.17/Mic14,6868217/AK14}.
It would be interesting to determine whether recognizing locally dense codes is also \(\Pi_2^p\)-complete.

The quantifier structure of \(\LDLP\) is related to the CRP.
Haviv and Regev~\cite{journals/ChicagoTCS/HavivR12} proved \(\Pi_2^p\)-hardness of approximation for the lattice version in large \(\ell_p\) norms.
An interesting question is whether similar techniques can prove \(\Pi_2^p\)-hardness for the Euclidean lattice version.
\section{Preliminaries}
\label{sec:prelims}
\subsection{Notations}
For a positive integer $n$, let $[n]:=\{1,\ldots,n\}$. For a prime
$q$, let $\mathbb{F}_q:=\mathbb{Z}/q\mathbb{Z}$ be the finite field
with $q$ elements. We write $\mathbf{I}_r$ for the $r\times r$
identity matrix, and $\mathbf{0}_{a\times b}$ or
$\mathbf{O}_{a\times b}$ for the $a\times b$ zero matrix. When the
dimensions are clear, we omit the subscripts. For positive integers
$r\leq n$, let
\[
    \mathbf{P}_r
    :=
    \begin{pmatrix}
        \mathbf{I}_r & \mathbf{0}_{r\times(n-r)}
    \end{pmatrix}
    \in\mathbb{Z}^{r\times n}
\]
denote the projection onto the first $r$ coordinates, where the
ambient dimension $n$ will always be clear from context.

A lattice $\mathcal{L}$ is the set of integer linear combinations of $n$ linearly independent vectors $\vec{b}_1,\ldots,\vec{b}_n \in \mathbb{R}^m$, i.e.,
\begin{align*}
    \mathcal{L} = \mathcal{L}(\vec{b}_1,\ldots,\vec{b}_n) := \left\{\sum^n_{i=1} a_i\vec{b}_i\ :\ a_1,\ldots,a_n\in \mathbb{Z}\right\}.
\end{align*}
The set of vectors $\vec{b}_1,\ldots,\vec{b}_n$ is called a $basis$ of the lattice $\mathcal{L}(\vec{b}_1,\ldots,\vec{b}_n)$, and $n$ is called the $rank$ of the lattice. The basis can be represented by the matrix $\mathbf{B} := (\vec{b}_1,\ldots,\vec{b}_n)$, and the lattice generated by $\mathbf{B}$ is denoted as $\mathcal{L}(\mathbf{B})$.

For any $p \in [1, \infty)$ and a vector $\vec{x}\in \mathbb{R}^n$, the length of $\vec{x}$ in the $\ell_p$ norm is defined as follows:
\begin{align*}
    \lVert\vec{x}\rVert_p := (|x_1|^p + |x_2|^p + \cdots + |x_n|^p)^{1/p}.
\end{align*}
For $p=\infty$, we denote $\lVert\vec{x}\rVert_\infty = \max_{i\in[n]} |x_i|$.
We denote the length of a shortest non-zero vector in a lattice $\mathcal{L}$ with basis $\mathbf{B}\in\mathbb{R}^{m\times n}$ in the $\ell_p$ norm as:
\begin{align*}
    \lambda^{(p)}_1(\mathcal{L}(\mathbf{B})) := \underset{\vec{x}\in\mathcal{L}(\mathbf{B})\backslash\{0\}}{\min} \lVert\vec{x}\rVert_p.
\end{align*}
For a lattice $\mathcal{L}$ with basis $\mathbf{B} \in \mathbb{R}^{m\times n}$, and a vector $\vec{x}\in\mathbb{R}^m$, let $\dist_p(\mathcal{L}(\mathbf{B}), \vec{x}) := \min_{\vec{y}\in \mathcal{L}(\mathbf{B})}\lVert\vec{x} - \vec{y}\rVert_p$ be the minimum distance in the $\ell_p$ norm between $\vec{x}$ and a lattice vector of $\mathcal{L}$. For a real number $r\in\mathbb{R}$ and a vector $\vec{s}\in\mathbb{R}^n$, let $\mathcal{B}^n_p(\vec{s},r) := \{\vec{x}\in\mathbb{R}^n : \lVert\vec{x}-\vec{s}\rVert_p\leq r\}$ be the $n$-dimensional closed $\ell_p$ ball of radius $r$ centered at $\vec{s}$. Similarly, let $\mathcal{B}^n_p(r) := \mathcal{B}^n_p(\vec{0}, r) = \{\vec{x}\in\mathbb{R}^n : \lVert\vec{x}\rVert_p\leq r\}$ be the $n$-dimensional closed $\ell_p$ ball of radius $r$ centered at the origin. When the ambient dimension is clear, we omit the superscript $n$.
\subsection{Locally Dense Lattices}
A \emph{locally dense lattice} for relative distance $\alpha > 0$ consists of a lattice $\mathcal{L}$, an integer vector $\vec{s}$ and a linear transformation matrix $\mathbf{T}$ that maps to a $k$-dimensional hypercube, where there exist at least exponentially many lattice vectors (in the lattice rank) in the ball $\mathcal{B}^m_p(\vec{s}, \alpha\cdot\lambda^{(p)}_1(\mathcal{L}))$. In other words, the coset $\vec{s} + \mathcal{L}$, which the lattice $\mathcal{L}$ is shifted by the vector $\vec{s}$ contains many relatively short vectors compared to the length of the shortest non-zero lattice vector in $\mathcal{L}$. The linear transformation matrix $\mathbf{T}$ maps short vectors in the coset to a $k$-dimensional hypercube.
\begin{definition}[$(p, \alpha, k)$-coset-Locally Dense Lattice, \cite{journals/toc/Micciancio12,bennett_et_al:LIPIcs.APPROX/RANDOM.2023.37}]
    \label{def:ldl}
    For $p \in [1, \infty]$, a real number $\alpha > 0$ and a positive integer $k>0$, a $(p, \alpha, k)$-\emph{coset-locally dense lattice} consists of a lattice basis $\mathbf{A}\in\mathbb{Z}^{m\times n}$ of rank $n$ and dimension $m$, an integer vector $\vec{s} \in \mathbb{Z}^m$, a distance threshold $\ell>0$, and a linear transformation matrix $\mathbf{T}\in\mathbb{Z}^{k\times m}$, where:
    \begin{enumerate}
        \item \label{item:min-dist} $\lambda^{(p)}_1(\mathcal{L}(\mathbf{A})) \geq \ell$ and
        \item \label{item:conv-hypercube} $\{0,1\}^k \subseteq \{\mathbf{T}\vec{v} : \vec{v}\in (\vec{s} + \mathcal{L}(\mathbf{A}))\cap \mathcal{B}^m_p(\alpha\cdot\ell)\}$.
    \end{enumerate}
\end{definition}

The following is a natural variant of the $(p,\alpha,k)$-coset-locally dense lattice problem, in which the preimage under a linear map to the hypercube is specified by coefficient vectors with respect to the lattice basis, rather than by vectors in the coset itself.
We note that a locally dense lattice that maps coefficient vectors to the hypercube was introduced by Micciancio \cite{doi:10.1137/S0097539700373039/Mic01} to show the $\NP$-hardness of $\gamma$-$\GapSVP_p$ under randomized reductions.
\begin{definition}[$(p, \alpha, k)$-coefficient-Locally Dense Lattice, \cite{doi:10.1137/S0097539700373039/Mic01}]
    \label{def:ldl-mic}
    For $p \in [1, \infty]$, a real number $\alpha > 0$ and a positive integer $k>0$, a $(p, \alpha, k)$-\emph{coefficient-locally dense lattice} consists of a lattice basis $\mathbf{A}\in\mathbb{Z}^{m\times n}$ of rank $n$ and dimension $m$, an integer vector $\vec{s} \in \mathbb{Z}^m$, a distance threshold $\ell>0$, and a linear transformation matrix $\mathbf{T}\in\mathbb{Z}^{k\times n}$, where:
    \begin{enumerate}
        \item \label{item:min-dist-mic} $\lambda^{(p)}_1(\mathcal{L}(\mathbf{A})) \geq \ell$ and
        \item \label{item:conv-hypercube-mic} $\{0,1\}^k \subseteq \{\mathbf{T}\vec{v} : \vec{v}\in \mathbb{Z}^n, \|\mathbf{A}\vec{v} - \vec{s}\|_p\leq \alpha\cdot\ell\}$.
    \end{enumerate}
\end{definition}

The following theorem gives a deterministic polynomial-time algorithm, due to Wan \cite{wan2026}, that outputs a (coset-)locally dense lattice. In addition to the locally dense lattice property, this construction also has the property that every short vector in the coset is a binary vector. Since this property is not stated explicitly in the paper by Wan, we provide a proof below for completeness (see \Cref{appendix:ldl}).
\begin{theorem}[\cite{wan2026}]
\label{lem:binary-rs-gadget-lp}
Fix constants $p \in [1,\infty)$ and $\alpha \in (2^{-1/p},1)$.
There exists a deterministic polynomial-time algorithm that, given a positive integer $r$,
outputs a lattice basis $\mathbf{A} \in \mathbb{Z}^{N\times d}$, an integer vector
$\vec{s} \in \mathbb{Z}^N$, a positive integer $\ell$, and the projection matrix
$\mathbf{P}_r \in \mathbb{Z}^{r\times N}$ onto the first $r$ coordinates such that:
\begin{enumerate}
    \item \label{item:ldl-1} $\lambda_1^{(p)}(\mathcal{L}(\mathbf{A})) \ge \ell$,
    \item \label{item:ldl-2} $\{0,1\}^r \subseteq \{\mathbf{P}_r\vec{x} : \vec{x} \in (\vec{s} + \mathcal{L}(\mathbf{A})) \cap \mathcal{B}_p^N(\alpha\ell)\}$.
\end{enumerate}
Moreover,
\label{item:ldl-3}
if $p \ge \log_2 3$, then every vector in $(\vec{s} + \mathcal{L}(\mathbf{A})) \cap \mathcal{B}_p^N(\alpha\ell)$ belongs to $\{0,1\}^N$.

In all cases, the total bit length of $(\mathbf{A},\vec{s},\ell,\mathbf{P}_r)$ is bounded by $r^{O(1)}$.
\end{theorem}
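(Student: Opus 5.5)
The plan is to take Wan's construction \cite{wan2026} as a black box for the first two items and the bit-length bound, and to prove the ``Moreover'' binary property directly from the structure of that lattice. That structure is Construction~A applied to a Reed--Solomon parity-check code.

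\textbf{The construction and Items~\ref{item:ldl-1} and~\ref{item:ldl-2}.} Fix a prime $q=\mathrm{poly}(r)$ and a parameter $h$, and index coordinates by elements of $\mathbb{F}_q$ (padded by $r$ leading coordinates so that $\mathbf{P}_r$ reads off the designated positions). The lattice is
\[
\mathcal{L}=\Bigl\{\vec{x}\in\mathbb{Z}^N:\ \textstyle\sum_{a}x_a a^j\equiv 0 \pmod q\ \text{for } 0\le j<h\Bigr\}.
\]
A basis $\mathbf{A}$ is computed by Hermite normal form in polynomial time, and $\vec{s}$ is Wan's explicit binary vector of weight $h$.
\begin{itemize}
\item \emph{Minimum distance.} For nonzero $\vec{x}\in\mathcal{L}$, either $\vec{x}\bmod q$ is a nonzero Reed--Solomon dual codeword, which has at least $h+1$ nonzero coordinates, or $\vec{x}\in q\mathbb{Z}^N\setminus\{0\}$, which has norm at least $q$. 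Hence $\lambda_1^{(p)}\ge (h+1)^{1/p}$.
\item \emph{The threshold $\ell$.} To get an integer $\ell$, I would scale everything by a suitable integer, or follow Wan and choose $h$ so that the threshold is integral. The condition $\alpha>2^{-1/p}$ is exactly what lets weight-$h$ binary vectors, whose norm is $h^{1/p}$, fit inside the radius $\alpha\ell$.
\item \emph{Covering the hypercube.} Item~\ref{item:ldl-2} is Wan's derandomized statement that every $\vec{u}\in\{0,1\}^r$ extends to a weight-$h$ binary vector whose syndrome equals that of $\vec{s}$. I would cite it rather than reprove it.
\end{itemize}

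\textbf{The binary property.} This is the part not stated in \cite{wan2026}, and I expect it to be the main obstacle. Let $\vec{v}\in\vec{s}+\mathcal{L}$ with $\|\vec{v}\|_p\le\alpha\ell$. Then $\|\vec{v}\|_p^p\le \alpha^p\ell^p<\ell^p$, and I would arrange the parameters so that $\ell^p\le h+1$ up to rounding.

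I would view $\vec{v}$ as a signed multiset on $\mathbb{F}_q$ whose power sums $\sum_a v_a a^j$, for $j<h$, agree modulo $q$ with those of the weight-$h$ set $S=\operatorname{supp}(\vec{s})$. The cost of an entry is $|v_a|^p$. An entry equal to $\pm1$ costs $1$, while an entry with $|v_a|\ge 2$ costs $2^p\ge 3$. The condition $p\ge\log_2 3$ is used exactly here.

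The key step is the following claim. Since $\vec{v}-\vec{s}\in\mathcal{L}$, if $\vec{v}\ne\vec{s}'$ for every short binary vector $\vec{s}'$ in the coset, then the difference has at least $h+1$ nonzero coordinates modulo $q$. To turn this into a cost bound, I would compare $\vec{v}$ with any fixed short binary coset vector $\vec{w}$. Coordinates where $\vec{v}$ and $\vec{w}$ differ satisfy one of the following.
\begin{itemize}
\item They lie outside $\operatorname{supp}(\vec{w})$, and each costs at least $1$ in $\vec{v}$.
\item They lie inside $\operatorname{supp}(\vec{w})$ with $v_a\notin\{0,1\}$, so $v_a=-1$ or $|v_a|\ge2$.
\item They lie inside $\operatorname{supp}(\vec{w})$ with $v_a=0$, costing nothing.
\end{itemize}

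The delicate part is charging the zeros. Using that the code distance bounds the number of differing coordinates from below by $h+1$, I would show that each ``missing'' point of $\operatorname{supp}(\vec{w})$ must be compensated by a new support point. Each entry of size at least $2$ is then worth at least three unit points; this is where $2^p\ge3$ enters. The resulting bound is $\|\vec{v}\|_p^p\ge h+1>\alpha^p\ell^p$ whenever $\vec{v}$ has an entry outside $\{0,1\}$.

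If this direct charging fails, I would try a second route. Since $q>h$, Newton's identities show that the first $h$ power sums determine a multiset of size at most $h$. I would use this uniqueness, together with the $\ell_1$ mass bound $\sum|v_a|\le\|\vec{v}\|_p^p$ (valid for integer entries), to force $\vec{v}$ to be the indicator of a genuine $h$-set.
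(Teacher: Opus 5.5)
Your target, $\|\vec v\|_p^p<h+1$, is the right one, and you correctly locate where $p\ge\log_2 3$ enters. But the argument meant to deliver the binary property has a genuine gap: the code distance cannot force the ``compensation'' your charging needs. In Wan's construction there are $k$ parity checks and the weight is $h=\lfloor(1+\varepsilon)k\rfloor>k$. So the distance only guarantees that $\vec v-\vec w$ has at least $k+1$ nonzero entries. That is consistent with, e.g., $\vec v=\vec w-\mathbf 1_D+2\vec{e}_b$, where $D\subseteq\operatorname{supp}\vec w$, $|D|=k+1$ and $b\notin\operatorname{supp}\vec w$. This is a non-binary vector with $\|\vec v\|_p^p=h-k-1+2^p<h+1$ for large $k$.

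What excludes such vectors is the all-ones first row of the Vandermonde matrix, and that row alone suffices. This is the paper's Lemma~\ref{lem:short-rs-coset-is-binary}, which never uses the distance. The row gives $\sum_a v_a\equiv h\pmod q$. Since $|\sum_a v_a|\le\sum_a|v_a|\le\|\vec v\|_p^p<h+1$ (the $\ell_1$ bound from your fallback) and $q>2h$, in fact $\sum_a v_a=h$. Hence
\[
\|\vec v\|_p^p=h+\sum_a\bigl(|v_a|^p-v_a\bigr),
\]
and $|b|^p-b\ge1$ for every integer $b\notin\{0,1\}$: it is at least $2$ for $b\le-1$, and at least $2^p-2\ge1$ for $b\ge2$. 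Your Newton-identities fallback fails as stated. With only $k<h$ power sums, an $h$-set is not determined, and $\vec v$ is a signed multiset anyway. In your own description, with $h$ checks and weight $h$, Newton's identities leave only the constant term free. Each coset then contains at most $q$ weight-$h$ binary vectors, far fewer than the $2^r$ that Item~\ref{item:ldl-2} requires.

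The calibration of $\ell$ also needs repair. The minimum distance to use is Wan's bound $\lambda_1^{(p)}(\mathcal L_{q,k})\ge(2k)^{1/p}$, not $(h+1)^{1/p}$. The condition you need is $h\le(\alpha\ell)^p<h+1$ with integer $\ell\le\lambda_1^{(p)}$. Your condition $\ell^p\le h+1$ is incompatible with $h\le(\alpha\ell)^p$ for fixed $\alpha<1$. For $p>1$ the admissible window for $\ell$ has width about $h^{1/p-1}/(p\alpha)\to0$, so integrality is a real obstacle. Your proposed fix of scaling by an integer $c$ turns the short coset vectors into $\{0,c\}$-vectors, destroying the very conclusion you want.

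The paper instead stacks $m=(2k)^{\lceil p\rceil}$ copies of the coset. This keeps vectors binary and multiplies $\|\cdot\|_p^p$ by $m$, so $\ell:=\lceil(mh)^{1/p}/\alpha\rceil$ satisfies $\alpha\ell<(mh)^{1/p}+\alpha<(m(h+1))^{1/p}$ by the mean value theorem. The lemma is then applied blockwise. The paper also invokes Wan's covering theorem at $p_0=\log_2 3$ and uses the lemma to see that the covering vectors are binary of weight exactly $h$. They therefore fit the rounded radius $\alpha\ell$ for every $p$.
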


\subsection{Locally Dense Lattice Problem}
We define the \emph{Locally Dense Lattice Problem} (LDLP), which is the decision problem of determining whether a given input specifies a (coset-)locally dense lattice. The input consists of a lattice basis, a shift vector, a positive integer, and a linear transformation matrix.
\begin{definition}[$(p, \alpha)$-LDLP]
\label{def:ldlp}
For $p \in [1, \infty]$, $\alpha > 0$, the decision problem $(p,\alpha)$-\emph{Locally Dense Lattice Problem} \emph{($(p,\alpha)$-$\LDLP$)} is defined as follows. Given a tuple $(\mathbf{A}, \vec{s}, \ell, \mathbf{T})$, where $\mathbf{A} \in \mathbb{Z}^{m \times n}$ is a lattice basis, $\vec{s} \in \mathbb{Z}^m$ is an integer vector, $\ell > 0$ is a distance threshold, and $\mathbf{T} \in \mathbb{Z}^{k \times m}$ is a linear transformation matrix, determine whether it satisfies the following conditions:
\begin{enumerate}
    \item \label{item:ldlp-min-dist} $\lambda_1^{(p)}(\mathcal{L}(\mathbf{A})) \geq \ell$,
    \item \label{item:ldlp-conv-hypercube} $\{0,1\}^k \subseteq \{\mathbf{T} \vec{v} : \vec{v} \in (\vec{s} + \mathcal{L}(\mathbf{A})) \cap \mathcal{B}_p^m(\alpha \cdot \ell)\}$.
\end{enumerate}
Here and below, \(k\) denotes the number of rows of \(\mathbf{T}\).
\end{definition}

Note that if $(p, \alpha)$-$\LDLP$ instance $(\mathbf{A}, \vec{s}, \ell, \mathbf{T})$ satisfies \Cref{item:ldlp-min-dist} and \Cref{item:ldlp-conv-hypercube} of \Cref{def:ldlp}, and $\mathbf{T}$ has $k$ rows, then $(\mathbf{A}, \vec{s}, \ell, \mathbf{T})$ is $(p, \alpha, k)$-coset-locally dense lattice.

As a fundamental property of $(p, \alpha)$-$\LDLP$, $(p, \alpha)$-$\LDLP$ is in $\Pi^p_2$.
\begin{proposition}
\label{prop:pi2p}
    For any fixed $p\in[1,\infty]$, $\alpha > 0$, $(p, \alpha)$-$\LDLP$ is in $\Pi_2^p$.
\end{proposition}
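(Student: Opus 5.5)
The plan is to write the complement of $(p,\alpha)$-$\LDLP$ as a $\Sigma_2^p$ predicate, i.e.\ as an $\exists\forall$ statement with polynomially bounded witnesses and a polynomial-time checkable matrix. An instance $(\mathbf{A},\vec{s},\ell,\mathbf{T})$ fails exactly when at least one of two things happens. The first is that some nonzero $\vec{z}\in\mathbb{Z}^n$ satisfies $\|\mathbf{A}\vec{z}\|_p<\ell$; this is an $\NP$ condition. The second is that there is some $\vec{u}\in\{0,1\}^k$ such that every $\vec{x}\in\mathbb{Z}^n$ either has $\|\vec{s}+\mathbf{A}\vec{x}\|_p>\alpha\ell$ or has $\mathbf{T}(\vec{s}+\mathbf{A}\vec{x})\neq\vec{u}$. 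Equivalently, we show directly that $\LDLP$ is the set of instances satisfying
\[
\forall \vec{z},\vec{u}\ \exists \vec{x}:\ \bigl(\vec{z}=\vec{0}\ \vee\ \|\mathbf{A}\vec{z}\|_p\ge \ell\bigr)\wedge \|\vec{s}+\mathbf{A}\vec{x}\|_p\le\alpha\ell \wedge \mathbf{T}(\vec{s}+\mathbf{A}\vec{x})=\vec{u},
\]
where $\vec{z}$ and $\vec{x}$ range over integer vectors of bit length polynomial in the input size. The inner predicate is checkable in polynomial time. For finite $p$ we compare $\|\cdot\|_p^p$ against $\ell^p$ and $(\alpha\ell)^p$. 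Here $p$ and $\alpha$ are fixed constants, and I would take them rational, or else handle the comparison by computing to enough precision, since all quantities are rational/integer of polynomial size. For $p=\infty$ the check is immediate.

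The main step is justifying that restricting the quantifiers to polynomially bounded $\vec{z},\vec{x}$ loses nothing. For $\vec{z}$ this is harmless in the right direction: universally quantifying over fewer $\vec{z}$ only weakens the condition, so I must show that if a short nonzero lattice vector exists, then one exists with a small coefficient vector. Since $\mathbf{A}$ has full column rank, if $\|\mathbf{A}\vec{z}\|_p<\ell$ then $\|\mathbf{A}\vec{z}\|_2\le n^{O(1)}\ell$ (norm equivalence in dimension $m$). Write $\vec{z}=(\mathbf{A}^\top\mathbf{A})^{-1}\mathbf{A}^\top(\mathbf{A}\vec{z})$. By Cramer's rule and Hadamard's bound, the entries of $(\mathbf{A}^\top\mathbf{A})^{-1}\mathbf{A}^\top$ have magnitude at most $2^{\mathrm{poly}}$. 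Hence $\|\vec{z}\|_\infty\le 2^{\mathrm{poly}}\cdot \ell$, whose bit length is polynomial in the input. The same argument handles $\vec{x}$: any $\vec{x}$ with $\|\vec{s}+\mathbf{A}\vec{x}\|_p\le\alpha\ell$ has $\|\mathbf{A}\vec{x}\|_2\le m^{O(1)}(\alpha\ell+\|\vec{s}\|_2)$. By the same pseudo-inverse bound, $\vec{x}$ has polynomial bit length, so existentially quantifying only over bounded $\vec{x}$ is equivalent to the unbounded statement. The constant $\alpha$ contributes only $O(1)$ bits.

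Putting these together, the formula above is a $\forall\exists$ statement with polynomial-length witnesses and a polynomial-time predicate. Hence $(p,\alpha)$-$\LDLP\in\Pi_2^p$. I expect the only real obstacle to be the coefficient bit-length bound; the rest is routine bookkeeping, including the exact comparison of $p$-th powers for constant $p$.
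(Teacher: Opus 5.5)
Your argument is correct, but it differs from the paper's at the one step that needs care: how the witnesses are kept polynomially bounded.

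The paper quantifies over the lattice and coset vectors themselves. A nonzero $\vec{w}\in\mathcal{L}(\mathbf{A})$ with $\|\vec{w}\|_p<\ell$, or a $\vec{v}\in\vec{s}+\mathcal{L}(\mathbf{A})$ with $\|\vec{v}\|_p\le\alpha\ell$, has every coordinate bounded by $\ell$ (resp.\ $\alpha\ell$) in absolute value, so polynomial size comes for free. The price is that the verifier must decide whether $\vec{w}\in\mathcal{L}(\mathbf{A})$ and whether $\vec{v}-\vec{s}\in\mathcal{L}(\mathbf{A})$. The paper delegates this to Hermite/Smith normal form algorithms, then concludes by closure of $\Pi_2^p$ under intersection with $\coNP$.

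You quantify over coefficient vectors instead, so your verifier only does integer arithmetic. In exchange, you must supply a Cramer/Hadamard bound on the coefficients of short vectors. This is essentially the paper's Lemma~\ref{fact:upper-vector}, which derives it a bit more cleanly from an invertible $n\times n$ row-submatrix of $\mathbf{A}$ rather than from $(\mathbf{A}^\top\mathbf{A})^{-1}\mathbf{A}^\top$. Folding both conditions into a single $\forall\exists$ formula instead of intersecting is only cosmetic.

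One caveat on your closing remark. Taking $p$ rational does not make comparing $\|\cdot\|_p^p$ with $\ell^p$ or $(\alpha\ell)^p$ routine when $p$ is not an integer. For $p=3/2$ it is a sum-of-square-roots comparison, which is not known to be polynomial-time, and precision arguments do not obviously help. However, the paper's proof makes the same tacit assumption that $\ell_p$ norms can be compared in polynomial time. So this is a shared convention, not a defect of your route.
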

\begin{proof}
Fix $p \in [1,\infty]$, $\alpha > 0$. Let $(\mathbf{A}, \vec{s}, \ell, \mathbf{T})$ be an input to $(p, \alpha)$-$\LDLP$.
By \Cref{def:ldlp}, $(\mathbf{A}, \vec{s}, \ell, \mathbf{T})$ belongs to $(p, \alpha)$-$\LDLP$ if and only if all items of the definition hold.

First, consider \Cref{item:ldlp-min-dist}. Its complement is the statement that there exists a nonzero vector $\vec{w}\in \mathcal{L}(\mathbf{A})$ such that $\|\vec{w}\|_p < \ell$.
Such a vector $\vec{w}$ is a polynomial-size witness, because $\|\vec{w}\|_p<\ell$ implies $|w_i|<\ell$ for every coordinate $i$.
We can verify in polynomial time that $\vec{w}\neq \vec{0}$, that $\vec{w}\in \mathcal{L}(\mathbf{A})$ by standard Hermite-normal-form or Smith-normal-form algorithms, and that $\|\vec{w}\|_p < \ell$.
Hence the complement of \Cref{item:ldlp-min-dist} is in $\NP$, and therefore Item~1 is in $\coNP \subseteq \Pi_2^p$.

Next, \Cref{item:ldlp-conv-hypercube} holds if and only if
\begin{align*}
    \forall \vec{x}\in\{0,1\}^{k},\ \exists \vec{v}\in\mathbb{Z}^{m}
    \text{ such that }
    \vec{v}\in (\vec{s} + \mathcal{L}(\mathbf{A}))\cap \mathcal{B}_p^m(\alpha\ell)
    \text{ and } \mathbf{T}\vec{v}=\vec{x}.
\end{align*}
For any $\vec{x}\in\{0,1\}^k$ and $\vec{v}\in\mathbb{Z}^m$, we can verify in polynomial time whether $\vec{v}-\vec{s}\in \mathcal{L}(\mathbf{A})$, whether $\vec{v}\in \mathcal{B}_p^m(\alpha\ell)$, and whether $\mathbf{T}\vec{v}=\vec{x}$.
Thus, Item~2 is in $\Pi_2^p$.

Since $\Pi_2^p$ contains $\coNP$, and is closed under conjunction, it follows that $(p, \alpha)$-$\LDLP$ is in $\Pi_2^p$.
\end{proof}

\section{Equivalence of Locally Dense Lattices}
\label{sec:equivalence}
In this section, by considering the computational complexity of LDLP, we prove \Cref{thm:among-ldls}.

For convenience, we define a natural variant of LDLP, called CosetLDLP, as a promise problem. In this problem, \Cref{item:min-dist} of a locally dense lattice, namely that $\lambda_1^{(p)}(\mathcal{L}(\mathbf{A}))$ satisfies the required bound, is assumed as a promise, and the task is to decide whether \Cref{item:conv-hypercube} of a coset-locally dense lattice is satisfied.
\begin{definition}[$(p, \alpha)$-CosetLDLP]
\label{def:promise-coset-ldlp}
For $p \in [1, \infty]$ and $\alpha > 0$, the promise problem \emph{$(p, \alpha)$-$\CosetLDLP$} is defined as follows. Instances are tuples $(\mathbf{A}, \vec{s}, \ell, \mathbf{T})$, where $\mathbf{A} \in \mathbb{Z}^{m \times n}$ is a lattice basis, $\vec{s} \in \mathbb{Z}^m$ is an integer vector, $\ell > 0$ is a distance threshold satisfying $\lambda^{(p)}_1(\mathcal{L}(\mathbf{A})) \geq \ell$, and $\mathbf{T} \in \mathbb{Z}^{k \times m}$ is a linear transformation matrix. Then:
\begin{enumerate}
    \item $(\mathbf{A}, \vec{s}, \ell, \mathbf{T})$ is a YES instance if $\{0,1\}^k \subseteq \{\mathbf{T} \vec{v} : \vec{v} \in (\vec{s} + \mathcal{L}(\mathbf{A})) \cap \mathcal{B}_p^m(\alpha \cdot \ell)\}$,
    \item $(\mathbf{A}, \vec{s}, \ell, \mathbf{T})$ is a NO instance if $\{0,1\}^k \nsubseteq \{\mathbf{T} \vec{v} : \vec{v} \in (\vec{s} + \mathcal{L}(\mathbf{A})) \cap \mathcal{B}_p^m(\alpha \cdot \ell)\}$.
\end{enumerate}
Here and below, \(k\) denotes the number of rows of \(\mathbf{T}\).
\end{definition}
Note that there is a trivial reduction from $(p, \alpha)$-$\CosetLDLP$ to $(p, \alpha)$-$\LDLP$.

We define the analogous problem for coefficient-locally dense lattices as well. In this problem, the task is to decide whether condition \Cref{item:conv-hypercube-mic} of a coefficient-locally dense lattice is satisfied.
\begin{definition}[$(p, \alpha, \beta)$-CoeffLDLP]
\label{def:var-ldlp}
For $p \in [1, \infty]$, $\alpha > 0$ and $\beta \geq 0$, the promise problem $(p, \alpha, \beta)$-\emph{Coeff Locally Dense Lattice Problem} \emph{($(p, \alpha, \beta)$-$\CoeffLDLP$)} is defined as follows. Instances are tuples $(\mathbf{A}, \vec{s}, \ell, \mathbf{T})$, where $\mathbf{A} \in \mathbb{Z}^{m \times n}$ is a lattice basis, $\vec{s} \in \mathbb{Z}^m$ is an integer vector, $\ell > 0$ is a distance threshold satisfying $\lambda^{(p)}_1(\mathcal{L}(\mathbf{A})) \geq \ell$, and $\mathbf{T} \in \mathbb{Z}^{k \times n}$ is a linear transformation matrix. Then:
\begin{enumerate}
    \item $(\mathbf{A}, \vec{s}, \ell, \mathbf{T})$ is a YES instance if $\{0,1\}^{k} \subseteq  \{\mathbf{T}\vec{v} : \vec{v}\in \mathbb{Z}^n, \lVert\mathbf{A}\vec{v} - \vec{s}\rVert_p\leq \alpha\cdot \ell\}$,
    \item $(\mathbf{A}, \vec{s}, \ell, \mathbf{T})$ is a NO instance if $\{0,1\}^{k} \nsubseteq  \{\mathbf{T}\vec{v} : \vec{v}\in\mathbb{Z}^n, \lVert\mathbf{A}\vec{v} - \vec{s}\rVert_p\leq (\alpha + \beta)\cdot \ell\}$.
\end{enumerate}
Here and below, \(k\) denotes the number of rows of \(\mathbf{T}\).
\end{definition}

We first prove that the polynomial-time reduction from $(p,\alpha,\beta)$-$\CoeffLDLP$ to $(p, \alpha + \beta)$-$\CosetLDLP$.

The following fact provides an upper bound on the length of coefficient vectors corresponding to lattice vectors that lie within distance $\alpha \ell$ from the center $\vec{s}$.
\begin{lemma}
\label{fact:upper-vector}
Let $\mathbf{A}\in\mathbb{Z}^{m\times n}$ have column rank $n$.
Fix $\vec{s}\in\mathbb{R}^m$ and $\alpha,\ell\geq 0$.
Let $p\in(0,\infty]$. For any $\vec{x}\in\mathbb{Z}^n$ satisfying $\lVert \mathbf{A}\vec{x}-\vec{s}\rVert_p \leq \alpha\ell$,
we have
\begin{align*}
\lVert \vec{x}\rVert_\infty \leq (\sqrt{n}\,H)^n,
\end{align*}
where $H := \max\bigl(\lVert \mathbf{A}\rVert_\infty,\ \lVert \vec{s}\rVert_\infty+\lceil \alpha\ell\rceil\bigr)$,
and $\lVert \mathbf{A}\rVert_\infty := \max_{i\in[m],\,j\in[n]} |a_{ij}|$. Moreover, for $p\in(0,\infty)$ we also have
\begin{align*}
\lVert \vec{x}\rVert_p \leq n^{1/p}(\sqrt{n}\,H)^n.
\end{align*}
\end{lemma}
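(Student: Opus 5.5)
The plan is to pin $\vec{x}$ down by a square nonsingular integer subsystem of $\mathbf{A}\vec{x} = \mathbf{A}\vec{x}$, and then bound its coordinates with Cramer's rule and Hadamard's inequality.

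First, put $\vec{y} := \mathbf{A}\vec{x}$; this is an integer vector because $\mathbf{A}$ and $\vec{x}$ are integral. For every $p\in(0,\infty]$, each coordinate of a vector is bounded in absolute value by its $\ell_p$ quasi-norm: $|v_i| = (|v_i|^p)^{1/p} \le (\sum_j |v_j|^p)^{1/p}$. Hence $|y_i - s_i| \le \lVert \mathbf{A}\vec{x}-\vec{s}\rVert_p \le \alpha\ell$ for every $i$. This gives
\[
|y_i| \le \lVert\vec{s}\rVert_\infty + \alpha\ell \le \lVert\vec{s}\rVert_\infty + \lceil\alpha\ell\rceil \le H .
\]
Every entry of $\mathbf{A}$ is also bounded by $H$ by definition.

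Next, since $\mathbf{A}$ has column rank $n$, it has $n$ linearly independent rows. Choose such a set $S\subseteq[m]$ with $|S|=n$, and let $\mathbf{A}_S\in\mathbb{Z}^{n\times n}$ and $\vec{y}_S$ be the corresponding restrictions. Then $\mathbf{A}_S\vec{x} = \vec{y}_S$, and $\mathbf{A}_S$ is invertible, so Cramer's rule gives $x_j = \det(\mathbf{A}_S^{(j)})/\det(\mathbf{A}_S)$. Here $\mathbf{A}_S^{(j)}$ is $\mathbf{A}_S$ with its $j$-th column replaced by $\vec{y}_S$. Because $\det(\mathbf{A}_S)$ is a nonzero integer, $|\det(\mathbf{A}_S)|\ge 1$, and therefore $|x_j| \le |\det(\mathbf{A}_S^{(j)})|$.

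The matrix $\mathbf{A}_S^{(j)}$ is $n\times n$ and all its entries are bounded by $H$ in absolute value. Each of its columns therefore has Euclidean norm at most $\sqrt{n}H$. Hadamard's inequality then gives $|\det(\mathbf{A}_S^{(j)})| \le (\sqrt{n}H)^n$, which proves $\lVert\vec{x}\rVert_\infty \le (\sqrt{n}H)^n$.

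For finite $p$, the second bound follows directly from the first:
\[
\lVert\vec{x}\rVert_p = \Bigl(\sum_{j=1}^n |x_j|^p\Bigr)^{1/p} \le \bigl(n\,(\sqrt{n}H)^{np}\bigr)^{1/p} = n^{1/p}(\sqrt{n}H)^n .
\]

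No step here is a real obstacle. The only points needing care are these:
\begin{itemize}
\item the coordinatewise bound must hold for $p<1$, where $\lVert\cdot\rVert_p$ is only a quasi-norm (it does, by the argument above);
\item the constant $H$ must dominate both the entries of $\mathbf{A}$ and the entries of $\vec{y}_S$;
\item the unit lower bound on $|\det(\mathbf{A}_S)|$ comes from integrality of $\mathbf{A}$.
\end{itemize}
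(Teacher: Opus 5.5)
Your proposal is correct and follows the paper's proof essentially step for step. Both bound $\mathbf{A}\vec{x}$ coordinatewise by $H$, restrict to an invertible $n\times n$ row-submatrix, apply Cramer's rule with $|\det|\ge 1$ from integrality, use Hadamard's inequality to get $(\sqrt{n}\,H)^n$, and then derive the $\ell_p$ bound from $\lVert\vec{x}\rVert_p\le n^{1/p}\lVert\vec{x}\rVert_\infty$.
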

\begin{proof}
Since $\mathbf{A} \in \mathbb{Z}^{m \times n}$ has column rank $n$, there exist
$n$ rows of $\mathbf{A}$ that form an invertible $n \times n$ submatrix.
Let $\mathbf{B} \in \mathbb{Z}^{n \times n}$ be such a submatrix.

Set $\vec{y} := \mathbf{A}\vec{x} \in \mathbb{Z}^m$.
Since $\|\mathbf{A}\vec{x} - \vec{s}\|_p \leq \alpha \ell$, every coordinate of
$\mathbf{A}\vec{x} - \vec{s}$ has absolute value at most $\alpha \ell$.
Hence each coordinate of $\vec{y}$ has absolute value at most
$\|\vec{s}\|_\infty + \alpha \ell \leq \|\vec{s}\|_\infty + \lceil \alpha \ell \rceil \leq H$.
Also, every entry of $\mathbf{A}$ has absolute value at most $\|\mathbf{A}\|_\infty \leq H$.
Therefore every entry of $\mathbf{B}$ is also bounded in absolute value by $H$.
Let $\vec{y}^{\,\prime} \in \mathbb{Z}^n$ be the vector consisting of the coordinates of $\vec{y}$
corresponding to the rows used to define $\mathbf{B}$.
Then $\mathbf{B}\vec{x} = \vec{y}^{\,\prime}$.
For each $j \in \{1,\dots,n\}$, let $\mathbf{B}_j$ be the matrix obtained from $\mathbf{B}$
by replacing its $j$-th column with $\vec{y}^{\,\prime}$.
By Cramer's rule,
\[
x_j = \frac{\det(\mathbf{B}_j)}{\det(\mathbf{B})}.
\]
Since $\mathbf{B}$ is an invertible integer matrix, $\det(\mathbf{B})$ is a nonzero integer.
Hence $|\det(\mathbf{B})| \geq 1$.

Next, every entry of $\mathbf{B}_j$ has absolute value at most $H$, so every column of
$\mathbf{B}_j$ has Euclidean norm at most $\sqrt{n}\,H$.
Therefore, by Hadamard's inequality, $|\det(\mathbf{B}_j)| \leq (\sqrt{n}\,H)^n$.
It follows that
\[
|x_j|
= \left|\frac{\det(\mathbf{B}_j)}{\det(\mathbf{B})}\right|
\leq (\sqrt{n}\,H)^n
\qquad \text{for all } j \in \{1,\dots,n\}.
\]
Hence $\|\vec{x}\|_\infty \leq (\sqrt{n}\,H)^n$.
If $p \in (0,\infty)$, then
$\|\vec{x}\|_p \leq n^{1/p}\|\vec{x}\|_\infty$,
and therefore $\|\vec{x}\|_p \leq n^{1/p}(\sqrt{n}\,H)^n$.
\end{proof}

Next, we give a reduction from $(p,\alpha,\beta)$-$\CoeffLDLP$ to $(p, \alpha + \beta)$-$\CosetLDLP$.
\begin{proposition}
\label{prop:coefficient-to-coset}
For any $p \in [1, \infty]$, $\alpha > 0$ and $\beta > 0$, there exists a deterministic polynomial-time many-one reduction from $(p, \alpha, \beta)$-$\CoeffLDLP$ to $(p, \alpha + \beta)$-$\CosetLDLP$.
Moreover, if the input transformation matrix has $k$ rows, then so does the output transformation matrix.
\end{proposition}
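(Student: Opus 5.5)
The reduction embeds each coefficient vector $\vec{x}$ into a coset vector that stores both the scaled residual $\mathbf{A}\vec{x}-\vec{s}$ and $\vec{x}$ itself. The map $\mathbf{T}$ can then act directly on the lower block, and a large scaling $M$ makes the extra norm of that block negligible.

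\textbf{Construction.} Given an instance $(\mathbf{A},\vec{s},\ell,\mathbf{T})$ with $\mathbf{A}\in\mathbb{Z}^{m\times n}$ and $\mathbf{T}\in\mathbb{Z}^{k\times n}$, fix a positive integer $M$ (chosen below) and output
\[
\mathbf{A}' := \begin{pmatrix} M\mathbf{A}\\ \mathbf{I}_n\end{pmatrix},\qquad
\vec{s}\,' := \begin{pmatrix} -M\vec{s}\\ \vec{0}\end{pmatrix},\qquad
\ell' := M\ell,\qquad
\mathbf{T}' := \begin{pmatrix}\mathbf{0}_{k\times m} & \mathbf{T}\end{pmatrix}.
\]
Basic properties of the output:
\begin{itemize}
\item $\mathbf{A}'$ has full column rank $n$ because of the identity block.
\item $\mathbf{T}'$ has $k$ rows, so $k$ is preserved.
\item The coset $\vec{s}\,'+\mathcal{L}(\mathbf{A}')$ consists exactly of the vectors $\vec{w}(\vec{x}) := \bigl(M(\mathbf{A}\vec{x}-\vec{s}),\ \vec{x}\bigr)$ with $\vec{x}\in\mathbb{Z}^n$.
\item $\mathbf{T}'\vec{w}(\vec{x})=\mathbf{T}\vec{x}$.
\end{itemize}

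\textbf{Promise on the output.} A nonzero lattice vector of $\mathcal{L}(\mathbf{A}')$ has the form $(M\mathbf{A}\vec{z},\vec{z})$ with $\vec{z}\neq\vec{0}$. Its $\ell_p$ norm is at least $M\|\mathbf{A}\vec{z}\|_p\ge M\ell$, using the input promise $\lambda_1^{(p)}(\mathcal{L}(\mathbf{A}))\ge\ell$. Hence $\lambda_1^{(p)}(\mathcal{L}(\mathbf{A}'))\ge\ell'$, so the output satisfies the CosetLDLP promise.

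\textbf{Soundness (NO maps to NO).} Suppose some $\vec{u}\in\{0,1\}^k$ equals $\mathbf{T}'\vec{w}(\vec{x})$ with $\|\vec{w}(\vec{x})\|_p\le(\alpha+\beta)M\ell$. Then $\mathbf{T}\vec{x}=\vec{u}$, and the first block alone gives $M\|\mathbf{A}\vec{x}-\vec{s}\|_p\le(\alpha+\beta)M\ell$, i.e. $\|\mathbf{A}\vec{x}-\vec{s}\|_p\le(\alpha+\beta)\ell$. So if the input is a NO instance, the output is a NO instance of $(p,\alpha+\beta)$-$\CosetLDLP$.

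\textbf{Completeness (YES maps to YES).} Take $\vec{u}\in\{0,1\}^k$ and $\vec{x}\in\mathbb{Z}^n$ with $\mathbf{T}\vec{x}=\vec{u}$ and $\|\mathbf{A}\vec{x}-\vec{s}\|_p\le\alpha\ell$.
\begin{itemize}
\item By \Cref{fact:upper-vector}, $\|\vec{x}\|_p\le R:=n^{1/p}(\sqrt{n}H)^n$ for finite $p$, and $R:=(\sqrt{n}H)^n$ for $p=\infty$.
\item By the triangle inequality on the two blocks, $\|\vec{w}(\vec{x})\|_p\le M\alpha\ell+R$.
\item This is at most $(\alpha+\beta)M\ell$ as soon as $M\ge R/(\beta\ell)$, so we set $M:=\lceil R/(\beta\ell)\rceil$.
\end{itemize}

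\textbf{Main obstacle: polynomial running time.} The crux is that $M$ has polynomial bit length. $H$ is at most exponential in the input size, so $\log R = O\bigl(n\log(\sqrt{n}H)\bigr)$ is polynomial. Since $\beta$ is a fixed positive constant and $\ell$ is part of the input, $\log M$ is polynomial as well. Hence all entries of $(\mathbf{A}',\vec{s}\,',\ell',\mathbf{T}')$ can be written down in polynomial time. For $p=\infty$ the same argument goes through with the maximum in place of the $p$-sum.
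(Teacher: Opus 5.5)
Your proposal is correct and matches the paper's proof essentially line for line. You use the same embedding $\vec{x}\mapsto\bigl(M(\mathbf{A}\vec{x}-\vec{s}),\vec{x}\bigr)$ with $\mathbf{T}'=(\mathbf{O}\ \mathbf{T})$, and the same choice $M=\lceil R/(\beta\ell)\rceil$ based on the coefficient bound of \Cref{fact:upper-vector}; your explicit check that $M$ has polynomial bit length also spells out a step the paper only calls clear.
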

\begin{proof}
Let $(\mathbf{A}, \vec{s}, \ell, \mathbf{T})$ be an instance of $(p,\alpha,\beta)$-$\CoeffLDLP$, where $\mathbf{A} \in \mathbb{Z}^{m \times n}$ is a lattice basis, $\vec{s} \in \mathbb{Z}^m$ is an integer vector, $\ell > 0$ is a distance threshold, and $\mathbf{T} \in \mathbb{Z}^{k \times n}$ is a linear transformation matrix.
The reduction algorithm outputs
\begin{align*}
    \mathbf{A}' &:=
    \begin{pmatrix}
        M\mathbf{A} \\
        \mathbf{I}_n
    \end{pmatrix}
    \in \mathbb{Z}^{(m+n) \times n}
    \qquad \text{with }
    M :=
    \begin{cases}
        \left\lceil n^{1/p}(\sqrt{n}\,H)^n / (\beta \ell) \right\rceil,
        & \text{if } p \in [1,\infty),\\
        \left\lceil (\sqrt{n}\,H)^n / (\beta \ell) \right\rceil,
        & \text{if } p = \infty,
    \end{cases}
    \\
    \vec{s}' &:=
    \begin{pmatrix}
        -M \vec{s} \\
        \vec{0}_n
    \end{pmatrix}
    \in \mathbb{Z}^{m+n},
    \qquad
    \ell' := M \ell > 0,
    \\
    \mathbf{T}' &:=
    \begin{pmatrix}
        \mathbf{O}_{k\times m} & \mathbf{T}
    \end{pmatrix}
    \in \mathbb{Z}^{k\times(m+n)},
\end{align*}
where
$H := \max\bigl(\lVert \mathbf{A}\rVert_\infty,\ \lVert \vec{s}\rVert_\infty + \lceil \alpha\ell\rceil\bigr)$, $\lVert \mathbf{A}\rVert_\infty := \max_{i\in[m],\,j\in[n]} |a_{ij}|$, $\mathbf{O}_{k\times m}$
is the $k\times m$ zero matrix, $\mathbf{I}_n$ is the $n\times n$ identity matrix, and $\vec{0}_n$ is the $n$-dimensional zero vector.

It is clear that the reduction runs in polynomial time.

Moreover, $(\mathbf{A}',\vec{s}',\ell', \mathbf{T}')$ satisfies \Cref{item:ldlp-min-dist} of \Cref{def:ldlp}: $\lambda_1^{(p)}(\mathcal{L}(\mathbf{A}')) \ge \ell'$.
Indeed, for every nonzero $\vec{x}\in \mathbb{Z}^n$,
\begin{align*}
    \lVert \mathbf{A}'\vec{x} \rVert_p
    =
    \left\lVert
        \begin{pmatrix}
            M\mathbf{A}\vec{x} \\
            \vec{x}
        \end{pmatrix}
    \right\rVert_p
    \ge M \lVert \mathbf{A}\vec{x} \rVert_p
    \ge M \lambda_1^{(p)}(\mathcal{L}(\mathbf{A}))
    \ge M\ell
    = \ell',
\end{align*}
and hence $\lambda_1^{(p)}(\mathcal{L}(\mathbf{A}')) \ge \ell'$.

We first show the completeness.
Assume that $(\mathbf{A}, \vec{s}, \ell, \mathbf{T})$ is a YES instance of
$(p,\alpha,\beta)$-$\CoeffLDLP$.
Fix an arbitrary vector $\vec{y} \in \{0,1\}^k$ and choose
$\vec{x} \in \mathbb{Z}^n$ such that $\mathbf{T}\vec{x} = \vec{y}$ and
$\lVert \mathbf{A}\vec{x} - \vec{s} \rVert_p \le \alpha \ell$.
Define $\vec{v} := \vec{s}' + \mathbf{A}'\vec{x}$.
Then $\vec{v} \in \vec{s}' + \mathcal{L}(\mathbf{A}')$ and
$\mathbf{T}'\vec{v} = \vec{y}$.
Using the inequality
$\lVert (\vec{u},\vec{w}) \rVert_p \le \lVert \vec{u} \rVert_p + \lVert \vec{w} \rVert_p$
for all $p \in [1,\infty]$, we obtain
\begin{align*}
    \lVert \vec{v} \rVert_p
    &= \left\lVert
        \begin{pmatrix}
            M(\mathbf{A}\vec{x}-\vec{s}) \\
            \vec{x}
        \end{pmatrix}
    \right\rVert_p
    \le  M\lVert \mathbf{A}\vec{x}-\vec{s} \rVert_p + \lVert \vec{x} \rVert_p.
\end{align*}
By \Cref{fact:upper-vector} and the definition of $M$, we have
\begin{align*}
    \lVert \vec{v} \rVert_p
    \le M\alpha\ell + M\beta\ell
    = M(\alpha+\beta)\ell
    = (\alpha+\beta)\ell'.
\end{align*}
Since $\vec{y}$ was arbitrary, this shows that \Cref{item:ldlp-conv-hypercube} of \Cref{def:ldlp} holds for $(\mathbf{A}', \vec{s}', \ell', \mathbf{T}')$ with parameter $\alpha+\beta$.
Together with \Cref{item:ldlp-min-dist} of \Cref{def:ldlp}, we conclude that $(\mathbf{A}', \vec{s}', \ell', \mathbf{T}')$ is a YES instance of $(p,\alpha+\beta)\text{-}\CosetLDLP$.

Next, we show the soundness.
Assume that $(\mathbf{A}, \vec{s}, \ell, \mathbf{T})$ is a NO instance of $(p,\alpha,\beta)$-$\CoeffLDLP$.
Then there exists a vector $\vec{y} \in \{0,1\}^k$ such that for all $\vec{x} \in \mathbb{Z}^n$,
\begin{align*}
    \mathbf{T}\vec{x} = \vec{y}
    \ \Rightarrow\
    \lVert \mathbf{A}\vec{x} - \vec{s} \rVert_p > (\alpha+\beta)\ell.
\end{align*}
Fix any vector $\vec{v} = \vec{s}' + \mathbf{A}'\vec{x} \in \vec{s}' + \mathcal{L}(\mathbf{A}')$ for some $\vec{x} \in \mathbb{Z}^{n}$ such that $\mathbf{T}'\vec{v} = \vec{y}$.
Since $\mathbf{T}\vec{x} = \mathbf{T}'\vec{v} = \vec{y}$, it follows that $\lVert \mathbf{A}\vec{x} - \vec{s} \rVert_p > (\alpha+\beta)\ell$.
Then, we obtain
\begin{align*}
    \lVert \vec{v} \rVert_p
    &= \left\lVert
        \begin{pmatrix}
            M(\mathbf{A}\vec{x}-\vec{s}) \\
            \vec{x}
        \end{pmatrix}
    \right\rVert_p
    \ge \left\lVert M(\mathbf{A}\vec{x}-\vec{s}) \right\rVert_p
    > M(\alpha+\beta)\ell
    = (\alpha+\beta)\ell'.
\end{align*}
Therefore, for this vector $\vec{y}$, there is no $\vec{v} \in \vec{s}' + \mathcal{L}(\mathbf{A}')$ such that $\mathbf{T}'\vec{v} = \vec{y}$ and $\lVert \vec{v} \rVert_p \le (\alpha+\beta)\ell'$.
Hence \Cref{item:ldlp-conv-hypercube} of \Cref{def:ldlp} fails for $(\mathbf{A}', \vec{s}', \ell', \mathbf{T}')$, and so $ (\mathbf{A}', \vec{s}', \ell', \mathbf{T}')$ is a NO instance of $(p,\alpha+\beta)\text{-}\CosetLDLP$.
\end{proof}
We give a reduction from $(p,\alpha)$-$\CosetLDLP$ to $(p,\alpha,0)$-$\CoeffLDLP$.

\begin{proposition}
\label{prop:coset-to-coefficient}
For any $p \in [1,\infty]$ and $\alpha > 0$, there exists a deterministic polynomial-time many-one reduction from $(p,\alpha)$-$\CosetLDLP$ to $(p,\alpha,0)$-$\CoeffLDLP$.
Moreover, if the input transformation matrix has $k$ rows, then so does the output transformation matrix.
\end{proposition}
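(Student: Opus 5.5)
The plan is to implement the ``residual'' construction sketched in the techniques section. We introduce auxiliary coefficient variables $\vec{y}\in\mathbb{Z}^k$, intended to equal the Boolean image $\mathbf{T}\vec{v}$ of a coset vector $\vec{v}=\vec{s}+\mathbf{A}\vec{z}$, and penalize any mismatch by a large integer factor $C$.

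Given an instance $(\mathbf{A},\vec{s},\ell,\mathbf{T})$ of $(p,\alpha)$-$\CosetLDLP$ with $\mathbf{A}\in\mathbb{Z}^{m\times n}$ and $\mathbf{T}\in\mathbb{Z}^{k\times m}$, the reduction outputs
\[
\mathbf{A}'' := \begin{pmatrix} \mathbf{A} & \mathbf{O}_{m\times k}\\ C\,\mathbf{T}\mathbf{A} & -C\,\mathbf{I}_k\end{pmatrix},\qquad
\vec{s}'' := \begin{pmatrix} -\vec{s}\\ -C\,\mathbf{T}\vec{s}\end{pmatrix},\qquad
\ell'':=\ell,\qquad
\mathbf{T}'' := \begin{pmatrix}\mathbf{O}_{k\times n} & \mathbf{I}_k\end{pmatrix},
\]
where $C:=\lceil \max(1,\alpha)\,\ell\rceil+1$. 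This choice gives a positive integer with $C>\alpha\ell$ and $C\ge \ell$, and its bit length is polynomial in the input. A direct computation gives, for coefficient vector $(\vec{z},\vec{y})$,
\[
\mathbf{A}''\begin{pmatrix}\vec{z}\\ \vec{y}\end{pmatrix}-\vec{s}''=\begin{pmatrix}\vec{s}+\mathbf{A}\vec{z}\\ C\bigl(\mathbf{T}(\vec{s}+\mathbf{A}\vec{z})-\vec{y}\bigr)\end{pmatrix},
\]
and $\mathbf{T}''(\vec{z},\vec{y})=\vec{y}$. The output matrix $\mathbf{T}''$ has $k$ rows, as required.

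First I check that the output is a legal instance satisfying the promise. For column rank: if $\mathbf{A}''(\vec{z},\vec{y})=\vec{0}$, then $\mathbf{A}\vec{z}=\vec{0}$, so $\vec{z}=\vec{0}$ by the rank of $\mathbf{A}$, and then $-C\vec{y}=\vec{0}$ gives $\vec{y}=\vec{0}$. For the minimum-distance promise, take a nonzero $(\vec{z},\vec{y})$ and split into two cases.
\begin{itemize}
\item If $\vec{z}\neq\vec{0}$, the top block alone has norm $\|\mathbf{A}\vec{z}\|_p\ge\lambda_1^{(p)}(\mathcal{L}(\mathbf{A}))\ge\ell$ by the input promise.
\item If $\vec{z}=\vec{0}$, then $\vec{y}\neq\vec{0}$ is a nonzero integer vector, so the bottom block has norm $C\|\vec{y}\|_p\ge C\ge\ell$.
\end{itemize}
Hence $\lambda_1^{(p)}(\mathcal{L}(\mathbf{A}''))\ge\ell''$.

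For completeness, fix $\vec{u}\in\{0,1\}^k$ and take $\vec{v}=\vec{s}+\mathbf{A}\vec{z}$ with $\|\vec{v}\|_p\le\alpha\ell$ and $\mathbf{T}\vec{v}=\vec{u}$. Setting $\vec{y}:=\vec{u}$ makes the bottom block vanish, so the residual has norm $\|\vec{v}\|_p\le\alpha\ell''$, and $\mathbf{T}''(\vec{z},\vec{y})=\vec{u}$.

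For soundness, I argue by contrapositive. Suppose that for some target $\vec{u}$ there is a coefficient vector $(\vec{z},\vec{y})$ with $\vec{y}=\vec{u}$ and residual norm at most $\alpha\ell$. The vector $\mathbf{T}(\vec{s}+\mathbf{A}\vec{z})-\vec{u}$ is an integer vector. If it were nonzero, the bottom block would have norm at least $C>\alpha\ell$, a contradiction. Hence $\mathbf{T}\vec{v}=\vec{u}$ for $\vec{v}:=\vec{s}+\mathbf{A}\vec{z}$, and $\|\vec{v}\|_p\le\alpha\ell$ because the top block is bounded by the whole residual. So $\vec{u}$ is covered in the original coset instance. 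Consequently, if the input is a NO instance, some $\vec{u}$ is uncovered and the output is a NO instance of $(p,\alpha,0)$-$\CoeffLDLP$.

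The only delicate point I expect is choosing $C$ so that it serves two purposes at once. It must exceed $\alpha\ell$, so that integrality of the mismatch forces exactness in the soundness argument, and it must be at least $\ell$, so that lattice vectors with $\vec{z}=\vec{0}$ do not violate the minimum-distance promise. Both hold for all $\alpha>0$ with the choice above, since $\alpha$ is not assumed to be less than $1$.
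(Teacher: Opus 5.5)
Your proposal is correct and matches the paper's proof essentially line for line. It uses the same block matrix $\begin{pmatrix}\mathbf{A} & \mathbf{O}\\ C\mathbf{T}\mathbf{A} & -C\mathbf{I}_k\end{pmatrix}$ with $C = 1+\lceil \max\{1,\alpha\}\ell\rceil$, the same two-case check of the minimum-distance promise, and the same integrality argument that forces $\mathbf{T}(\vec{s}+\mathbf{A}\vec{z})=\vec{y}$ in soundness.
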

\begin{proof}
Let $(\mathbf{A}, \vec{s}, \ell, \mathbf{T})$ be an instance of $(p,\alpha)$-$\CosetLDLP$, where $\mathbf{A} \in \mathbb{Z}^{m \times n}$ is a lattice basis, $\vec{s} \in \mathbb{Z}^m$ is an integer vector, $\ell > 0$ is a distance threshold, and $\mathbf{T} \in \mathbb{Z}^{k \times m}$ is a linear transformation matrix.
The reduction algorithm outputs
\begin{align*}
    \mathbf{A}' &:=
    \begin{pmatrix}
        \mathbf{A} & \mathbf{O}_{m \times k} \\
        C\mathbf{T}\mathbf{A} & -C\mathbf{I}_k
    \end{pmatrix}
    \in \mathbb{Z}^{(m+k)\times(n+k)}, \\
    \vec{s}' &:=
    \begin{pmatrix}
        -\vec{s} \\
        -C\mathbf{T}\vec{s}
    \end{pmatrix}
    \in \mathbb{Z}^{m+k},
    \qquad
    \ell' := \ell, \\
    \mathbf{T}' &:=
    \begin{pmatrix}
        \mathbf{O}_{k\times n} & \mathbf{I}_k
    \end{pmatrix}
    \in \mathbb{Z}^{k\times(n+k)},
\end{align*}
where $C := 1 + \lceil \max\{1,\alpha\}\ell \rceil$, $\mathbf{O}_{m\times k}$ is the $m\times k$ zero matrix, $\mathbf{O}_{k\times n}$ is the $k\times n$ zero matrix, and $\mathbf{I}_k$ is the $k\times k$ identity matrix.

It is clear that the reduction runs in polynomial time.

Moreover, $\mathbf{A}'$ has column rank $n+k$.
Indeed, suppose that $\mathbf{A}' (\vec{u}, \vec{w})^T = \vec{0}$ for some $\vec{u} \in \mathbb{Z}^n$ and $\vec{w} \in \mathbb{Z}^k$.
Then $\mathbf{A}\vec{u} = \vec{0}$.
Since $\mathbf{A}$ is a lattice basis, it has column rank $n$, and hence $\vec{u} = \vec{0}$.
Substituting this into the lower block gives $-C\vec{w} = \vec{0}$, and therefore $\vec{w} = \vec{0}$.
Thus, $\mathbf{A}'$ has column rank $n+k$.

Next, $(\mathbf{A}',\vec{s}',\ell',\mathbf{T}')$ satisfies the promise of \Cref{def:var-ldlp}, namely, $\lambda_1^{(p)}(\mathcal{L}(\mathbf{A}')) \geq \ell'$.
Indeed, for every nonzero vector $(\vec{u}, \vec{w})^T \in \mathbb{Z}^{n+k}$, we have
\begin{align*}
    \mathbf{A}'
    \begin{pmatrix}
        \vec{u} \\
        \vec{w}
    \end{pmatrix}
    =
    \begin{pmatrix}
        \mathbf{A}\vec{u} \\
        C(\mathbf{T}\mathbf{A}\vec{u}-\vec{w})
    \end{pmatrix}.
\end{align*}
If $\vec{u} \neq \vec{0}$, then
\begin{align*}
    \left\|
    \mathbf{A}'
    \begin{pmatrix}
        \vec{u} \\
        \vec{w}
    \end{pmatrix}
    \right\|_p
    \geq
    \|\mathbf{A}\vec{u}\|_p
    \geq
    \lambda_1^{(p)}(\mathcal{L}(\mathbf{A}))
    \geq
    \ell
    =
    \ell'.
\end{align*}
If $\vec{u} = \vec{0}$, then $\vec{w} \neq \vec{0}$, and hence
\begin{align*}
    \left\|
    \mathbf{A}'
    \begin{pmatrix}
        \vec{0} \\
        \vec{w}
    \end{pmatrix}
    \right\|_p
    =
    \left\|
    \begin{pmatrix}
        \vec{0} \\
        -C\vec{w}
    \end{pmatrix}
    \right\|_p
    =
    C\|\vec{w}\|_p
    \geq
    C
    >
    \ell
    =
    \ell'.
\end{align*}
Therefore, $\lambda_1^{(p)}(\mathcal{L}(\mathbf{A}')) \geq \ell'$.

We first show the completeness.
Assume that $(\mathbf{A}, \vec{s}, \ell, \mathbf{T})$ is a YES instance of $(p,\alpha)$-$\CosetLDLP$.
Fix an arbitrary vector $\vec{y} \in \{0,1\}^k$.
Then there exists a vector $\vec{v} \in (\vec{s} + \mathcal{L}(\mathbf{A})) \cap \mathcal{B}_p^m(\alpha\ell)$ such that $\mathbf{T}\vec{v} = \vec{y}$.
Since $\vec{v} \in \vec{s} + \mathcal{L}(\mathbf{A})$, there exists $\vec{z} \in \mathbb{Z}^n$ such that $\vec{v} = \vec{s} + \mathbf{A}\vec{z}$.
Define $\vec{x} := (\vec{z}, \vec{y})^T \in \mathbb{Z}^{n+k}$.
Then $\mathbf{T}'\vec{x} = \vec{y}$.
Moreover,
\begin{align*}
    \mathbf{A}'\vec{x}-\vec{s}'
    &=
    \begin{pmatrix}
        \mathbf{A}\vec{z} \\
        C\mathbf{T}\mathbf{A}\vec{z} - C\vec{y}
    \end{pmatrix}
    -
    \begin{pmatrix}
        -\vec{s} \\
        -C\mathbf{T}\vec{s}
    \end{pmatrix} \\
    &=
    \begin{pmatrix}
        \mathbf{A}\vec{z}+\vec{s} \\
        C\mathbf{T}(\mathbf{A}\vec{z}+\vec{s}) - C\vec{y}
    \end{pmatrix} \\
    &=
    \begin{pmatrix}
        \vec{v} \\
        C(\mathbf{T}\vec{v}-\vec{y})
    \end{pmatrix}
    =
    \begin{pmatrix}
        \vec{v} \\
        \vec{0}
    \end{pmatrix}.
\end{align*}
Hence,
\begin{align*}
    \|\mathbf{A}'\vec{x}-\vec{s}'\|_p
    =
    \left\|
    \begin{pmatrix}
        \vec{v} \\
        \vec{0}
    \end{pmatrix}
    \right\|_p
    =
    \|\vec{v}\|_p
    \leq
    \alpha\ell
    =
    \alpha\ell'.
\end{align*}
Since $\vec{y}$ was arbitrary, this shows that $(\mathbf{A}',\vec{s}',\ell',\mathbf{T}')$ is a YES instance of $(p,\alpha,0)$-$\CoeffLDLP$.

Next, we show the soundness.
Assume that $(\mathbf{A}, \vec{s}, \ell, \mathbf{T})$ is a NO instance of $(p,\alpha)$-$\CosetLDLP$.
Then there exists a vector $\vec{y} \in \{0,1\}^k$ such that for all $\vec{v} \in \vec{s} + \mathcal{L}(\mathbf{A})$, $\mathbf{T}\vec{v} = \vec{y}$ implies $\|\vec{v}\|_p > \alpha\ell$.

Suppose, for contradiction, that $(\mathbf{A}',\vec{s}',\ell',\mathbf{T}')$ is a YES instance of $(p,\alpha,0)$-$\CoeffLDLP$ for this vector $\vec{y}$.
Then there exists $\vec{x} = (\vec{z}, \vec{w})^T \in \mathbb{Z}^{n+k}$ such that $\mathbf{T}'\vec{x} = \vec{y}$ and $\|\mathbf{A}'\vec{x}-\vec{s}'\|_p \leq \alpha\ell'$.
Since $\mathbf{T}'\vec{x} = \vec{w}$, it follows that $\vec{w} = \vec{y}$.
Therefore,
\begin{align*}
    \mathbf{A}'\vec{x}-\vec{s}'
    &=
    \begin{pmatrix}
        \mathbf{A}\vec{z}+\vec{s} \\
        C\mathbf{T}(\mathbf{A}\vec{z}+\vec{s}) - C\vec{y}
    \end{pmatrix}.
\end{align*}
If $\mathbf{T}(\mathbf{A}\vec{z}+\vec{s}) - \vec{y} \neq \vec{0}$, then the lower block is a nonzero integer vector, and hence
\begin{align*}
    \|\mathbf{A}'\vec{x}-\vec{s}'\|_p
    \geq
    \left\|C\bigl(\mathbf{T}(\mathbf{A}\vec{z}+\vec{s})-\vec{y}\bigr)\right\|_p
    \geq
    C
    >
    \alpha\ell
    =
    \alpha\ell',
\end{align*}
which is a contradiction.
Thus, $\mathbf{T}(\mathbf{A}\vec{z}+\vec{s}) = \vec{y}$.
Set $\vec{v} := \mathbf{A}\vec{z}+\vec{s}$.
Then $\vec{v} \in \vec{s} + \mathcal{L}(\mathbf{A})$ and $\mathbf{T}\vec{v} = \vec{y}$.
Moreover,
\begin{align*}
    \|\vec{v}\|_p
    \leq
    \|\mathbf{A}'\vec{x}-\vec{s}'\|_p
    \leq
    \alpha\ell'
    =
    \alpha\ell,
\end{align*}
which contradicts the choice of $\vec{y}$.
Therefore, $(\mathbf{A}',\vec{s}',\ell',\mathbf{T}')$ is a NO instance of $(p,\alpha,0)$-$\CoeffLDLP$.
\end{proof}

From the above, the following theorem follows immediately.
\begin{theorem}
\label{thm:ldlp-variant-reductions}
For any $p \in [1, \infty]$, $\alpha > 0$, $\beta > 0$ and $k \geq 1$, the following holds:
\begin{itemize}
    \item There exists a deterministic polynomial-time many-one reduction from $(p, \alpha, \beta)$-$\CoeffLDLP$ to $(p, \alpha + \beta)$-$\CosetLDLP$.
    \item There exists a deterministic polynomial-time many-one reduction from $(p, \alpha)$-$\CosetLDLP$ to $(p, \alpha, 0)$-$\CoeffLDLP$.
\end{itemize}
Moreover, in both reductions, the number \(k\) of hypercube dimension is preserved.
\end{theorem}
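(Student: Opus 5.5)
The statement is a direct assembly of the two propositions just proved, so I will invoke them item by item rather than build anything new. For the first item, given $p\in[1,\infty]$, $\alpha>0$ and $\beta>0$, the reduction is the map $(\mathbf{A},\vec{s},\ell,\mathbf{T})\mapsto(\mathbf{A}',\vec{s}',\ell',\mathbf{T}')$ of \Cref{prop:coefficient-to-coset}. That proposition shows that the output satisfies the promise $\lambda_1^{(p)}(\mathcal{L}(\mathbf{A}'))\ge\ell'$ required of $(p,\alpha+\beta)$-$\CosetLDLP$ instances. It also shows that YES instances map to YES instances and NO instances map to NO instances. The only input it needs is the coefficient bound of \Cref{fact:upper-vector}, which is where the strict positivity of $\beta$ is used to choose the scaling $M$.

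For the second item, I will take the map of \Cref{prop:coset-to-coefficient}. It sends $(p,\alpha)$-$\CosetLDLP$ instances to $(p,\alpha,0)$-$\CoeffLDLP$ instances. It also preserves the promise, since $\mathbf{A}'$ has full column rank and $\lambda_1^{(p)}(\mathcal{L}(\mathbf{A}'))\ge\ell'$ by the choice $C>\max\{1,\alpha\}\ell$, and both completeness and soundness were verified there. Both constructions are explicit block matrices whose entries have polynomial bit length, because $M$ has $O(n\log(nH)+\log(1/(\beta\ell)))$ bits. Hence both reductions are deterministic polynomial-time many-one reductions.

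For the ``moreover'' clause, I will just read off the shapes of the output maps. In the first reduction $\mathbf{T}'=(\mathbf{O}_{k\times m}\ \mathbf{T})$, and in the second $\mathbf{T}'=(\mathbf{O}_{k\times n}\ \mathbf{I}_k)$. Each has exactly $k$ rows, so the hypercube dimension is preserved; the parameter $k\ge1$ in the statement is just this row count.

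There is no real obstacle. The only point that needs a moment's care is checking that the promise of each target problem holds on \emph{every} output instance, not just on YES instances, so that the maps are valid reductions between promise problems. Both propositions establish this unconditionally from the source promise $\lambda_1^{(p)}(\mathcal{L}(\mathbf{A}))\ge\ell$.
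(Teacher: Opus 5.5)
Your proposal is correct and is exactly the paper's argument: the paper states the theorem as an immediate consequence of \Cref{prop:coefficient-to-coset} and \Cref{prop:coset-to-coefficient}, whose proofs establish the target promise $\lambda_1^{(p)}(\mathcal{L}(\mathbf{A}'))\ge\ell'$ on every output instance together with completeness and soundness. The preservation of $k$ is read off from the output maps $(\mathbf{O}_{k\times m}\ \mathbf{T})$ and $(\mathbf{O}_{k\times n}\ \mathbf{I}_k)$, just as you do.
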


By considering the reduction of \Cref{prop:coefficient-to-coset} and \Cref{prop:coset-to-coefficient} as algorithms with YES instances as input, \Cref{thm:among-ldls} is immediately obtained.

\section{Hardness of LDLP}
\label{sec:hardnes}
In this section, we show that $(p, \alpha)$-$\LDLP$ is $\Pi_2^p$-complete via a reduction from $\forall\exists$ 1-in-3-SAT.
$\forall\exists$ 1-in-3-SAT is formally defined as follows:
An instance consists of a Boolean formula $\psi(\vec{u},\vec{v})$ in CNF, where $\vec{u} = (u_1,\dots,u_n)$ are universally quantified variables and $\vec{v} = (v_1,\dots,v_m)$ are existentially quantified variables, and every clause contains exactly three literals.
A truth assignment satisfies a clause if and only if exactly one of its three literals evaluates to $1$, where a literal is either a variable $x$ or its negation $\neg x$.
The quantified formula is
\begin{align*}
    \Phi \;=\; \forall \vec{u}\in\{0,1\}^n\ \exists \vec{v}\in\{0,1\}^{m}:\ \psi(\vec{u},\vec{v}),
\end{align*}
and the decision problem asks whether $\Phi$ is true.

\begin{theorem}[\cite{BJORKLUND2011450}, adapted]
\label{thm:sat}
$\forall\exists$ 1-in-3-SAT is $\Pi_2^p$-complete.
\end{theorem}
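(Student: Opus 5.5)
The statement to be proved is \Cref{thm:sat}, which says that $\forall\exists$ 1-in-3-SAT is $\Pi_2^p$-complete. Membership in $\Pi_2^p$ is immediate. Given $\Phi$, the universal quantifier ranges over $\vec{u}\in\{0,1\}^n$ and the existential one over $\vec{v}\in\{0,1\}^m$. Checking that every clause of $\psi(\vec{u},\vec{v})$ has exactly one true literal takes polynomial time. So the $\forall\exists$ form exactly matches the definition of $\Pi_2^p$.

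For hardness, I will reduce from the canonical $\Pi_2^p$-complete problem $\forall\exists$ 3-SAT: decide whether $\forall \vec{u}\,\exists \vec{w}:\ \varphi(\vec{u},\vec{w})$ holds, where $\varphi$ is a 3-CNF. The reduction applies Schaefer's gadget, which turns ordinary 3-SAT into 1-in-3-SAT, clause by clause. The gadget introduces only fresh variables, and these are placed in the existential block.

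For a clause $(a\vee b\vee c)$ I use fresh variables $x_1,\dots,x_4$ and the three 1-in-3 clauses
\[
(\neg a, x_1, x_2),\quad (b, x_2, x_3),\quad (\neg c, x_3, x_4).
\]
This system is 1-in-3 satisfiable in $x_1,\dots,x_4$ if and only if $a\vee b\vee c$ is true. The verification is a finite case check.
\begin{itemize}
\item If $a=b=c=0$, the first clause forces $x_1=x_2=0$ and the third forces $x_3=x_4=0$. The middle clause then has no true literal, so the system fails.
\item If some literal is true, the $x_i$ can be set explicitly. For example, with $a=1$ set $x_1=x_2=0$ and adjust $x_3,x_4$ according to $b$ and $c$; the other cases are symmetric.
\end{itemize}
Literals of $a,b,c$ that are themselves negated are fine, because negations are allowed in literals. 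Each resulting clause has exactly three literals, as required.

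Let $\psi(\vec{u},\vec{v})$ be the conjunction of all the gadgets, where $\vec{v}$ consists of $\vec{w}$ together with all the gadget variables. By the per-clause equivalence, for each fixed $\vec{u}$ we have: $\exists \vec{v}\,\psi(\vec{u},\vec{v})$ holds if and only if $\exists\vec{w}\,\varphi(\vec{u},\vec{w})$ holds. The reason is that the gadget variables of different clauses are disjoint, so they can be chosen independently once $(\vec{u},\vec{w})$ is fixed. Since this equivalence holds for every $\vec{u}$, the two quantified formulas are equivalent.

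The reduction is clearly deterministic polynomial time. One detail needs care: clauses of $\varphi$ with fewer than three literals, or with repeated variables, must first be padded to exactly three distinct literals. I would do this with fresh existential variables in the standard way, for instance by forcing a constant-false variable through a small 1-in-3 gadget. That padding is the only mildly fiddly step; the core of the argument is the gadget case check above.
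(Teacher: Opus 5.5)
The paper gives no proof of this statement; it imports it from Bj\"orklund, Martens and Schwentick, with an unspecified ``adaptation.'' Your self-contained argument is correct and is the standard route. Membership is immediate from the $\forall\exists$ form. Hardness follows from $\forall\exists$ 3-SAT (with 3-CNF matrix) by applying Schaefer's gadget clause by clause and putting all gadget variables in the existential block. Your observation that the gadget variables of distinct clauses are disjoint is what makes the quantified version work: for every fixed $\vec{u}$ it gives $\exists \vec{v}\,\psi(\vec{u},\vec{v}) \iff \exists \vec{w}\,\varphi(\vec{u},\vec{w})$. Compared with the citation, your route pins down exactly the variant the paper's later reductions use: exactly three literals per clause, negations allowed, and the universal block left untouched.

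Two corrections to the write-up. First, your sample case is wrong. If $a=1$ then $\neg a$ is false, so $x_1=x_2=0$ leaves $(\neg a,x_1,x_2)$ with no true literal; it is $a=0$ that forces $x_1=x_2=0$. Also, the gadget is symmetric only under $a\leftrightarrow c$ together with $x_1\leftrightarrow x_4$, $x_2\leftrightarrow x_3$, not in $b$, so ``the other cases are symmetric'' does not cover everything. A uniform witness for the satisfiable cases is:
\begin{itemize}
\item if $b=1$, take $x_2=x_3=0$, $x_1=a$, $x_4=c$;
\item if $b=0$ and $a=1$, take $x_1=x_3=0$, $x_2=1$, $x_4=c$;
\item if $a=b=0$ (hence $c=1$), take $x_1=x_2=x_4=0$, $x_3=1$.
\end{itemize}

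Second, the padding step you flag as fiddly is unnecessary. Each gadget clause contains only one of $a,b,c$ plus two fresh variables, and your case analysis depends only on the truth values of the three literals. So the equivalence holds even if $a,b,c$ share variables. A clause $(a\vee b)$ can be handled as $(a\vee b\vee b)$, and $(a)$ as $(a\vee a\vee a)$; every output clause still has three literals on three distinct variables. No constant-forcing gadget is needed.
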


To reduce from $\forall \exists$ 1-in-3-SAT to LDLP, we construct the following gadget.
The following elementary encoding of clauses is similar to an encoding that appears in the reduction of Bennett, Golovnev, and Stephens-Davidowitz \cite{conf/focs/BennettGS17} to Closest Vector Problem (CVP).
\begin{lemma}
\label{lem:1in3-clause-gadget}
Let $\psi$ be a Boolean formula on variables $x_1,\ldots,x_n$, where every clause
contains exactly three literals.
Let $m_\psi$ be the number of clauses of $\psi$.
Then there exists a deterministic polynomial-time algorithm that outputs
\[
    \mathbf{C}\in\mathbb Z^{m_\psi\times n},
    \qquad
    \vec b\in\mathbb Z^{m_\psi},
\]
such that for every $\vec z\in\{0,1\}^n$,
\begin{align}
\label{eq:1in3-gadget-sat}
    \mathbf{C}\vec z=\vec b
    &\iff
    \vec z \text{ satisfies } \psi \text{ in the 1-in-3 sense},\\
\label{eq:1in3-gadget-gap}
    \mathbf{C}\vec z\neq \vec b
    &\Longrightarrow
    \norm{\mathbf{C}\vec z-\vec b}_\infty\ge 2.
\end{align}
\end{lemma}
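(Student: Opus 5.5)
We build one row of $\mathbf{C}$ and one entry of $\vec b$ per clause; the whole argument then reduces to a per-row check. For a clause $c_j=(\ell_1\vee\ell_2\vee\ell_3)$, write each literal as an affine function of $\vec z$. A positive literal $x_i$ contributes $z_i$, and a negated literal $\neg x_i$ contributes $1-z_i$. The number of true literals in $c_j$ under $\vec z$ is then
\[
N_j(\vec z)=\sum_{t=1}^{3}\ell_t(\vec z)=\langle \vec c_j,\vec z\rangle+d_j ,
\]
where $\vec c_j\in\mathbb Z^n$ collects the coefficients $\pm1$ (adding them if a variable repeats within the clause) and $d_j$ is the number of negated literals. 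We would set the $j$-th row of $\mathbf{C}$ to $2\vec c_j$ and $b_j:=2(1-d_j)$. With this choice,
\[
(\mathbf{C}\vec z-\vec b)_j=2\bigl(N_j(\vec z)-1\bigr).
\]
The scaling by $2$ is the key choice. Without it, the residual $N_j-1$ could be $\pm1$, for example when no literal is true or when two are true, and \eqref{eq:1in3-gadget-gap} would fail.

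We then verify the two properties for Boolean $\vec z$. Since $\vec z\in\{0,1\}^n$, each $\ell_t(\vec z)$ is the truth value of the literal, so $N_j(\vec z)$ is exactly the number of true literals in $c_j$, an integer in $\{0,1,2,3\}$. Hence $(\mathbf{C}\vec z-\vec b)_j=0$ if and only if exactly one literal of $c_j$ is true. Taking the conjunction over all $j$ gives \eqref{eq:1in3-gadget-sat}. If $\mathbf{C}\vec z\neq\vec b$, some coordinate equals $2(N_j-1)$ with $N_j-1$ a nonzero integer, so its absolute value is at least $2$, which gives \eqref{eq:1in3-gadget-gap}.

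The construction is clearly computable in polynomial time. The entries are bounded by $6$ in absolute value, and $\mathbf{C},\vec b$ have size $O(m_\psi n)$. The only point needing care is a clause that repeats a variable, such as $x_i\vee\neg x_i\vee x_k$. The summation handles this automatically, since coefficients simply add (possibly cancelling to $0$) and $N_j$ still counts true literal occurrences. I expect no real obstacle beyond noticing that the factor $2$ is needed for the gap.
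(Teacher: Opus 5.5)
Your proposal is correct and is the same construction as the paper's. The paper also sets $C_{j,h}$ to twice the signed count of occurrences of $x_h$ in clause $C_j$, with repeated variables handled by the same additive counting, and sets $b_j=2(1-c_j)$, where $c_j$ is the number of negated literals. It then derives $(\mathbf{C}\vec z-\vec b)_j=2\bigl(w_j(\vec z)-1\bigr)$ and reads off both properties from this identity, exactly as you do.
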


\begin{proof}
For each clause $C_j$ of $\psi$, write $C_j=(\tau_{j,1},\tau_{j,2},\tau_{j,3})$, where each $\tau_{j,r}$ is either $x_h$ or $\neg x_h$ for some $h\in[n]$.
For each $(j,r)$, let $\iota(j,r)\in[n]$ be the unique index such that $\tau_{j,r}\in\{x_{\iota(j,r)},\neg x_{\iota(j,r)}\}$.
Let
\[
    P_j:=\{r\in\{1,2,3\}:\tau_{j,r}=x_{\iota(j,r)}\},
    \qquad
    N_j:=\{r\in\{1,2,3\}:\tau_{j,r}=\neg x_{\iota(j,r)}\},
\]
and set $c_j:=|N_j|$.

Define $\mathbf{C}\in\mathbb Z^{m_\psi\times n}$ and $\vec b\in\mathbb Z^{m_\psi}$ by
\[
     C_{j,h}
    :=
    2\cdot \bigl|\{r\in P_j:\iota(j,r)=h\}\bigr|
    -
    2\cdot \bigl|\{r\in N_j:\iota(j,r)=h\}\bigr|
    \qquad (1\le h\le n),
\]
and $b_j:=2(1-c_j)$.
Fix any $\vec z\in\{0,1\}^n$.
For each $j\in[m_\psi]$, define
\[
    w_j(\vec z)
    :=
    \sum_{r\in P_j} z_{\iota(j,r)}
    +
    \sum_{r\in N_j} (1-z_{\iota(j,r)}).
\]
This is exactly the number of literals of $C_j$ that evaluate to true under $\vec z$.
By the definition of $\mathbf{C}$,
\begin{align*}
    (\mathbf{C}\vec z)_j
    &=
    2\sum_{r\in P_j} z_{\iota(j,r)}
    -
    2\sum_{r\in N_j} z_{\iota(j,r)}\\
    &=
    2\sum_{r\in P_j} z_{\iota(j,r)}
    +
    2\sum_{r\in N_j} (1-z_{\iota(j,r)})
    -
    2|N_j|\\
    &=
    2w_j(\vec z)-2c_j.
\end{align*}
Since $b_j=2(1-c_j)$, we obtain
\[
    (\mathbf{C}\vec z-\vec b)_j
    =
    (2w_j(\vec z)-2c_j)-2(1-c_j)
    =
    2\bigl(w_j(\vec z)-1\bigr).
\]
Hence
\[
    (\mathbf{C}\vec z-\vec b)_j=0
    \iff
    w_j(\vec z)=1
    \iff
    C_j \text{ is satisfied in the 1-in-3 sense under }\vec z.
\]
Since this holds for every $j\in[m_\psi]$, we get \Cref{eq:1in3-gadget-sat}.

If $\mathbf{C}\vec z\neq \vec b$, then by \Cref{eq:1in3-gadget-sat} the
assignment $\vec z$ does not satisfy $\psi$ in the 1-in-3 sense.
Hence there exists some clause index $j$ such that $w_j(\vec z)\in\{0,2,3\}$.
For this $j$,
\[
    \abs{(\mathbf{C}\vec z-\vec b)_j}
    =
    2\abs{w_j(\vec z)-1}
    \ge 2.
\]
Therefore $\norm{\mathbf{C}\vec z-\vec b}_\infty\ge 2$,
which proves \Cref{eq:1in3-gadget-gap}.
The construction is deterministic and polynomial-time.
\end{proof}

\subsection{For the Finite Norm}
We first prove that, for every finite $p \geq \log_2 3$, $(p, \alpha)$-$\LDLP$ in the $\ell_p$ norm is complete for the second level of the polynomial hierarchy.

\begin{proof}[Proof of Item 1 of \Cref{thm:ldlp-is-pi2-complete-lp}]
By \Cref{prop:pi2p}, it suffices to prove $\Pi_2^p$-hardness. We reduce from $\forall\exists$~1-in-3-SAT, which is $\Pi_2^p$-complete by \Cref{thm:sat}.

Let
\[
    \Phi=\forall \vec{u}\in\{0,1\}^{k}\ \exists \vec{v}\in\{0,1\}^{t}:\ \psi(\vec{u},\vec{v})
\]
be an instance of $\forall\exists$~1-in-3-SAT, where $\psi$ is a Boolean formula in which every clause contains exactly three literals. Let $m_\psi$ be the number of clauses of $\psi$, and set $n:=k+t$.

We first describe the output of the reduction.
Apply \Cref{lem:binary-rs-gadget-lp} with input $r=n$. We obtain a lattice basis $\mathbf{A}\in\mathbb Z^{N\times d}$, an integer vector $\vec{s}\in\mathbb Z^N$, a positive integer $\ell$, and the projection matrix $\mathbf{P}_n\in\mathbb Z^{n\times N}$ onto the first $n$ coordinates such that
\begin{align}
\label{eq:rs-cover}
    \{0,1\}^n
    &\subseteq
    \{\mathbf{P}_n\vec{x}:\vec{x}\in (\vec{s}+\mathcal{L}(\mathbf{A}))\cap \mathcal{B}_p^N(\alpha\ell)\},\\
\label{eq:rs-binary}
    (\vec{s}+\mathcal{L}(\mathbf{A}))\cap \mathcal{B}_p^N(\alpha\ell)
    &\subseteq
    \{0,1\}^N,
\end{align}
and $\lambda_1^{(p)}(\mathcal{L}(\mathbf{A}))\ge \ell$.
Apply \Cref{lem:1in3-clause-gadget} to $\psi$. We obtain $\widetilde{\mathbf C}\in\mathbb Z^{m_\psi\times n}$ and $\vec b\in\mathbb Z^{m_\psi}$ such that for every $\vec z\in\{0,1\}^n$,
\begin{align}
\label{eq:clause-gadget-sat}
    \widetilde{\mathbf C}\vec z=\vec b
    &\iff
    \vec z \text{ satisfies } \psi \text{ in the 1-in-3 sense},\\
\label{eq:clause-gadget-gap}
    \widetilde{\mathbf C}\vec z\neq \vec b
    &\Longrightarrow
    \norm{\widetilde{\mathbf C}\vec z-\vec b}_\infty\ge 2.
\end{align}
Define $\mathbf{C}:=\widetilde{\mathbf C}\mathbf{P}_n\in\mathbb Z^{m_\psi\times N}$, let $\mathbf{P}_k\in\mathbb Z^{k\times N}$ be the projection onto the first $k$ coordinates, and set $M:=\ell+1$. We output the tuple $(\mathbf{A}',\vec{s}',\ell',\mathbf{T}')$ defined by
\[
    \mathbf{A}':=
    \begin{pmatrix}
        M\mathbf{C}\mathbf{A}\\
        \mathbf{A}
    \end{pmatrix}
    \in\mathbb Z^{(m_\psi+N)\times d},
    \qquad
    \vec{s}':=
    \begin{pmatrix}
        M(\mathbf{C}\vec{s}-\vec b)\\
        \vec{s}
    \end{pmatrix}
    \in\mathbb Z^{m_\psi+N},
\]
\[
    \ell':=\ell,
    \qquad
    \mathbf{T}':=
    \begin{pmatrix}
        \mathbf{O}_{k\times m_\psi} & \mathbf{P}_k
    \end{pmatrix}
    \in\mathbb Z^{k\times (m_\psi+N)}.
\]
This map is computable deterministically in polynomial time.

We verify Item~\ref{item:ldlp-min-dist} of \Cref{def:ldlp}. Since $\mathbf{A}$ has column rank $d$, the matrix $\mathbf{A}'$ also has column rank $d$. For every nonzero $\vec z\in\mathbb Z^d\setminus\{\vec 0\}$,
\[
    \norm{\mathbf{A}'\vec z}_p^p
    =
    M^p\norm{\mathbf{C}\mathbf{A}\vec z}_p^p+\norm{\mathbf{A}\vec z}_p^p
    \ge
    \norm{\mathbf{A}\vec z}_p^p
    \ge
    \ell^p
    =
    (\ell')^p.
\]
Hence $\lambda_1^{(p)}(\mathcal{L}(\mathbf{A}'))\ge \ell'$.

We next prove completeness. Assume that $\Phi$ is true. Fix an arbitrary vector $\vec u\in\{0,1\}^k$. Since $\Phi$ is true, there exists $\vec v\in\{0,1\}^t$ such that the assignment $\vec a:=(\vec u,\vec v)\in\{0,1\}^n$ satisfies $\psi$ in the 1-in-3 sense. By \Cref{eq:rs-cover}, there exists $\vec x\in (\vec{s}+\mathcal{L}(\mathbf{A}))\cap \mathcal{B}_p^N(\alpha\ell)$ such that $\mathbf{P}_n\vec x=\vec a$. By \Cref{eq:rs-binary}, we have $\vec x\in\{0,1\}^N$. Therefore
\[
    \mathbf{C}\vec x
    =
    \widetilde{\mathbf C}\mathbf P_n\vec x
    =
    \widetilde{\mathbf C}\vec a
    =
    \vec b,
\]
where the last equality follows from \Cref{eq:clause-gadget-sat}.
Choose $\vec z\in\mathbb Z^d$ with $\vec x=\vec{s}+\mathbf{A}\vec z$, and define
\[
    \vec w:=\vec s'+\mathbf A'\vec z
    =
    \begin{pmatrix}
        M(\mathbf C\vec x-\vec b)\\
        \vec x
    \end{pmatrix}
    =
    \begin{pmatrix}
        \vec 0\\
        \vec x
    \end{pmatrix}.
\]
Then $\vec w\in \vec s'+\mathcal L(\mathbf A')$ and $\norm{\vec w}_p=\norm{\vec x}_p\le \alpha\ell=\alpha\ell'$. Moreover, $\mathbf T'\vec w=\mathbf P_k\vec x=\vec u$. Since $\vec u\in\{0,1\}^k$ was arbitrary, we obtain
\[
    \{0,1\}^k
    \subseteq
    \{\mathbf T'\vec w:\vec w\in (\vec s'+\mathcal L(\mathbf A'))\cap \mathcal B_p^{m_\psi+N}(\alpha\ell')\}.
\]
Therefore, $(\mathbf A',\vec s',\ell',\mathbf T')$ is a YES instance of $(p, \alpha)$-$\LDLP$.

Finally, we prove soundness. Assume that $\Phi$ is false. Then there exists $\vec u^\star\in\{0,1\}^k$ such that for every $\vec v\in\{0,1\}^t$, the assignment $(\vec u^\star,\vec v)$ does not satisfy $\psi$ in the 1-in-3 sense.
We claim that
\[
    \vec u^\star\notin
    \{\mathbf T'\vec w:\vec w\in (\vec s'+\mathcal L(\mathbf A'))\cap \mathcal B_p^{m_\psi+N}(\alpha\ell')\}.
\]
Suppose toward a contradiction that there exists $\vec w\in (\vec s'+\mathcal L(\mathbf A'))\cap \mathcal B_p^{m_\psi+N}(\alpha\ell')$ such that $\mathbf T'\vec w=\vec u^\star$. Since $\vec w\in \vec s'+\mathcal L(\mathbf A')$, there exists $\vec z\in\mathbb Z^d$ such that
\[
    \vec w=\vec s'+\mathbf A'\vec z
    =
    \begin{pmatrix}
        M(\mathbf C\vec x-\vec b)\\
        \vec x
    \end{pmatrix},
    \qquad
    \vec x:=\vec{s}+\mathbf{A}\vec z.
\]
Because the lower block of $\vec w$ is $\vec x$, we have $\norm{\vec x}_p\le \norm{\vec w}_p\le \alpha\ell'=\alpha\ell$. Hence \Cref{eq:rs-binary} implies that $\vec x\in\{0,1\}^N$. Let $\vec a:=\mathbf P_n\vec x\in\{0,1\}^n$. Since $\mathbf P_k\vec x=\mathbf T'\vec w=\vec u^\star$, the first $k$ coordinates of $\vec a$ are exactly $\vec u^\star$. Therefore $\vec a$ does not satisfy $\psi$ in the 1-in-3 sense. By \Cref{eq:clause-gadget-sat}, we have $\widetilde{\mathbf C}\vec a\neq \vec b$. Hence \Cref{eq:clause-gadget-gap} gives $\norm{\widetilde{\mathbf C}\vec a-\vec b}_\infty\ge 2$. Since
\[
    \mathbf C\vec x-\vec b
    =
    \widetilde{\mathbf C}\mathbf P_n\vec x-\vec b
    =
    \widetilde{\mathbf C}\vec a-\vec b,
\]
we obtain $\norm{\mathbf C\vec x-\vec b}_\infty\ge 2$. Therefore there exists some clause index $j$ such that $\abs{(\mathbf C\vec x-\vec b)_j}\ge 2$. Since the $j$-th coordinate of the upper block of $\vec w$ equals $M(\mathbf C\vec x-\vec b)_j$, it follows that
\[
    \norm{\vec w}_p
    \ge
    M\abs{(\mathbf C\vec x-\vec b)_j}
    \ge
    2M
    >
    \ell
    \ge
    \alpha\ell
    =
    \alpha\ell',
\]
which contradicts $\norm{\vec w}_p\le \alpha\ell'$.
Thus no such $\vec w$ exists, and $(\mathbf A',\vec s',\ell',\mathbf T')$ is a NO instance of $(p, \alpha)$-$\LDLP$.

Hence the reduction is correct, and $(p, \alpha)$-$\LDLP$ is $\Pi_2^p$-hard. Together with \Cref{prop:pi2p}, this proves the theorem.
\end{proof}

We show that, for every $p \geq 1$, $\LDLP$ is $\NP$-hard and $\coNP$-hard.
We define a variant of the Closest Vector Problem (CVP), which is called $\GapCVP'$. This variant includes a promise that, in YES instances, there exist lattice vectors close to the target vector whose coefficient vectors are binary.  In NO instances, it is promised that all lattice vectors are far from any non-zero integer multiple of the target vector.
\begin{definition}[$\gamma$-$\GapCVP'$, \cite{journals/jcss/AroraBSS97}]
    \label{def:cvp}
    For $p \geq 1$ and $\gamma=\gamma(n)\geq 1$, the promise problem $\gamma$-approximate Closest Vector Problem ($\gamma$-$\GapCVP'_p$) is defined as follows. Instances are triples $(\mathbf{B}, \vec{t}, d)$, where $\mathbf{B} \in \mathbb{Z}^{m\times n}$ is a lattice basis of $\mathcal{L}(\mathbf{B})$, $\vec{t}\in \mathbb{Z}^m $ is a target vector, and $d>0$ is a distance threshold such that
    \begin{enumerate}
        \item $(\mathbf{B}, \vec{t}, d)$ is a YES instance if $\lVert\mathbf{B}\vec{x} - \vec{t}\rVert_p\leq d$ for some $\vec{x}\in \{0,1\}^n$;
        \item$(\mathbf{B}, \vec{t}, d)$ is a NO instance if dist$_p(\mathcal{L}(\mathbf{B}), w\vec{t}) > \gamma\cdot d$ for all $w\in\mathbb{Z}\backslash\{0\}$.
    \end{enumerate}
\end{definition}

The following theorem provides the \NP-hardness of $\gamma$-$\GapCVP'_p$.
\begin{theorem}[\cite{journals/jcss/AroraBSS97}]
For any $p\in [1,\infty)$ and constant $\gamma \geq 1$, $\gamma$-$\GapCVP'_p$ is \NP-hard.
\end{theorem}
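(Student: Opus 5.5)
The statement is the \NP-hardness of $\gamma$-$\GapCVP'_p$ for every finite $p\ge 1$ and constant $\gamma\ge 1$, attributed to Arora, Babai, Stern and Sweedyk. I would follow their reduction from Exact Set Cover (equivalently, exact cover by 3-sets, which is \NP-complete), together with a gap-amplification step in the style of label cover. I would first give the constant-gap version and then boost $\gamma$.

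\textbf{Step 1: base reduction.} Take an instance with universe $[u]$ and sets $S_1,\dots,S_n$. Let $\mathbf{B}\in\mathbb{Z}^{u\times n}$ be the incidence matrix, so column $j$ is the indicator vector of $S_j$, and let $\vec{t}=\mathbf{1}_u$.

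To control the number of chosen sets, append $n$ rows $L\cdot\mathbf{I}_n$ with target entries $0$. Any choice of $\vec{x}\in\{0,1\}^n$ then contributes exactly $L^p\,|\vec{x}|$ to $\|\mathbf{B}\vec{x}-\vec{t}\|_p^p$ from these rows.

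Append also a block $K\cdot(\mathbf{B}\vec{x}-\vec{t})$ with a large weight $K$. This forces an exact cover in any close vector.

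Set $d$ to be the norm achieved by an exact cover. For a YES instance with a cover of size $u/3$, a binary $\vec{x}$ then attains distance exactly $d$.

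\textbf{Step 2: soundness against all multiples $w\vec{t}$.} This is the delicate point of the primed variant.

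For $w\vec{t}$ with $w\neq 0$, a lattice vector at distance at most $\gamma d$ from it must exactly satisfy $\mathbf{B}\vec{x}=w\mathbf{1}$ on the heavy block. Otherwise it pays at least $K>\gamma d$.

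Exactness then means the sets chosen with integer multiplicities $\vec{x}$ cover every element with total weight $w$. Hence $\sum_j x_j|S_j| = wu$. With $|S_j|=3$ this gives $\sum_j|x_j|\ge |w|u/3\ge u/3$, so the light block still pays at least $L^p u/3$.

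To get a gap $\gamma$ rather than merely exactness, I would instead reduce from a gap version of Set Cover. This is where the hardness in ABSS really comes from: their label-cover-based construction, where in NO instances every cover, including fractional and integer-weighted ones, uses at least $\gamma^p$ times more sets.

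\textbf{Step 3: the main obstacle.} The main obstacle is ensuring that even integer multiplicity covers of $w\mathbf{1}$ are large in NO instances. Plain Set Cover hardness bounds only $0/1$ covers.

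The plan is to invoke the ABSS construction from Label Cover with the $\ell_p$ pseudo-cover argument. Any integer combination hitting $w\mathbf{1}$ exactly induces a labeling, namely the labels with nonzero coefficient. That labeling satisfies many constraints, so by soundness of Label Cover the number of nonzero coefficients is large. Since coefficients are nonzero integers, each contributes at least $L^p$.

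Because the theorem is quoted from prior work, in the paper it suffices to cite ABSS. I would also note that their NO condition is exactly stated for all nonzero multiples $w\vec{t}$, which gives the claim for every constant $\gamma$.
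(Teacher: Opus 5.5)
Your proposal ends where the paper does: the paper gives no proof of this statement and simply cites Arora--Babai--Stern--Sweedyk. Your Step~1 lattice (heavy block $K(\mathbf{B}\vec{x}-\vec{t})$ plus light rows $L\mathbf{I}_n$), together with the observation that exactly solving the heavy equations for $w\vec{t}$, $w\neq 0$, forces a large support, is the standard mechanism behind that result; two small corrections are that your remark about fractional covers is not true in general and not needed, and that the Step~3 ``obstacle'' goes away, because the support of any integer $\vec{x}$ with $\mathbf{B}\vec{x}=w\mathbf{1}$, $w\neq 0$, is already an ordinary $0/1$ set cover and $\|\vec{x}\|_p^p\ge|\mathrm{supp}(\vec{x})|$, so feeding a gap version of Exact Set Cover (YES: an exact cover of size $K_0$; NO: every cover has more than $\gamma^p K_0$ sets) into your Step~1 construction already suffices.
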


The following theorem is proved via a reduction from $\gamma$-$\GapCVP'_p$ to $(p, \alpha, \beta)$-$\CoeffLDLP$. This proof is very similar to the proof of the \NP-hardness of $\alpha$-Bounded Distance Decoding ($\alpha$-BDD) by Liu, Lyubashevsky and Micciancio \cite{Liu2006OnBD}.

\begin{theorem}
\label{lem:coff-ldlp-is-np-hard}
    For any $p \in [1, \infty)$, $\alpha > 1/2^{1/p}$ and $\beta \ge 0$, $(p, \alpha, \beta)$-$\CoeffLDLP$ is $\NP$-hard under deterministic polynomial-time many-one reductions.
\end{theorem}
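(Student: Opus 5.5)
We reduce from $\gamma$-$\GapCVP'_p$ with a large constant $\gamma$ (chosen after $\alpha,\beta$), imitating the Liu--Lyubashevsky--Micciancio embedding for BDD. Given $(\mathbf{B},\vec t,d)$ with $\mathbf{B}\in\mathbb{Z}^{m\times n}$, we scale by a large integer factor $c$ and build the lattice
\[
\mathbf{A}:=\begin{pmatrix} c\mathbf{B} & -c\vec t\\ \mathbf{I}_n & \vec 0\\ \vec 0^{T} & \eta\end{pmatrix},
\]
where the middle identity block makes short vectors with binary coefficients have a controlled contribution. The center $\vec s$ is chosen so that binary coefficient vectors $\vec x$ (with last coefficient $1$) give residual $(c(\mathbf{B}\vec x-\vec t),\,\vec x-\tfrac12\mathbf 1,\,\cdot)$. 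Concretely, we double everything so entries stay integral: use $2\mathbf{I}_n$ in the middle block and shift by $\mathbf 1$, so each binary coordinate contributes $|\pm1|^p=1$. We take $\mathbf{T}$ to be the zero map with $k=1$? That fails, since $\{0,1\}$ must be hit. Instead take $k=1$ and let $\mathbf{T}$ read off a coordinate that the YES case can set freely; the simplest choice is to append one extra free coefficient $y$ with its own gadget column $(2,\,\text{shift }1)$, so that $y\in\{0,1\}$ both cost $1$ in the $p$-th power, and set $\mathbf{T}$ to output $y$. (If this is awkward, one may alternatively take the hypercube dimension $k=0$ convention; we commit to the extra-coordinate trick.)

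For completeness, a binary $\vec x$ with $\|\mathbf{B}\vec x-\vec t\|_p\le d$ gives a coefficient vector with residual of $p$-th power at most $(2cd)^p+n+1$. We choose the threshold $\ell$ with $(\alpha\ell)^p$ at least this quantity, and pick $c$ so that $(2cd)^p$ is a tiny fraction of $n+1$. Since $\alpha^p>1/2$, we can set $\ell^p\approx (n+1)/\alpha^p<2(n+1)$.

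For the promise $\lambda_1^{(p)}(\mathcal{L}(\mathbf{A}))\ge\ell$, consider a nonzero lattice vector with coefficients $(\vec z,w)$. If $w\neq0$, the top block is $c(\mathbf{B}\vec z-w\vec t)$, which has norm greater than $c\gamma d$ by the NO-type promise. This is only guaranteed in NO instances, so we handle it via the standard trick: we make the reduction output valid promise instances only in the NO case, and rely on the fact that the YES case must satisfy the promise too. I expect this to be the main obstacle. The LLM-style fix is to make the $\eta$ column large, so that $w\neq0$ costs at least $\eta\ge\ell$. We then use the CVP gap only in the soundness step: a short vector near $\vec s$ forces $w=\pm1$, and hence the top block norm is below $\gamma$-dist, which contradicts NO. If $w=0$ and $\vec z\neq0$, the middle block $2\vec z$ has $p$-th power $2^p\|\vec z\|_p^p$, which is not necessarily $\ge 2(n+1)$. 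To fix this, we increase the middle-block weight using $n+1$ copies, or better, replace $2\mathbf{I}$ by the Wan gadget from \Cref{lem:binary-rs-gadget-lp}, whose $\lambda_1$ is at least $\ell$ while still admitting all binary patterns within $\alpha\ell$.

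So the refined plan is as follows. We take Wan's lattice $(\mathbf{A}_0,\vec s_0,\ell)$ for $r=n+1$, whose short coset vectors are binary and cover all of $\{0,1\}^{n+1}$. We glue on a top block $c(\mathbf{B}\mathbf{P}_n\vec v-\vec t)$ as in the finite-norm proof of \Cref{thm:ldlp-is-pi2-complete-lp}, with $\mathbf{T}$ projecting onto coordinate $n+1$. The promise $\lambda_1\ge\ell$ is inherited exactly as in that proof. In the YES case, both values of the free bit are realized, with the top block of norm at most $cd$, which is absorbed by choosing the Wan slack parameter $\alpha_0<\alpha$. In the NO case, every short coset vector is binary, so $\|\mathbf{B}\vec x-\vec t\|_p>\gamma d$ and the top block exceeds $c\gamma d>(\alpha+\beta)\ell$ once $c$ is chosen accordingly. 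This only needs the binary-coefficient CVP hardness, not the multiple-of-$\vec t$ clause. The delicate step is the radius bookkeeping: the YES case needs $(cd)^p+(\alpha_0\ell)^p\le(\alpha\ell)^p$, while the NO case needs $c\gamma d>(\alpha+\beta)\ell$. Both hold by taking $\gamma$ large relative to $(\alpha+\beta)/(\alpha^p-\alpha_0^p)^{1/p}$. Finally, we pass from the coset form to the coefficient form via \Cref{prop:coset-to-coefficient}.
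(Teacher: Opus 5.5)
Your refined construction has the same skeleton as the paper's proof. Both use Wan's gadget at a slack parameter $\alpha_0\in(2^{-1/p},\min\{\alpha,1\})$. Both add a block carrying the CVP residual, weighted so that its YES contribution fits into $(\alpha^p-\alpha_0^p)^{1/p}\ell$ while its NO contribution exceeds $(\alpha+\beta)\ell$. Both arrive at the requirement $\gamma>(\alpha+\beta)/(\alpha^p-\alpha_0^p)^{1/p}$. The paper builds the coefficient instance directly: a homogenizing coefficient $w$ makes the map $\vec z\mapsto\mathbf T(\vec s+\mathbf A\vec z)$ linear, a $\Delta$-row protects the promise, and the hypercube dimension is $(n+L)^2-n$. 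Your detour through the coset form with $k=1$ is acceptable for the statement as given.

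However, your soundness step is justified incorrectly. \Cref{lem:binary-rs-gadget-lp} asserts binarity only for vectors of $(\vec s_0+\mathcal L(\mathbf A_0))\cap\mathcal B_p(\alpha_0\ell)$, and only for $p\ge\log_2 3$. The theorem covers all $p\in[1,\infty)$. Moreover, in the NO case you must exclude coset vectors of norm up to $(\alpha+\beta)\ell>\alpha_0\ell$, whose lower block need not be binary. The step is also unnecessary. The NO condition of $\gamma$-$\GapCVP'_p$ with multiplier $1$ gives $\|\mathbf B\vec y-\vec t\|_p>\gamma d$ for \emph{every} $\vec y\in\mathbb Z^n$. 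Since $\mathbf P_n\vec x\in\mathbb Z^n$ for every $\vec x\in\vec s_0+\mathcal L(\mathbf A_0)$, every coset vector of your instance, short or not, has norm $>c\gamma d>(\alpha+\beta)\ell$. This is how the paper argues. Binarity of the CVP solution matters only for completeness, because the gadget realizes only Boolean patterns.

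Two bookkeeping points also need repair.

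First, your constraints force $c\in\bigl((\alpha+\beta)\ell/(\gamma d),\,(\alpha^p-\alpha_0^p)^{1/p}\ell/d\bigr]$. Since $\ell=r^{O(1)}$ while $d$ can be exponentially large, $c$ is in general a rational smaller than $1$. Your coset instance is then not integral, so it is not a legal input to \Cref{prop:coset-to-coefficient}. You cannot clear denominators in the coset form, because multiplying the lower block by an integer $b$ makes $\mathbf T$ output $b\,x_{n+1}\in\{0,b\}$. Instead, run the construction of \Cref{prop:coset-to-coefficient} on the rational instance; its argument only needs $\mathbf T\mathbf A$ and $\mathbf T\vec s$ to be integral, and here they are read off the integral lower block. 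Then clear denominators in coefficient form, where scaling $(\mathbf A,\vec s,\ell)$ by a common integer leaves $\mathbf T$ and all conditions unchanged. This is the role of $\zeta$ and $D$ in the paper.

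Second, \Cref{prop:coset-to-coefficient} as stated only lands in $(p,\alpha,0)$-$\CoeffLDLP$, whose NO instances fail only at radius $\alpha\ell$. That gives nothing for $\beta>0$, since the $\beta>0$ problem has strictly fewer NO instances. You need one more observation. In your NO instances \emph{every} coset vector has norm $>(\alpha+\beta)\ell$, and the residual produced by that construction has $\vec s+\mathbf A\vec z$ as its first block. Hence no coefficient vector achieves residual norm at most $(\alpha+\beta)\ell'$.
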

\begin{proof}
Since $\alpha > 2^{-1/p}$, fix a constant
$\sigma \in (2^{-1/p}, \min\{\alpha,1\})$, and choose a constant
$\gamma \ge 1$ such that
\[
    \gamma > \frac{\alpha+\beta}{(\alpha^p-\sigma^p)^{1/p}} .
\]
We reduce from $\gamma$-$\GapCVP'_p$.
Let $(\mathbf{B}, \vec{t}, d)$ be a $\gamma$-$\GapCVP'_p$ instance of rank $n$.
After clearing denominators in $d$, if necessary, we may assume that
$d \in \mathbb{Z}_{>0}$; this preserves the YES/NO status of the instance and changes the
bit length only polynomially. Let $L$ denote the bit length of the resulting instance.

Choose and hardwire rational constants $\theta$ and $\kappa$ such that
\[
    \frac{1}{(\alpha^p-\sigma^p)^{1/p}}
    < \theta
    < \frac{\gamma}{\alpha+\beta},
    \qquad
    \kappa > \alpha+1 .
\]
Such a rational $\theta$ exists by the choice of $\gamma$.
The reduction algorithm computes a $(p,\sigma,k)$-locally dense lattice
$(\mathbf{A}, \vec{s}, \ell, \mathbf{T})$ with $k := (n+L)^2$ using the algorithm from
\Cref{lem:binary-rs-gadget-lp}. Let $r$ denote the rank of $\mathbf{A}$.
Define
\[
    \zeta := \frac{d}{\ell}\theta,
    \qquad
    \Delta := \lceil \kappa \zeta \ell \rceil
             = \lceil \kappa \theta d \rceil .
\]
We first define the following rational tuple:
\begin{align*}
    \widehat{\mathbf{A}} &:= \begin{pmatrix}
        \mathbf{B}\mathbf{P}\mathbf{T}\mathbf{A} &  \mathbf{B}\mathbf{P}\mathbf{T}\vec{s} \\
        \zeta \mathbf{A} & \zeta \vec{s} \\
        0 & \Delta
    \end{pmatrix},
    \\
   \widehat{\vec{s}} &:=
    \begin{pmatrix}
        \vec{t} \\
        \vec{0} \\
        \Delta
    \end{pmatrix},
    \qquad
    \widehat{\ell} := \zeta \ell,
    \\
    \widehat{\mathbf{T}} &:=
    \begin{pmatrix}
        \mathbf{Q}\mathbf{T}\mathbf{A}
        & \mathbf{Q}\mathbf{T}\vec{s}
    \end{pmatrix},
\end{align*}
where $\mathbf{I}_{k'}$ is the $k'\times k'$ identity matrix,
$\mathbf{P} := (\mathbf{I}_n \mid \mathbf{O}_{n\times k'}) \in \mathbb{Z}^{n\times k}$
is the projection onto the first $n$ coordinates,
$\mathbf{Q} := (\mathbf{O}_{k'\times n}\mid \mathbf{I}_{k'}) \in \mathbb{Z}^{k'\times k}$
is the projection onto the last $k'$ coordinates, and $k' := k-n$.
Then, from the definitions of $\theta$ and $\zeta$, we obtain the inequality
\begin{align}
    \label{eq:1}
    \frac{d}{\ell(\alpha^p-\sigma^p)^{1/p}} < \zeta < \frac{\gamma d}{(\alpha+\beta)\ell}.
\end{align}

Since $\theta$, $d$, and $\ell$ are rational or integral as above, $\zeta$ is rational.
Choose a positive integer $D$ such that $D\zeta \in \mathbb{Z}$ and
$D\zeta\ell \in \mathbb{Z}$. The reduction outputs
\[
    \mathbf{A}' := D\widehat{\mathbf{A}},
    \qquad
    \vec{s}' := D\widehat{\vec{s}},
    \qquad
    \ell' := D\widehat{\ell},
    \qquad
    \mathbf{T}' := \widehat{\mathbf{T}} .
\]
Then all entries of $\mathbf{A}'$ and $\vec{s}'$ are integral. Moreover, since
$\theta$ and $\kappa$ are fixed constants and the tuple
$(\mathbf{A},\vec{s},\ell,\mathbf{T})$ has polynomial bit length, the integers
$\Delta$ and $D$ also have polynomial bit length. Hence the construction runs in
deterministic polynomial time.

It remains to verify correctness. We first record that scaling by $D$ preserves all relevant
conditions. For every $\vec{u}\in\mathbb{Z}^{r+1}$,
\[
    \|\mathbf{A}'\vec{u}-\vec{s}'\|_p
    =
    D\|\widehat{\mathbf{A}}\vec{u}-\widehat{\vec{s}}\|_p,
    \qquad
    \mathbf{T}'\vec{u}=\widehat{\mathbf{T}}\vec{u},
\]
and
\[
    \lambda_1^{(p)}(\mathcal{L}(\mathbf{A}'))
    =
    D\lambda_1^{(p)}(\mathcal{L}(\widehat{\mathbf{A}})).
\]
Therefore it suffices to prove the promise, completeness, and soundness for the rational
tuple $(\widehat{\mathbf{A}},\widehat{\vec{s}},\widehat{\ell},\widehat{\mathbf{T}})$.

First, we show that
$(\widehat{\mathbf{A}},\widehat{\vec{s}},\widehat{\ell},\widehat{\mathbf{T}})$ satisfies the
promise of $(p,\alpha,\beta)$-$\CoeffLDLP$.
The matrix $\widehat{\mathbf{A}}$ has column rank $r+1$: if
$\widehat{\mathbf{A}}(\vec{z},w)^T=0$, then the last block gives $\Delta w=0$, hence
$w=0$, and then the middle block gives $\zeta\mathbf{A}\vec{z}=0$, so $\vec{z}=0$ because
$\mathbf{A}$ has column rank $r$.

We will show
$\lambda_1^{(p)}(\mathcal{L}(\widehat{\mathbf{A}})) \ge \widehat{\ell}$.
Consider any non-zero integer vector $(\vec{z}, w) \neq (\vec{0}, 0)$. If $w=0$, then since
$\vec{z} \neq \vec{0}$, we have
\begin{align*}
    \lVert \widehat{\mathbf{A}}(\vec{z}, 0)^T \rVert_p^p
    \geq
    \zeta^p \lVert \mathbf{A}\vec{z} \rVert_p^p
    \ge
    (\zeta \ell)^p
    =
    \widehat{\ell}^{\,p}.
\end{align*}
Otherwise,
\begin{align*}
    \lVert \widehat{\mathbf{A}}(\vec{z}, w)^T \rVert_p^p
    \geq
    |\Delta w|^p
    \ge
    \Delta^p
    \ge
    (\zeta\ell)^p
    =
    \widehat{\ell}^{\,p},
\end{align*}
where we used $\Delta=\lceil \kappa\zeta\ell\rceil$ and $\kappa>1$.

Next, we show completeness.
Assume that $(\mathbf{B}, \vec{t}, d)$ is a YES instance of $\gamma$-$\GapCVP'_p$.
Then, there exists a vector $\vec{x} \in \{0,1\}^n$ such that
$\lVert \mathbf{B}\vec{x} - \vec{t} \rVert_p \le d$.
By \Cref{item:conv-hypercube} of \Cref{def:ldl}, for any binary vector
$\vec{y} \in \{0,1\}^{k'}$, there exists a vector
$\vec{v} = \vec{s} + \mathbf{A}\vec{z} \in \vec{s} + \mathcal{L}(\mathbf{A})$
for some $\vec{z} \in \mathbb{Z}^r$ such that
$\lVert \vec{v} \rVert_p^p \le \sigma^p \ell^p$ and
$\mathbf{T}\vec{v} = (\vec{x}, \vec{y})^T$.
For such a vector $\vec{z}$, we have
\begin{align*}
\begin{split}
    \lVert \widehat{\mathbf{A}}(\vec{z},1)^T - \widehat{\vec{s}} \rVert_p^p
    &= \lVert \mathbf{B}\mathbf{P}\mathbf{T}\mathbf{A}\vec{z}
        + \mathbf{B}\mathbf{P}\mathbf{T}\vec{s}
        - \vec{t} \rVert_p^p
        + \zeta^p \lVert \mathbf{A}\vec{z} + \vec{s} \rVert_p^p \\
    &= \lVert \mathbf{B}\mathbf{P}\mathbf{T}(\mathbf{A}\vec{z} + \vec{s})
        - \vec{t} \rVert_p^p
        + \zeta^p \lVert \mathbf{A}\vec{z} + \vec{s} \rVert_p^p \\
    &= \lVert \mathbf{B}\mathbf{P}(\vec{x},\vec{y})^T - \vec{t} \rVert_p^p
        + \zeta^p \lVert \vec{v} \rVert_p^p \\
    &= \lVert \mathbf{B}\vec{x} - \vec{t} \rVert_p^p
        + \zeta^p \lVert \vec{v} \rVert_p^p \\
    &\le d^p + \sigma^p \zeta^p \ell^p.
\end{split}
\end{align*}
Using the first inequality in \Cref{eq:1}, we obtain
\begin{align*}
    \lVert \widehat{\mathbf{A}}(\vec{z},1)^T - \widehat{\vec{s}} \rVert_p^p
    \le
    d^p + \sigma^p \zeta^p \ell^p
    <
    \alpha^p \zeta^p \ell^p
    =
    \alpha^p \widehat{\ell}^{\,p}.
\end{align*}
Moreover,
\begin{align*}
    \widehat{\mathbf{T}}(\vec{z},1)^T
    &= \mathbf{Q}\mathbf{T}(\mathbf{A}\vec{z} + \vec{s}) \\
    &= \mathbf{Q}\mathbf{T}\vec{v} \\
    &= \vec{y} \in \{0,1\}^{k'}.
\end{align*}
This implies that
\begin{align*}
    \{0,1\}^{k'} \subseteq
    \{\widehat{\mathbf{T}}\vec{u} :
        \vec{u} \in \mathbb{Z}^{r+1},\,
        \lVert \widehat{\mathbf{A}}\vec{u} - \widehat{\vec{s}} \rVert_p
        \le \alpha \widehat{\ell} \}.
\end{align*}

Finally, we show soundness.
Assume that $(\mathbf{B},\vec{t},d)$ is a NO instance of $\gamma$-$\GapCVP'_p$.
Then, by \Cref{def:cvp}, taking the nonzero multiplier $1$, for every
$\vec{x}\in \mathbb{Z}^n$ it holds that
$\lVert\mathbf{B}\vec{x} - \vec{t}\rVert_p > \gamma d$.
For all $(\vec{z}, w)^T \in \mathbb{Z}^{r+1}$, the vector
$\mathbf{P}\mathbf{T}(\mathbf{A}\vec{z}+w\vec{s})$ belongs to $\mathbb{Z}^n$, and hence
\begin{align*}
    \lVert\widehat{\mathbf{A}}(\vec{z}, w)^T - \widehat{\vec{s}}\rVert^p_p
    \geq
    \lVert\mathbf{B}\mathbf{P}\mathbf{T}(\mathbf{A}\vec{z} + w\vec{s})
        - \vec{t}\rVert^p_p
    >
    \gamma^p d^p.
\end{align*}
Since $\zeta < \gamma d/\left((\alpha + \beta)\ell\right)$ from \Cref{eq:1}, we have
\begin{align*}
    \gamma^p d^p
    >
    (\alpha + \beta)^p \zeta^p \ell^p
    =
    (\alpha + \beta)^p\widehat{\ell}^{\,p}.
\end{align*}
Therefore, there is no integer vector
$\vec{u}:=(\vec{z}, w)^T \in \mathbb{Z}^{r+1}$ such that
$\lVert \widehat{\mathbf{A}}\vec{u} - \widehat{\vec{s}}\rVert_p
\leq (\alpha+\beta)\widehat{\ell}$.
It follows that
\begin{align*}
    \{0,1\}^{k'} \nsubseteq
    \{\widehat{\mathbf{T}}\vec{u} :
        \vec{u} \in \mathbb{Z}^{r+1},\,
        \lVert\widehat{\mathbf{A}}\vec{u} - \widehat{\vec{s}}\rVert_p
        \leq (\alpha+\beta)\widehat{\ell}\}.
\end{align*}

By the scaling observation above, the actual integer output
$(\mathbf{A}',\vec{s}',\ell',\mathbf{T}')$ satisfies the same completeness and soundness
conditions with threshold $\ell'=D\widehat{\ell}$, and also satisfies the promise
$\lambda_1^{(p)}(\mathcal{L}(\mathbf{A}'))\ge \ell'$.
This proves the correctness of the reduction.
\end{proof}

\begin{theorem}
\label{thm:ldlp-is-np-hard}
For any $p \in [1, \infty)$, $\alpha > 1/2^{1/p}$ and $\beta > 0$, $(p, \alpha + \beta)$-$\LDLP$ is $\NP$-hard under deterministic polynomial-time many-one reductions.
\end{theorem}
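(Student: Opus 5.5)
The plan is to compose reductions already established in this section. By \Cref{lem:coff-ldlp-is-np-hard}, for the given $p\in[1,\infty)$, $\alpha>2^{-1/p}$ and $\beta>0$, the promise problem $(p,\alpha,\beta)$-$\CoeffLDLP$ is $\NP$-hard under deterministic polynomial-time many-one reductions. \Cref{prop:coefficient-to-coset} (which needs $\beta>0$, as assumed) gives a deterministic polynomial-time many-one reduction from $(p,\alpha,\beta)$-$\CoeffLDLP$ to $(p,\alpha+\beta)$-$\CosetLDLP$. Finally, the trivial reduction from $(p,\alpha+\beta)$-$\CosetLDLP$ to $(p,\alpha+\beta)$-$\LDLP$ is the identity map on instances. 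Composing these three maps yields a reduction from an $\NP$-hard problem to $(p,\alpha+\beta)$-$\LDLP$.

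The only point that needs checking is that the identity map really is a reduction from the promise problem $\CosetLDLP$ to the (total) decision problem $\LDLP$.

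\begin{itemize}
\item \textbf{YES instances.} Every instance of $(p,\alpha+\beta)$-$\CosetLDLP$ satisfies $\lambda_1^{(p)}(\mathcal{L}(\mathbf{A}))\ge\ell$ by the promise. A YES instance additionally satisfies the covering condition with radius $(\alpha+\beta)\ell$. Hence both items of \Cref{def:ldlp} hold, and the instance is a YES instance of $(p,\alpha+\beta)$-$\LDLP$.
\item \textbf{NO instances.} A NO instance violates Item~\ref{item:ldlp-conv-hypercube}, so it is a NO instance of $\LDLP$.
\end{itemize}

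Since an $\NP$-hard promise problem reduces to the composed chain, $\NP$-hardness of the total problem $(p,\alpha+\beta)$-$\LDLP$ follows: any $L\in\NP$ reduces to $(p,\alpha,\beta)$-$\CoeffLDLP$ with images lying in the promise, and these map to correct answers.

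I expect no real obstacle. The delicate point is that the instances output by the reduction of \Cref{prop:coefficient-to-coset} satisfy the minimum-distance condition unconditionally, not only under the promise. This was verified in that proof from the promise on the input: the input instances come from \Cref{lem:coff-ldlp-is-np-hard}, which guarantees $\lambda_1^{(p)}\ge\ell$ on all its outputs. So Item~\ref{item:ldlp-min-dist} holds for every instance in the image, and the composition is sound.
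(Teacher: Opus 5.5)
Your proposal is correct and matches the paper's argument: the paper proves this theorem by composing the $\NP$-hardness of $(p,\alpha,\beta)$-$\CoeffLDLP$ (\Cref{lem:coff-ldlp-is-np-hard}) with the reduction of \Cref{prop:coefficient-to-coset} and the trivial reduction from $(p,\alpha+\beta)$-$\CosetLDLP$ to $(p,\alpha+\beta)$-$\LDLP$. The one detail the paper leaves implicit, and which you check explicitly, is that the promise $\lambda_1^{(p)}\ge\ell$ carries through the chain, so that YES images satisfy Item~\ref{item:ldlp-min-dist} of $\LDLP$.
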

\begin{proof}
Therefore, we obtain \Cref{thm:ldlp-is-np-hard} by combining \Cref{prop:coefficient-to-coset} and \Cref{lem:coff-ldlp-is-np-hard}.
\end{proof}

We define the promise problem version of the Shortest Vector Problem, which is defined as follows:
\begin{definition}[$\gamma$-$\GapSVP_p$, \cite{GOLDREICH2000540}]
    For $p \geq 1$ and $\gamma=\gamma(n)\geq 1$, the promise problem $\gamma$-approximate Shortest Vector Problem ($\gamma$-GapSVP$_p$) is defined as follows. Instances are pairs $(\mathbf{B}, d)$, where $\mathbf{B} \in \mathbb{Z}^{m\times n}$ is a lattice basis of $\mathcal{L}(\mathbf{B})$, and $d > 0$ is a distance threshold such that
    \begin{enumerate}
        \item $(\mathbf{B}, d)$ is a YES instance if $\lambda^{(p)}_1(\mathcal{L}(\mathbf{B}))\leq d$;
        \item$(\mathbf{B}, d)$ is a NO instance if $\lambda^{(p)}_1(\mathcal{L}(\mathbf{B}))> \gamma\cdot d$.
    \end{enumerate}
\end{definition}

Wan \cite{wan2026} proved the deterministic $\NP$-hardness of $\GapSVP_p$.
\begin{theorem}[Theorem 7.4 of \cite{wan2026}]
For any $p\in [1,\infty)$ and constant $\gamma \in [1, 2^{1/p})$, $\gamma$-$\GapSVP_p$ is \NP-hard.
\end{theorem}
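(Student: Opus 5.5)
This theorem is cited from Wan (Theorem~7.4 of \cite{wan2026}), so I would not rebuild it from scratch. I would rederive it in the Micciancio style, using the deterministic locally dense lattice of \Cref{lem:binary-rs-gadget-lp} and the same reduction that proves \Cref{lem:coff-ldlp-is-np-hard}. The source problem is $\gamma'$-$\GapCVP'_p$ (\NP-hard by \cite{journals/jcss/AroraBSS97}), and the target is $\gamma$-$\GapSVP_p$.

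\textbf{Setup.} Fix $\gamma<2^{1/p}$ and choose $\sigma$ with $2^{-1/p}<\sigma<1/\gamma$. Given $(\mathbf{B},\vec t,d)$ of rank $n$, I would apply \Cref{lem:binary-rs-gadget-lp} with parameter $\sigma$ and $r=n$. This gives $(\mathbf{A},\vec s,\ell,\mathbf{P}_n)$ such that every $\vec x\in\{0,1\}^n$ equals $\mathbf{P}_n\vec v$ for some $\vec v\in\vec s+\mathcal{L}(\mathbf{A})$ with $\|\vec v\|_p\le\sigma\ell$. I would then output the basis
\[
\mathbf{A}'=\begin{pmatrix}\mathbf{B}\mathbf{P}_n\mathbf{A} & \mathbf{B}\mathbf{P}_n\vec s-\vec t\\ \zeta\mathbf{A} & \zeta\vec s\end{pmatrix},
\]
with $\zeta$ chosen so that $d^p+\sigma^p\zeta^p\ell^p\le(\zeta\ell)^p/\gamma^p$. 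This is possible since $\sigma\gamma<1$: take $\zeta\ell$ to be a large constant multiple of $d$. The threshold is $d'=\zeta\ell/\gamma$, and denominators are cleared as in the proof of \Cref{lem:coff-ldlp-is-np-hard}.

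\textbf{Completeness.} In a YES instance there is a binary $\vec x$ with $\|\mathbf{B}\vec x-\vec t\|_p\le d$. Pick the matching $\vec v=\vec s+\mathbf{A}\vec z$ with $\mathbf{P}_n\vec v=\vec x$, and use coefficient vector $(\vec z,1)$. The resulting lattice vector is nonzero and has $p$-th power norm at most $d^p+\zeta^p\sigma^p\ell^p\le d'^p$.

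\textbf{Soundness.} Take any nonzero coefficient vector $(\vec z,w)$. If $w=0$, the lower block alone has norm at least $\zeta\ell=\gamma d'$, because $\lambda_1^{(p)}(\mathcal{L}(\mathbf{A}))\ge\ell$. If $w\ne0$, the upper block is $\mathbf{B}\vec y-w\vec t$ with $\vec y=\mathbf{P}_n(\mathbf{A}\vec z+w\vec s)\in\mathbb{Z}^n$. The NO promise of $\GapCVP'$ covers every nonzero multiple of $\vec t$, so this block has norm greater than $\gamma' d$. It therefore suffices to pick $\gamma'$ with $\gamma' d\ge\zeta\ell$, i.e.\ $\gamma'\ge\zeta\ell/d$, which is a constant; $\GapCVP'$ is \NP-hard for every constant $\gamma'$. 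The lower block also lies in $w\vec s+\mathcal{L}(\mathbf{A})$, but that fact is not needed here.

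\textbf{Main obstacle.} The delicate step is choosing the constants consistently. $\zeta\ell/d$ must be large enough for completeness, namely $(d/\zeta\ell)^p\le\gamma^{-p}-\sigma^p$, which needs $\sigma<1/\gamma$. That in turn forces $\gamma<1/\sigma<2^{1/p}$. Once $\zeta\ell/d$ is fixed, $\gamma'$ is set to that constant for soundness.
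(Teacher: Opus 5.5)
The paper gives no proof of this statement; it imports it directly from Wan. Your rederivation is the standard Micciancio-style embedding of $\GapCVP'_p$ into $\GapSVP_p$, instantiated with the deterministic gadget of \Cref{lem:binary-rs-gadget-lp}. It is essentially the construction in the proof of \Cref{lem:coff-ldlp-is-np-hard}, but homogenized: the target becomes the last basis column, so the $\Delta$ row is not needed. In exchange, the case $w\neq 0$ must use the $\GapCVP'$ promise for every nonzero multiple $w\vec t$, which you correctly do. Your choice of constants ($\sigma<1/\gamma$, $\zeta\ell=\theta d$, then $\gamma'=\theta$) is consistent, and the argument is correct up to two routine repairs.

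First, the strictness of the NO condition. In the case $w=0$ you only obtain $\|\mathbf{A}'(\vec z,0)^T\|_p\ge\zeta\ell=\gamma d'$. A NO instance of $\gamma$-$\GapSVP_p$, however, requires $\lambda_1^{(p)}(\mathcal{L}(\mathbf{A}'))>\gamma d'$ strictly, and \Cref{lem:binary-rs-gadget-lp} does not rule out $\lambda_1^{(p)}(\mathcal{L}(\mathbf{A}))=\ell$. To fix this, run your argument with a rational $\gamma_1\in(\gamma,1/\sigma)$ in place of $\gamma$, so that $d'=\zeta\ell/\gamma_1$ is also rational. The non-strict bound $\ge\gamma_1 d'$ then gives the required strict bound $>\gamma d'$.

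Second, the degenerate target. You implicitly use that $\mathbf{A}'$ has full column rank, and this needs $\vec t\neq\vec 0$: from $\mathbf{A}'(\vec z,w)^T=\vec 0$ you only get $w\vec t=\vec 0$. In Wan's gadget $\vec s\in\mathrm{span}_{\mathbb{R}}(\mathbf{A})$, because $\mathbf{A}$ consists of stacked copies of a nonsingular matrix. So for $\vec t=\vec 0$ your last column really is dependent and $\mathbf{A}'$ is not a basis. Since $\vec t=\vec 0$ can never be a NO instance of $\GapCVP'$ (as $w\vec t=\vec 0\in\mathcal{L}(\mathbf{B})$), the reduction can simply map it to a fixed YES instance.
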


We show that $(p, \alpha)$-LDLP is $\coNP$-hard via a reduction from $\coGapSVP_p$ for all $p\geq 1$, which is obtained by swapping the YES and NO instances of $\gamma$-$\GapSVP_p$ for $\gamma = 1$. The problem $\coGapSVP_p$ is defined as follows: given an input $(\mathbf{B}, d)$, decide whether $\lambda_1^{(p)}(\mathcal{L}(\mathbf{B})) \ge d$ or $\lambda_1^{(p)}(\mathcal{L}(\mathbf{B})) < d$. Note that, by $\NP$-hardness of $\GapSVP_p$, $\coGapSVP_p$ is $\coNP$-hard.

\begin{theorem}
\label{thm:ldlp-is-conp-hard}
For any $p \in [1,\infty)$ and $\alpha > 1/2^{1/p}$, $(p, \alpha)$-$\LDLP$ is $\coNP$-hard under deterministic polynomial-time many-one reductions.
\end{theorem}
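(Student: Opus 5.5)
We reduce from $\coGapSVP_p$, which is $\coNP$-hard by Wan's theorem. Given $(\mathbf{B},d)$, we output an instance whose hypercube condition (Item~\ref{item:ldlp-conv-hypercube} of \Cref{def:ldlp}) holds automatically, so that membership in $(p,\alpha)$-$\LDLP$ is governed only by Item~\ref{item:ldlp-min-dist}.

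\textbf{Making the threshold integral.} Since $\coGapSVP_p$ contains strict and non-strict inequalities around $d$, we first replace $d$ by a convenient value. For an integer lattice, $\|\vec w\|_p^p$ ranges over integers, so $\lambda_1^{(p)}(\mathcal{L}(\mathbf{B}))\ge d$ iff $\lambda_1^{(p)}\ge d_0$, where $d_0:=\min\{c^{1/p}: c\in\mathbb{Z},\ c^{1/p}\ge d\}$. This value $d_0$ may be irrational, which is awkward for an integer input. To avoid that, we scale the lattice by a large integer $K$ and pick a rational $\ell$ with $K(d_0')<\ell\le Kd_0$, where $d_0'$ is the previous attainable value below $d$. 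Alternatively, we reduce from the standard instances produced by Wan, where $d$ is integral and the gap is robust. In either case we obtain a lattice basis $\mathbf{B}'$ and a rational $\ell$, with $\lambda_1^{(p)}(\mathcal{L}(\mathbf{B}'))\ge \ell$ iff the original instance is a YES instance.

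\textbf{The construction.} We combine this with a single locally dense gadget from \Cref{lem:binary-rs-gadget-lp} with $r=1$ (any fixed $r$ works). This gives $(\mathbf{A},\vec s,\ell_0,\mathbf{P}_1)$ with $\lambda_1^{(p)}(\mathcal{L}(\mathbf{A}))\ge \ell_0$ and $\{0,1\}\subseteq\{\mathbf{P}_1\vec x : \vec x\in(\vec s+\mathcal{L}(\mathbf{A}))\cap\mathcal{B}_p(\alpha\ell_0)\}$. After scaling both parts by suitable positive integers, we may assume $\ell=\ell_0$. We then output the block-diagonal instance
\[
\mathbf{A}':=\begin{pmatrix}\mathbf{A}&\mathbf{O}\\ \mathbf{O}&\mathbf{B}'\end{pmatrix},\qquad \vec s':=\begin{pmatrix}\vec s\\ \vec 0\end{pmatrix},\qquad \ell':=\ell,\qquad \mathbf{T}':=\begin{pmatrix}\mathbf{P}_1&\mathbf{O}\end{pmatrix}.
\]
If $\ell$ is rational rather than integral, we clear denominators by one more common scaling.

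\textbf{Correctness.} The hypercube condition always holds: for $u\in\{0,1\}$, take $(\vec x,\vec 0)$ with $\vec x$ the gadget's short coset vector, which has norm at most $\alpha\ell'$.

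For the minimum distance, a nonzero lattice vector $(\mathbf{A}\vec z,\mathbf{B}'\vec y)$ has $\ell_p$ norm at least $\max(\|\mathbf{A}\vec z\|_p,\|\mathbf{B}'\vec y\|_p)$. So if $\lambda_1^{(p)}(\mathcal{L}(\mathbf{B}'))\ge\ell$, then $\lambda_1^{(p)}(\mathcal{L}(\mathbf{A}'))\ge\ell$, and the output is a YES instance. Conversely, if $\mathcal{L}(\mathbf{B}')$ has a nonzero vector of norm $<\ell$, padding it with zeros gives a vector in $\mathcal{L}(\mathbf{A}')$ of norm $<\ell'$. Then Item~\ref{item:ldlp-min-dist} fails, and the output is a NO instance.

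\textbf{Main obstacle.} The delicate step is matching the thresholds. We need a single $\ell$ that is simultaneously a valid minimum-distance bound for the gadget and an exact SVP threshold for $\mathbf{B}'$, while preserving the strict-versus-non-strict boundary of $\coGapSVP_p$ under integer scaling. We would handle this by independent integer scalings: multiply $\mathbf{A},\vec s,\ell_0$ by $\ell_{\mathrm{num}}$ and $\mathbf{B}'$ by $\ell_0$. Scaling preserves both the gadget's properties and the exact SVP comparison.

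The hypothesis $\alpha>2^{-1/p}$ is used only to invoke \Cref{lem:binary-rs-gadget-lp}. If $\alpha\ge1$, a trivial gadget such as $\mathcal{L}=2\mathbb{Z}$ with shift $1$ would also work.
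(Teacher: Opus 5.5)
Your overall architecture matches the paper's: a block-diagonal sum of a scaled $\coGapSVP_p$ lattice and one of Wan's gadgets, so that Item~2 of \Cref{def:ldlp} always holds and membership is decided by Item~1 alone. The gap is at the step you flag as the main obstacle. You match thresholds by multiplying $\mathbf{A}$, $\vec s$, $\ell_0$ by an integer $K$ (and possibly by a further common factor to clear denominators), and you claim that scaling preserves the gadget's properties. Scaling preserves Item~1 but destroys Item~2. After scaling, the gadget block of every vector in $\vec s'+\mathcal{L}(\mathbf{A}')$ lies in $K(\vec s+\mathcal{L}(\mathbf{A}))\subseteq K\mathbb{Z}^N$, so $\mathbf{T}'$ maps every such vector into $K\mathbb{Z}$. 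For $K\ge 2$ the value $1$ is therefore never attained. Because $\mathbf{T}'$ must be an integer matrix, you cannot divide by $K$ to compensate, so every output would be a NO instance. Leaving the gadget unscaled does not help either: with only the guarantees of \Cref{lem:binary-rs-gadget-lp} at parameter $\alpha$, the common threshold is forced to be $\ell_0$. That value need not be an integer multiple of $d$, while $\mathbf{B}$ can only be scaled by integers.

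The missing idea is to scale the gadget non-uniformly and pay for it with slack in $\alpha$. The paper keeps one unscaled copy of $\mathbf{A}$, on which $\mathbf{T}$ acts, and stacks $r$ copies of $d\mathbf{A}$ above it, with shift $(d\vec s,\dots,d\vec s,\vec s)$. It uses $t\ell\mathbf{B}$ as the SVP block and sets $\ell':=td\ell$. When $r\ge t^p$, the gadget part has minimum distance at least $r^{1/p}d\ell\ge td\ell$. Its short coset vectors have norm at most $(rd^p+1)^{1/p}\sigma\ell\le(r+1)^{1/p}d\sigma\ell$. This is at most $\alpha td\ell$ provided the gadget is built with the smaller parameter $\sigma=\alpha\,(t^p/(r+1))^{1/p}$. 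Choosing $r=\lceil ct^p\rceil-1$ with $c=\alpha^p+\tfrac12$ keeps $\sigma\in(2^{-1/p},1)$ and $r\ge t^p$ for large $t$. This is where $\alpha>2^{-1/p}$ is really used: it leaves room for some $\sigma<\alpha$ that is still above $2^{-1/p}$.

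Two smaller remarks. First, your threshold-normalization step is unnecessary: for rational $d$, multiply $\mathbf{B}$ and $d$ by its denominator, since the SVP block can be scaled freely because $\mathbf{T}'$ ignores it. Its justification is also false for non-integer $p$, where $\sum_i|w_i|^p$ need not be an integer. Second, your trivial gadget for $\alpha\ge1$ fails, since no integer $1\times1$ map sends two odd integers to $0$ and $1$.
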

\begin{proof}
We give a deterministic polynomial-time many-one reduction from
$\coGapSVP_p$. Let $(B,d)$ be an instance of $\coGapSVP_p$, where
$B \in \mathbb{Z}^{m' \times n'}$ is a lattice basis and $d > 0$.
By multiplying $B$ and $d$ by a common positive integer, we may assume without loss of
generality that $d \in \mathbb{Z}_{>0}$.

Choose the constant
\[
c := \alpha^p + \frac12 .
\]
Since $\alpha > 2^{-1/p}$, we have $\alpha^p > 1/2$, and hence
\[
\alpha^p < c < 2\alpha^p
\qquad\text{and}\qquad
c > 1.
\]
Choose an integer $t \ge 1$ sufficiently large so that, letting
\[
r := \lceil c t^p \rceil - 1,
\qquad
\sigma := \alpha\left(\frac{t^p}{r+1}\right)^{1/p},
\]
we have
\[
2^{-1/p} < \sigma < 1
\qquad\text{and}\qquad
r \ge t^p.
\]
This is possible because $(r+1)/t^p \to c$, so
\[
\sigma \to \alpha c^{-1/p} \in (2^{-1/p},1),
\]
where the interval inclusion follows from $\alpha^p < c < 2\alpha^p$.

We run the deterministic algorithm of \Cref{lem:binary-rs-gadget-lp} with parameters
$(p,\sigma)$ on any integer input $k$, and obtain a tuple $(A,s,\ell,T)$, where
$A \in \mathbb{Z}^{m\times n}$ is a lattice basis, $s \in \mathbb{Z}^m$ is an integer vector, $\ell > 0$, and $T \in \mathbb{Z}^{k\times m}$ is an integer matrix.
Here we use only \Cref{item:ldl-1} and \Cref{item:ldl-2} of \Cref{lem:binary-rs-gadget-lp}.
Next, define
\begin{align*}
    \mathbf{A}_{\mathrm{ldl}} :=
    \begin{pmatrix}
        d\mathbf{A} \\
        \vdots \\
        d\mathbf{A} \\
        \mathbf{A}
    \end{pmatrix}
    \in \mathbb{Z}^{(r+1)m \times n},
    \qquad
    \vec{s}_{\mathrm{ldl}} :=
    \begin{pmatrix}
        d\vec{s} \\
        \vdots \\
        d\vec{s} \\
        \vec{s}
    \end{pmatrix}
    \in \mathbb{Z}^{(r+1)m},
\end{align*}
where there are $r$ copies of $d\mathbf{A}$ and $d\vec{s}$, and let
\begin{align*}
    \mathbf{T}_{\mathrm{ldl}}
    :=
    \begin{pmatrix}
        \mathbf{O}_{k \times rm} & \mathbf{T}
    \end{pmatrix}
    \in \mathbb{Z}^{k \times (r+1)m}.
\end{align*}
The reduction outputs
\begin{align*}
    \mathbf{A}' :=
    \begin{pmatrix}
        t\ell \cdot \mathbf{B} & \mathbf{O}_{m' \times n} \\
        \mathbf{O}_{(r+1)m \times n'} & \mathbf{A}_{\mathrm{ldl}}
    \end{pmatrix}
    \in \mathbb{Z}^{(m' + (r+1)m)\times (n'+n)},
\end{align*}
\begin{align*}
    \vec{s}' :=
    \begin{pmatrix}
        \vec{0}_{m'} \\
        \vec{s}_{\mathrm{ldl}}
    \end{pmatrix}
    \in \mathbb{Z}^{m' + (r+1)m},
    \qquad
    \mathbf{T}' :=
    \begin{pmatrix}
        \mathbf{O}_{k \times m'} & \mathbf{T}_{\mathrm{ldl}}
    \end{pmatrix}
    \in \mathbb{Z}^{k \times (m' + (r+1)m)},
    \qquad
    \ell' := td\ell.
\end{align*}
This reduction clearly runs in deterministic polynomial time.

We first consider the case where $(\mathbf{B}, d)$ is a YES instance of coGapSVP$_p$, that is, $\lambda_1^{(p)}(\mathcal{L}(\mathbf{B})) \ge d$.
We show that the output $(\mathbf{A}', \vec{s}', \ell', \mathbf{T}')$ belongs to $(p,\alpha)$-$\LDLP$.

We first verify \Cref{item:ldlp-min-dist} of \Cref{def:ldlp}.
Let $\vec{z} \in \mathbb{Z}^{n'+n}\setminus\{\vec{0}\}$, and write
$\vec{z} = (\vec{x},\vec{y})^T$ with
$\vec{x} \in \mathbb{Z}^{n'}$ and $\vec{y} \in \mathbb{Z}^{n}$.
Then
\begin{align*}
\mathbf{A}'\vec{z}
=
\begin{pmatrix}
    t\ell \cdot \mathbf{B}\vec{x} \\
    \mathbf{A}_{\mathrm{ldl}}\vec{y}
\end{pmatrix}.
\end{align*}
If $\vec{x} \neq \vec{0}$, then
\begin{align*}
    \|\mathbf{A}'\vec{z}\|_p
    \ge \|t\ell \cdot \mathbf{B}\vec{x}\|_p
    = t\ell \cdot \|\mathbf{B}\vec{x}\|_p
    \ge t\ell \cdot \lambda_1^{(p)}(\mathcal{L}(\mathbf{B}))
    \ge t\ell d
    = \ell'.
\end{align*}
If $\vec{x} = \vec{0}$, then $\vec{y} \neq \vec{0}$, and
\begin{align*}
    \|\mathbf{A}_{\mathrm{ldl}}\vec{y}\|_p^p
    = r\|d\mathbf{A}\vec{y}\|_p^p + \|\mathbf{A}\vec{y}\|_p^p
    \ge rd^p \|\mathbf{A}\vec{y}\|_p^p.
\end{align*}
Hence,
\begin{align*}
    \|\mathbf{A}_{\mathrm{ldl}}\vec{y}\|_p
    \ge r^{1/p} d \|\mathbf{A}\vec{y}\|_p
    \ge td \|\mathbf{A}\vec{y}\|_p,
\end{align*}
because $r \ge t^p$.
Since \Cref{item:min-dist} of \Cref{def:ldl} holds for $(\mathbf{A}, \vec{s}, \ell, \mathbf{T})$, we have $\|\mathbf{A}\vec{y}\|_p \ge \lambda_1^{(p)}(\mathcal{L}(\mathbf{A})) \ge \ell$ for every nonzero $\vec{y} \in \mathbb{Z}^n$.
Therefore, $\|\mathbf{A}_{\mathrm{ldl}}\vec{y}\|_p \ge td\ell = \ell'$.
Thus every nonzero vector in $\mathcal{L}(\mathbf{A}')$ has $\ell_p$-norm at least $\ell'$, and hence $\lambda_1^{(p)}(\mathcal{L}(\mathbf{A}')) \ge \ell'$.
So \Cref{item:ldlp-min-dist} of \Cref{def:ldlp} holds.

Next, we verify \Cref{item:ldlp-conv-hypercube} of \Cref{def:ldlp}.
Assume that \Cref{item:conv-hypercube} of \Cref{def:ldl} holds for
$(\mathbf{A}, \vec{s}, \ell, \mathbf{T})$.
Let $\vec{b} \in \{0,1\}^k$ be arbitrary.
Then there exists $\vec{v} \in (\vec{s} + \mathcal{L}(\mathbf{A})) \cap \mathcal{B}_p^m(\sigma \ell)$ such that $\mathbf{T}\vec{v} = \vec{b}$.
Define
\begin{align*}
\vec{w} := (d\vec{v}, \ldots, d\vec{v}, \vec{v})^T \in \mathbb{Z}^{(r+1)m}.
\end{align*}
We first show that $\vec{w} \in \vec{s}_{\mathrm{ldl}} + \mathcal{L}(\mathbf{A}_{\mathrm{ldl}})$.
Since $\vec{v} \in \vec{s} + \mathcal{L}(\mathbf{A})$, there exists
$\vec{u} \in \mathbb{Z}^n$ such that $\vec{v} = \vec{s} + \mathbf{A}\vec{u}$.
Hence,
\begin{align*}
    \vec{w}
    = (d\vec{s} + d\mathbf{A}\vec{u}, \ldots, d\vec{s} + d\mathbf{A}\vec{u}, \vec{s} + \mathbf{A}\vec{u})^T
    = \vec{s}_{\mathrm{ldl}} + \mathbf{A}_{\mathrm{ldl}}\vec{u}.
\end{align*}
We next bound the norm of $\vec{w}$.
Since $\|\vec{v}\|_p \le \sigma \ell$, we have
\begin{align*}
    \|\vec{w}\|_p^p
    &= r\|d\vec{v}\|_p^p + \|\vec{v}\|_p^p \\
    &\le (rd^p + 1)(\sigma \ell)^p \\
    &\le d^p(r+1)(\sigma \ell)^p \\
    &= d^p(r+1)\alpha^p \frac{t^p}{r+1}\ell^p \\
    &= \alpha^p t^p d^p \ell^p \\
    &= (\alpha \ell')^p,
\end{align*}
where the third inequality uses $d \ge 1$.
Thus, $\vec{w} \in (\vec{s}_{\mathrm{ldl}} + \mathcal{L}(\mathbf{A}_{\mathrm{ldl}})) \cap \mathcal{B}_p^{(r+1)m}(\alpha \ell')$.
Moreover, by the definition of $\mathbf{T}_{\mathrm{ldl}}$, we have $\mathbf{T}_{\mathrm{ldl}}\vec{w} = \mathbf{T}\vec{v} = \vec{b}$.
Finally, let $\vec{u}' := (\vec{0}_{m'}, \vec{w})^T \in \mathbb{Z}^{m' + (r+1)m}$.
Then $\vec{u}' \in (\vec{s}' + \mathcal{L}(\mathbf{A}')) \cap \mathcal{B}_p^{m' + (r+1)m}(\alpha \ell')$ and $\mathbf{T}'\vec{u}' = \vec{b}$.
Since $\vec{b}$ was arbitrary, \Cref{item:ldlp-conv-hypercube} of \Cref{def:ldlp} holds.

Now assume that $(\mathbf{B}, d)$ is a NO instance of coGapSVP$_p$, that is, $\lambda_1^{(p)}(\mathcal{L}(\mathbf{B})) < d$.
Then there exists a nonzero vector $\vec{x} \in \mathbb{Z}^{n'}$ such that $\|\mathbf{B}\vec{x}\|_p < d$.
Let $\vec{z} := (\vec{x}, \vec{0}_n)^T \in \mathbb{Z}^{n'+n}\setminus\{\vec{0}\}$.
Then
\begin{align*}
    \|\mathbf{A}'\vec{z}\|_p
    = \|t\ell \cdot \mathbf{B}\vec{x}\|_p
    = t\ell \cdot \|\mathbf{B}\vec{x}\|_p
    < t\ell d
    = \ell'.
\end{align*}
Therefore, $\lambda_1^{(p)}(\mathcal{L}(\mathbf{A}')) < \ell'$,
so \Cref{item:ldlp-min-dist} of \Cref{def:ldlp} fails.
Hence, $(\mathbf{A}', \vec{s}', \ell', \mathbf{T}') \notin (p,\alpha)\text{-}\LDLP$.
\end{proof}

\subsection{For the Infinity Norm}
Even for $p=\infty$, by going through $\CoeffLDLP$ as an intermediate problem, we show that $(\infty,\alpha)$-$\LDLP$ is complete for the second level of the polynomial hierarchy.

The following theorem is obtained via a reduction from
$\forall\exists$ 1-in-3-SAT to $(\infty,\alpha,\beta)$-$\CoeffLDLP$.

\begin{theorem}
\label{thm:variable-ldlp-is-pi2-hard-infty}
For any constants $\alpha\in(1/2,1)$ and $\beta\in(0,1-\alpha)$, $(\infty,\alpha,\beta)$-$\CoeffLDLP$ is $\Pi_2^p$-hard under deterministic polynomial-time many-one reductions.
\end{theorem}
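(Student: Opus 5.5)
We reduce from $\forall\exists$~1-in-3-SAT (\Cref{thm:sat}), following the sketch in the introduction. Let $\Phi=\forall\vec u\in\{0,1\}^k\,\exists\vec v\in\{0,1\}^t:\psi(\vec u,\vec v)$, set $n:=k+t$, and apply \Cref{lem:1in3-clause-gadget} to obtain $\mathbf C\in\mathbb Z^{m_\psi\times n}$ and $\vec b$. We output the coefficient instance
\[
    \mathbf A':=\begin{pmatrix} 2\mathbf I_n\\ \mathbf C\end{pmatrix},\qquad
    \vec s':=\begin{pmatrix}\mathbf 1_n\\ \vec b\end{pmatrix},\qquad
    \ell':=2,\qquad
    \mathbf T':=\mathbf P_k\in\mathbb Z^{k\times n}.
\]
Here $\mathbf P_k$ acts on the coefficient vector and reads off the universal part. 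The radius is $\alpha\ell'=2\alpha\in(1,2)$, and the soundness radius is $(\alpha+\beta)\ell'=2(\alpha+\beta)<2$ because $\beta<1-\alpha$. If one prefers to keep the thresholds integral, we can scale everything by a common factor; this is harmless.

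\textbf{Promise.} The matrix $\mathbf A'$ has full column rank because of the block $2\mathbf I_n$. For nonzero $\vec z\in\mathbb Z^n$ we have $\|\mathbf A'\vec z\|_\infty\ge\|2\vec z\|_\infty\ge 2=\ell'$, so $\lambda_1^{(\infty)}(\mathcal L(\mathbf A'))\ge\ell'$.

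\textbf{Completeness.} Fix $\vec u$, choose $\vec v$ with $\psi(\vec u,\vec v)$, and let $\vec z=(\vec u,\vec v)$. The top block of $\mathbf A'\vec z-\vec s'$ is $2\vec z-\mathbf 1\in\{\pm1\}^n$, which has infinity norm $1\le 2\alpha$ since $\alpha>1/2$. The bottom block is $\mathbf C\vec z-\vec b=\vec 0$ by \Cref{eq:1in3-gadget-sat}. Finally $\mathbf T'\vec z=\vec u$, as required.

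\textbf{Soundness.} Suppose $\Phi$ is false, witnessed by $\vec u^\star$. Take any $\vec z\in\mathbb Z^n$ with $\mathbf P_k\vec z=\vec u^\star$ and $\|\mathbf A'\vec z-\vec s'\|_\infty\le 2(\alpha+\beta)<2$.
\begin{enumerate}
    \item Each coordinate $2z_i-1$ is an odd integer of absolute value less than $2$, so $2z_i-1=\pm1$. Hence $z_i\in\{0,1\}$ and $\vec z$ is Boolean.
    \item The first $k$ coordinates of $\vec z$ equal $\vec u^\star$, so $\vec z$ does not satisfy $\psi$ in the 1-in-3 sense.
    \item By \Cref{eq:1in3-gadget-gap}, $\|\mathbf C\vec z-\vec b\|_\infty\ge2$.
\end{enumerate}
Item~3 contradicts the assumed bound $\|\mathbf A'\vec z-\vec s'\|_\infty<2$. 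Therefore $\vec u^\star$ has no preimage within radius $(\alpha+\beta)\ell'$, and the instance is a NO instance.

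The step that needs the most care is the interaction between the two thresholds. The completeness radius $2\alpha$ must be at least $1$, which is where $\alpha>1/2$ is used. The soundness radius $2(\alpha+\beta)$ must be strictly below $2$: this simultaneously forces Booleanity through the odd-integer argument and rules out any clause violation through the gap in \Cref{eq:1in3-gadget-gap}. Both hold exactly under the stated ranges of $\alpha$ and $\beta$. Everything else is routine, and the reduction is deterministic and polynomial-time.
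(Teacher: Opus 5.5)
Your proposal is correct and is essentially the paper's own proof. It uses the same instance (top block $2\mathbf{I}_n$ over $\mathbf{C}$, shift $(\vec{1}_n,\vec{b})$, threshold $\ell=2$, and $\mathbf{T}$ reading off the first $k$ coefficients), and the same soundness argument: the odd-integer observation forces $\vec{z}$ to be Boolean, and the gap from \Cref{lem:1in3-clause-gadget} then exceeds the radius $2(\alpha+\beta)<2$.
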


\begin{proof}
We reduce from $\forall\exists$~1-in-3-SAT, which is $\Pi_2^p$-complete by \Cref{thm:sat}.

Let
\[
    \Phi=\forall \vec{u}\in\{0,1\}^k\ \exists \vec{v}\in\{0,1\}^t:\ \psi(\vec{u},\vec{v})
\]
be an instance of $\forall\exists$~1-in-3-SAT, where $\psi$ is a Boolean formula in which every clause contains exactly three literals. Let $m_\psi$ be the number of clauses of $\psi$, and set $n:=k+t$.

We first describe the output of the reduction.
Apply \Cref{lem:1in3-clause-gadget} to $\psi$. We obtain
\[
    \mathbf{C}\in\mathbb Z^{m_\psi\times n}
    \qquad\text{and}\qquad
    \vec b\in\mathbb Z^{m_\psi}
\]
such that for every $\vec z\in\{0,1\}^n$,
\begin{align}
\label{eq:var-clause-sat}
    \mathbf{C}\vec z=\vec b
    &\iff
    \vec z \text{ satisfies } \psi \text{ in the 1-in-3 sense},\\
\label{eq:var-clause-gap}
    \mathbf{C}\vec z\neq \vec b
    &\Longrightarrow
    \norm{\mathbf{C}\vec z-\vec b}_\infty\ge 2.
\end{align}
Set $m:=n+m_\psi$, $\ell:=2$, and
\[
    \mathbf T:=
    \begin{pmatrix}
        \mathbf I_k & \mathbf O_{k\times t}
    \end{pmatrix}
    \in\mathbb Z^{k\times n}.
\]
We output the tuple $(\mathbf A,\vec s,\ell,\mathbf T)$ defined by
\[
    \mathbf A:=
    \begin{pmatrix}
        2\mathbf I_n\\
        \mathbf C
    \end{pmatrix}
    \in\mathbb Z^{m\times n},
    \qquad
    \vec s:=
    \begin{pmatrix}
        \vec 1_n\\
        \vec b
    \end{pmatrix}
    \in\mathbb Z^m.
\]
This map is computable deterministically in polynomial time.

We verify the promise condition $\lambda_1^{(\infty)}(\mathcal L(\mathbf A))\ge \ell$. The columns of $\mathbf A$ are linearly independent because the first $n$ rows are $2\mathbf I_n$. Let $\vec z\in\mathbb Z^n\setminus\{\vec 0\}$. Then there exists some index $i\in[n]$ such that $z_i\neq 0$, and hence the $i$-th coordinate of $\mathbf A\vec z$ equals $2z_i$. Therefore,
\[
    \norm{\mathbf A\vec z}_\infty\ge |2z_i|\ge 2=\ell.
\]
Thus $\lambda_1^{(\infty)}(\mathcal L(\mathbf A))\ge \ell$.

We next prove completeness. Assume that $\Phi$ is true. Fix an arbitrary vector $\vec y\in\{0,1\}^k$. Since $\Phi$ is true, there exists $\vec v\in\{0,1\}^t$ such that the assignment $\vec z:=(\vec y,\vec v)\in\{0,1\}^n$ satisfies $\psi$ in the 1-in-3 sense. By \Cref{eq:var-clause-sat}, we have $\mathbf C\vec z=\vec b$.
For each $i\in[n]$, the $i$-th coordinate of $2\mathbf I_n\vec z-\vec 1_n$ equals $2z_i-1\in\{-1,1\}$, so its absolute value is $1$. Since the lower block satisfies $\mathbf C\vec z-\vec b=\vec 0$, we obtain
\[
    \norm{\mathbf A\vec z-\vec s}_\infty\le 1.
\]
Because $\alpha>1/2$ and $\ell=2$, we have $\alpha\ell>1$, and therefore $\norm{\mathbf A\vec z-\vec s}_\infty\le \alpha\ell$. Moreover,
\[
    \mathbf T\vec z=\vec y.
\]
Since $\vec y\in\{0,1\}^k$ was arbitrary, we conclude that
\[
    \{0,1\}^k
    \subseteq
    \{\mathbf T\vec z:\vec z\in\mathbb Z^n,\ \norm{\mathbf A\vec z-\vec s}_\infty\le \alpha\ell\}.
\]
Hence $(\mathbf A,\vec s,\ell,\mathbf T)$ is a YES instance of $(\infty,\alpha,\beta)$-$\CoeffLDLP$.

Finally, we prove soundness. Assume that $\Phi$ is false. Then there exists $\vec y^\star\in\{0,1\}^k$ such that for every $\vec v\in\{0,1\}^t$, the assignment $(\vec y^\star,\vec v)$ does not satisfy $\psi$ in the 1-in-3 sense.
We claim that
\[
    \vec y^\star\notin
    \{\mathbf T\vec z:\vec z\in\mathbb Z^n,\ \norm{\mathbf A\vec z-\vec s}_\infty\le (\alpha+\beta)\ell\}.
\]
Suppose toward a contradiction that there exists $\vec z\in\mathbb Z^n$ such that $\mathbf T\vec z=\vec y^\star$ and $\norm{\mathbf A\vec z-\vec s}_\infty\le (\alpha+\beta)\ell$. Since $\alpha+\beta<1$ and $\ell=2$, we have $(\alpha+\beta)\ell<2$.
For each $i\in[n]$, the $i$-th coordinate of $\mathbf A\vec z-\vec s$ equals $2z_i-1$. Hence
\[
    |2z_i-1|\le (\alpha+\beta)\ell<2.
\]
Because $2z_i-1$ is an odd integer, this forces $2z_i-1\in\{-1,1\}$, and therefore $z_i\in\{0,1\}$. Thus $\vec z\in\{0,1\}^n$.
Since $\mathbf T\vec z=\vec y^\star$, the first $k$ coordinates of $\vec z$ are exactly $\vec y^\star$. Therefore $\vec z$ does not satisfy $\psi$ in the 1-in-3 sense. By \Cref{eq:var-clause-sat}, we have $\mathbf C\vec z\neq \vec b$. Hence \Cref{eq:var-clause-gap} implies that
\[
    \norm{\mathbf C\vec z-\vec b}_\infty\ge 2.
\]
Since $\mathbf C\vec z-\vec b$ is the lower block of $\mathbf A\vec z-\vec s$, we obtain
\[
    \norm{\mathbf A\vec z-\vec s}_\infty\ge 2,
\]
which contradicts $\norm{\mathbf A\vec z-\vec s}_\infty\le (\alpha+\beta)\ell<2$.
Thus no such $\vec z$ exists, and $(\mathbf A,\vec s,\ell,\mathbf T)$ is a NO instance of $(\infty,\alpha,\beta)$-$\CoeffLDLP$.
\end{proof}

For a target parameter \(\alpha'\in(1/2,1)\), choose $\alpha := (1/2+\alpha')/2$ and $\beta := \alpha'-\alpha$. Then, together with \Cref{prop:pi2p}, Item 2 of \Cref{thm:ldlp-is-pi2-complete-lp} follows from \Cref{prop:coefficient-to-coset} and \Cref{thm:variable-ldlp-is-pi2-hard-infty}.

\printbibliography

\appendix
\section{Proof of Theorem~\ref{lem:binary-rs-gadget-lp}}
\label{appendix:ldl}
For each prime $q$, fix an ordering $a_1,\ldots,a_q$ of all elements of
$\mathbb{F}_q$. For every integer $k$ with $1\leq k<q$, define
\[
    \mathbf{H}_q(k)
    =
    \begin{pmatrix}
        1 & \cdots & 1\\
        a_1 & \cdots & a_q\\
        \vdots & & \vdots\\
        a_1^{k-1} & \cdots & a_q^{k-1}
    \end{pmatrix}
    \in \mathbb{F}_q^{k\times q}
\]
and
\[
    \mathcal{L}_{q,k}
    :=
    \{\vec{z}\in\mathbb{Z}^q:
      \mathbf{H}_q(k)\vec{z}=\vec{0}\text{ in }\mathbb{F}_q^k\}.
\]
Here and below, when a matrix over $\mathbb{F}_q$ is multiplied by an
integer vector, the entries of the vector are reduced modulo $q$
before the multiplication. Thus $\mathcal{L}_{q,k}$ is defined
entirely as a subset of $\mathbb{Z}^q$ by the displayed congruences.

We first record an elementary property of these Reed--Solomon lattice
cosets that will be used twice below.

\begin{lemma}\label{lem:short-rs-coset-is-binary}
Let $q$ be a prime and let $1\leq k<q$. Let
$\vec{y}\in\{0,1\}^q$ have Hamming weight $h$, meaning that
exactly $h$ coordinates of $\vec{y}$ are equal to $1$, and suppose
that $q>2h$.
If $p\geq\log_2 3$ and
\[
    \vec{z}\in \vec{y}+\mathcal{L}_{q,k}
    \qquad\text{and}\qquad
    \|\vec{z}\|_p^p<h+1,
\]
then $\vec{z}\in\{0,1\}^q$ and $\|\vec{z}\|_1=h$.
\end{lemma}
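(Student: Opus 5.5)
The plan is to use only the first row of $\mathbf{H}_q(k)$, which is the all-ones row and is present because $k\geq 1$. Together with the fact that $|c|^p\geq |c|$ for every integer $c$ (valid since $p\geq\log_2 3>1$), this controls the coordinate sum of $\vec{z}$. Write $\vec{z}=\vec{y}+\vec{w}$ with $\vec{w}\in\mathcal{L}_{q,k}$. Then $\sum_i w_i\equiv 0 \pmod q$, so the integer $S:=\sum_i z_i$ satisfies $S\equiv \sum_i y_i = h \pmod q$.

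The first step bounds $\|\vec{z}\|_1$. Since every $z_i$ is an integer, $|z_i|\leq |z_i|^p$. Hence
\[
\|\vec{z}\|_1\leq \|\vec{z}\|_p^p<h+1,
\]
and since $\|\vec{z}\|_1$ is an integer, $\|\vec{z}\|_1\leq h$. In particular $|S|\leq h$, so $S-h\in[-2h,0]$. Because $S-h$ is a multiple of $q$ and $q>2h$, this forces $S=h$. Then $h=S\leq\|\vec{z}\|_1\leq h$, so $\|\vec{z}\|_1=S=h$. Equality $\sum_i z_i=\sum_i|z_i|$ means every coordinate satisfies $z_i\geq 0$.

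The second step shows that every coordinate is binary, and this is where $p\geq\log_2 3$ enters. The hypothesis together with $\sum_i z_i=h$ gives
\[
\sum_i\bigl(|z_i|^p-z_i\bigr)<h+1-h=1.
\]
Each summand $|z_i|^p - z_i$ is nonnegative. If some $z_i\geq 2$, then $z_i^p-z_i\geq 2^p-2\geq 1$, since the function $c\mapsto c^p-c$ is increasing for $c\geq 1$ when $p\geq 1$, and $2^p\geq 3$. That would contradict the strict bound above. Therefore $z_i\in\{0,1\}$ for all $i$, i.e.\ $\vec{z}\in\{0,1\}^q$, and $\|\vec{z}\|_1=h$ was already established.

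I expect no serious obstacle. The only delicate point is the modular step, which needs $q>2h$ strictly, so that $S\equiv h\pmod q$ and $|S|\leq h$ together pin down $S=h$ rather than $S=h-q$.
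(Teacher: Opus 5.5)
Your proof is correct and follows essentially the same route as the paper: the all-ones row of $\mathbf{H}_q(k)$ together with $q>2h$ forces $\sum_i z_i=h$, and then $\sum_i(|z_i|^p-z_i)<1$ combined with $2^p-2\geq 1$ rules out non-binary coordinates. The only cosmetic difference is how negative coordinates are excluded: you use the sandwich $h=\sum_i z_i\leq\|\vec{z}\|_1\leq h$, whereas the paper bounds $|b|^p-b\geq 2$ directly for integers $b\leq -1$.
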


\begin{proof}
Because the first row of $\mathbf{H}_q(k)$ is the all-one row and
$\vec{z}-\vec{y}\in\mathcal{L}_{q,k}$, we have
\[
    \sum_{i=1}^q z_i\equiv \sum_{i=1}^q y_i=h\pmod q.
\]
Since the coordinates of $\vec{z}$ are integers and $p\geq1$,
\[
    \left|\sum_{i=1}^q z_i\right|
    \leq \sum_{i=1}^q |z_i|
    \leq \sum_{i=1}^q |z_i|^p
    =\|\vec{z}\|_p^p
    <h+1.
\]
Thus $-h\leq\sum_i z_i\leq h$. As $q>2h$, the only integer in this
interval that is congruent to $h$ modulo $q$ is $h$ itself. Hence
\begin{equation}\label{eq:rs-coordinate-sum}
    \sum_{i=1}^q z_i=h.
\end{equation}

For every integer $b$, one has $|b|^p-b\geq0$. Moreover, if
$b\notin\{0,1\}$, then $|b|^p-b\geq1$. Indeed, if $b\leq-1$, then
\[
    |b|^p-b=|b|^p+|b|\geq2.
\]
If $b\geq2$, then
\[
    b^p-b
    =b(b^{p-1}-1)
    \geq 2(2^{p-1}-1)
    =2^p-2
    \geq1,
\]
where the last inequality uses $p\geq\log_2 3$.

If some coordinate of $\vec{z}$ were outside $\{0,1\}$, then
\Cref{eq:rs-coordinate-sum} would give
\[
    \|\vec{z}\|_p^p
    =
    \sum_{i=1}^q z_i
    +
    \sum_{i=1}^q\bigl(|z_i|^p-z_i\bigr)
    \geq h+1,
\]
contrary to the assumption. Therefore $\vec{z}\in\{0,1\}^q$, and
\Cref{eq:rs-coordinate-sum} then implies $\|\vec{z}\|_1=h$.
\end{proof}

\begin{proposition}[{\cite[Proposition~3.2]{wan2026}}]
\label{prop:wan-rs-basis}
Let $q$ be a prime and let $1\leq k<q$. Partition
$\mathbf{H}_q(k)=(\mathbf{V}\ \mathbf{W})$, where
$\mathbf{V}\in\mathbb{F}_q^{k\times k}$ consists of the first $k$
columns and $\mathbf{W}\in\mathbb{F}_q^{k\times(q-k)}$ consists of
the remaining columns. Then $\mathbf{V}$ is invertible. Let
$\widehat{\boldsymbol{\Lambda}}\in\mathbb{Z}^{k\times(q-k)}$ be any
integer matrix whose reduction modulo $q$ is
$\mathbf{V}^{-1}\mathbf{W}$, and define
\[
    \mathbf{B}_{q,k}
    :=
    \begin{pmatrix}
        q\mathbf{I}_k
        &
        -\widehat{\boldsymbol{\Lambda}}
        \\
        \mathbf{0}_{(q-k)\times k}
        &
        \mathbf{I}_{q-k}
    \end{pmatrix}
    \in\mathbb{Z}^{q\times q}.
\]
Then $\mathbf{B}_{q,k}$ is a basis of $\mathcal{L}_{q,k}$; that is,
\[
    \mathcal{L}_{q,k}=\mathbf{B}_{q,k}\mathbb{Z}^q.
\]
\end{proposition}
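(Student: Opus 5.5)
Two things have to be shown: that $\mathbf{V}$ is invertible over $\mathbb{F}_q$, and that the columns of $\mathbf{B}_{q,k}$ generate exactly $\mathcal{L}_{q,k}$.

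\textbf{Invertibility of $\mathbf{V}$.} The matrix $\mathbf{V}$ is the $k\times k$ Vandermonde matrix on the distinct elements $a_1,\ldots,a_k\in\mathbb{F}_q$. This uses $k<q$, so that $k$ distinct elements exist. Its determinant is $\prod_{i<j}(a_j-a_i)\neq 0$ in $\mathbb{F}_q$, so $\mathbf{V}$ is invertible. Consequently, for $\vec{z}=(\vec{z}_1,\vec{z}_2)\in\mathbb{Z}^k\times\mathbb{Z}^{q-k}$, the defining congruence $\mathbf{H}_q(k)\vec{z}\equiv\vec{0}$ is equivalent, after multiplying by $\mathbf{V}^{-1}$, to
\[
\vec{z}_1+\widehat{\boldsymbol{\Lambda}}\vec{z}_2\equiv\vec{0}\pmod q .
\]
This reformulation is the only place where the choice of the lift $\widehat{\boldsymbol{\Lambda}}$ enters, and any integer lift works because only its residue mod $q$ matters.

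\textbf{Inclusion $\mathbf{B}_{q,k}\mathbb{Z}^q\subseteq\mathcal{L}_{q,k}$.} It suffices to check each column against the reformulated congruence. The $j$-th column among the first $k$ is $q\vec{e}_j$: here $\vec{z}_2=\vec{0}$ and $\vec{z}_1=q\vec{e}_j\equiv\vec{0}$. The $i$-th column among the last $q-k$ has $\vec{z}_1=-\widehat{\boldsymbol{\Lambda}}\vec{e}_i$ and $\vec{z}_2=\vec{e}_i$. Then
\[
\vec{z}_1+\widehat{\boldsymbol{\Lambda}}\vec{z}_2=\vec{0}.
\]
Since $\mathcal{L}_{q,k}$ is an additive group, integer combinations of these columns stay in it.

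\textbf{Inclusion $\mathcal{L}_{q,k}\subseteq\mathbf{B}_{q,k}\mathbb{Z}^q$.} Take $\vec{z}\in\mathcal{L}_{q,k}$. The congruence gives $\vec{z}_1=-\widehat{\boldsymbol{\Lambda}}\vec{z}_2+q\vec{c}$ for some integer vector $\vec{c}\in\mathbb{Z}^k$. Then $\vec{z}=\mathbf{B}_{q,k}(\vec{c},\vec{z}_2)$ with an integer coefficient vector, which is exactly what is needed.

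\textbf{Basis property.} $\mathbf{B}_{q,k}$ is block upper triangular with determinant $q^k\neq 0$, so its columns are linearly independent. Together with the two inclusions, this shows $\mathbf{B}_{q,k}$ is a basis of $\mathcal{L}_{q,k}$. No step poses a real obstacle; the only point requiring care is the equivalence of the congruences after multiplying by $\mathbf{V}^{-1}$, which holds precisely because $\mathbf{V}$ is invertible mod $q$.
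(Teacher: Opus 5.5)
Your proof is correct and follows essentially the same route as the paper: you use the Vandermonde invertibility of $\mathbf{V}$, reduce the defining congruence to $\vec{z}_1+\widehat{\boldsymbol{\Lambda}}\vec{z}_2\equiv\vec{0}\pmod q$, and prove both inclusions via the explicit integer preimage $(\vec{c},\vec{z}_2)$. Your extra remark that $\det\mathbf{B}_{q,k}=q^k\neq 0$ makes the linear independence of the columns explicit, which the paper leaves implicit.
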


\begin{proof}
The matrix $\mathbf{V}$ is a Vandermonde matrix whose evaluation
points $a_1,\ldots,a_k$ are distinct, and hence it is invertible. For
every $\vec{u}\in\mathbb{Z}^k$ and
$\vec{v}\in\mathbb{Z}^{q-k}$,
\[
    \mathbf{H}_q(k)\mathbf{B}_{q,k}
    \begin{pmatrix}
        \vec{u}\\
        \vec{v}
    \end{pmatrix}
    \equiv
    -\mathbf{V}
     (\mathbf{V}^{-1}\mathbf{W})\vec{v}
    +
    \mathbf{W}\vec{v}
    =
    \vec{0}
    \pmod q.
\]
Thus $\mathbf{B}_{q,k}\mathbb{Z}^q\subseteq\mathcal{L}_{q,k}$.
Conversely, suppose that
\[
    \begin{pmatrix}
        \vec{u}\\
        \vec{v}
    \end{pmatrix}
    \in\mathcal{L}_{q,k}.
\]
Then
$\mathbf{V}\vec{u}+\mathbf{W}\vec{v}\equiv\vec{0}\pmod q$, and
hence
$\vec{u}+\mathbf{V}^{-1}\mathbf{W}\vec{v}\equiv\vec{0}\pmod q$.
It follows that
$\vec{u}=q\vec{w}-\widehat{\boldsymbol{\Lambda}}\vec{v}$ for some
$\vec{w}\in\mathbb{Z}^k$. Therefore
\[
    \begin{pmatrix}
        \vec{u}\\
        \vec{v}
    \end{pmatrix}
    =
    \mathbf{B}_{q,k}
    \begin{pmatrix}
        \vec{w}\\
        \vec{v}
    \end{pmatrix}.
\]
Thus $\mathcal{L}_{q,k}=\mathbf{B}_{q,k}\mathbb{Z}^q$.
\end{proof}

\begin{lemma}[{\cite[Lemma~3.4]{wan2026}}]
\label{lem:wan-rs-minimum-distance}
Let $q$ be a prime and let $1\leq k\leq q/2$. Then, for every
$1\leq p<\infty$,
\[
    \lambda_1^{(p)}(\mathcal{L}_{q,k})\geq(2k)^{1/p}.
\]
\end{lemma}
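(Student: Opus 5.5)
Fix a nonzero $\vec{z}\in\mathcal{L}_{q,k}$ and aim to show $\|\vec{z}\|_p^p\geq 2k$. I split on the support size $w:=|\{i:z_i\neq 0\}|$. The underlying idea is the Reed--Solomon distance bound (codewords with $k$ parity checks have weight at least $k+1$). To reach $2k$, I combine this with the first row of $\mathbf{H}_q(k)$, which forces $\sum_i z_i\equiv 0\pmod q$, and more generally with the power-sum congruences $\sum_i z_i a_i^j\equiv 0$ for $0\leq j<k$.

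\emph{Case 1: some coordinate is divisible by $q$, or the reduction of $\vec{z}$ mod $q$ is zero.} If $\vec{z}\equiv\vec{0}\pmod q$, then some $|z_i|\geq q$, so $\|\vec{z}\|_p^p\geq q^p\geq q\geq 2k$ using $k\leq q/2$. Otherwise the reduction $\bar{\vec z}\in\mathbb{F}_q^q$ is a nonzero vector with $\mathbf{H}_q(k)\bar{\vec z}=\vec 0$. Since any $k$ columns of $\mathbf{H}_q(k)$ form an invertible Vandermonde matrix, $\bar{\vec z}$ has at least $k+1$ nonzero entries. This gives only $\|\vec z\|_p^p\geq k+1$, so a sharper argument is needed.

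\emph{Case 2: the main step.} I would reinterpret the congruences multiplicatively. Consider $f(X):=\prod_i (X-a_i)^{z_i^+}$ and $g(X):=\prod_i (X-a_i)^{z_i^-}$ over $\mathbb{F}_q$, where $z_i^\pm$ are the positive and negative parts. Then $\deg f=\sum z_i^+=:P$ and $\deg g=\sum z_i^-=:N$. The condition $\sum_i z_i a_i^j\equiv 0$ for $j<k$ says that the power sums $p_j$ of the multiset of roots of $f$ agree with those of $g$ for $1\leq j<k$, and that $P\equiv N \pmod q$ from $j=0$.

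By Newton's identities, valid here since $j<k\leq q/2<q$ so the relevant $j$ are invertible, the elementary symmetric functions $e_1,\dots,e_{k-1}$ agree. Padding the smaller-degree polynomial with a power of $X$, which does not change elementary symmetric functions, the top $k$ coefficients of $X^{N}f$ and $X^{P}g$ coincide. Hence $h:=X^{N}f-X^{P}g$ is a polynomial of degree at most $P+N-k$. Since the sets $\{a_i:z_i>0\}$ and $\{a_i:z_i<0\}$ are disjoint, $f$ and $g$ are coprime, so $h\neq 0$ unless degenerate cases occur; these should reduce to Case 1.

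Next I take the formal derivative trick as in Wan/Bennett--Peikert. The quantity $fg' - f'g$, or an equivalent Wronskian-type expression, is divisible by $\prod_i (X-a_i)^{|z_i|-1}$. Comparing degrees then yields $P+N\geq 2k$. Finally, since $z_i$ are integers and $p\geq 1$, $\|\vec z\|_p^p\geq\|\vec z\|_1=P+N\geq 2k$, which proves the claim.

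\emph{Main obstacle.} The hard step is the degree-counting in Case 2 that upgrades the Singleton-type bound $k+1$ to $2k$, including handling $P\neq N$, where the congruence $P\equiv N\pmod q$ with $P+N<2k\leq q$ forces $P=N$. My first attempt is exactly the argument above. Since the statement is cited from \cite[Lemma~3.4]{wan2026}, it is also acceptable to invoke it directly if this computation becomes unwieldy.
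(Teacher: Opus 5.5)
\paragraph{Gap in the key step.} The Wronskian comparison does not give $P+N\ge 2k$. Expand $f'/f=\sum_{j\ge 0}p_j(f)X^{-j-1}$ as a Laurent series in $X^{-1}$. Agreement of $p_0,\dots,p_{k-1}$ in $\mathbb{F}_q$ gives $\deg(fg'-f'g)\le P+N-k-1$. The divisor $\prod_i(X-a_i)^{|z_i|-1}$ has degree $P+N-w$, where $w$ is the support size. Comparing degrees (when $fg'-f'g\neq 0$, i.e.\ when $\vec z\not\equiv\vec 0\pmod q$) yields only $w\ge k+1$. That is the same Singleton-type bound you already had in Case~1, and you noted yourself that it is insufficient. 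Degree counting of this kind cannot be upgraded: $w\ge k+1$ is compatible with $\|\vec z\|_1=k+1$, which is below $2k$ once $k\ge 2$. So as written the argument stops at $k+1$. Falling back on citing Wan is not a proof either; the paper gives no proof of this lemma and simply imports it.

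\paragraph{How to close it.} The fix uses only ingredients you already have, and needs neither derivatives nor the case split on $\vec z\bmod q$.
\begin{itemize}
\item Suppose $\vec z\in\mathcal{L}_{q,k}$ is nonzero with $\|\vec z\|_1=P+N<2k$. Then $|P-N|\le P+N<2k\le q$ and $P\equiv N\pmod q$ force $P=N$, hence $P\le k-1$.
\item Now $f$ and $g$ are monic of the same degree $P\le k-1$. Newton's identities give $e_j(f)=e_j(g)$ for all $1\le j\le k-1$, in particular for all $j\le P$, so $f=g$. In your notation: $h=X^P(f-g)$ would have degree at least $P$ if nonzero, but you showed $\deg h\le 2P-k<P$.
\item Since the $a_i$ are distinct, $f$ and $g$ have disjoint root sets. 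Thus $f=g$ forces $f=g=1$, i.e.\ $P=N=0$ and $\vec z=\vec 0$, a contradiction.
\end{itemize}
Hence $\|\vec z\|_1\ge 2k$ for every nonzero $\vec z\in\mathcal{L}_{q,k}$. For integer vectors and $p\ge 1$, $\|\vec z\|_p^p\ge\|\vec z\|_1$, which gives $\lambda_1^{(p)}(\mathcal{L}_{q,k})\ge(2k)^{1/p}$.

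This balance-plus-Newton argument is the standard one for these Reed--Solomon lattices. The factor $2$ over the Singleton bound comes from the all-ones row forcing $P=N$, not from any derivative identity.
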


\begin{proposition}[{\cite[Theorem~7.3]{wan2026}}]
\label{prop:wan-rs-projection}
Let $0<\varepsilon<1$, and choose constants $\delta$ and
$\varepsilon_1$ satisfying
\[
    0<\delta<\varepsilon_1
    <\frac{\varepsilon}{2(1+\varepsilon)}.
\]
For every sufficiently large prime $q$, define
\[
    k
    :=
    \left\lfloor\frac{q^{\varepsilon_1}}{2}\right\rfloor,
    \qquad
    h
    :=
    \lfloor(1+\varepsilon)k\rfloor,
    \qquad
    R
    :=
    \lfloor q^\delta\rfloor,
\]
and let
\[
    \vec{y}
    :=
    (\underbrace{1,\ldots,1}_{h},0,\ldots,0)^{\mathsf T}
    \in\{0,1\}^q.
\]
Then, for every $1\leq p<\infty$,
\[
    \{0,1\}^R
    \subseteq
    \left\{
        \mathbf{P}_R\vec{z}:
        \vec{z}\in
        (\vec{y}+\mathcal{L}_{q,k})
        \cap
        \mathcal{B}_p^q\left(
            \left(\frac{1+\varepsilon}{2}\right)^{1/p}
            (2k)^{1/p}
        \right)
    \right\},
\]
where
$\mathbf{P}_R=(\mathbf{I}_R\ \mathbf{0}_{R\times(q-R)})$
is the projection onto the first $R$ coordinates.
\end{proposition}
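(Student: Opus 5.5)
The plan is to reduce the statement to a purely combinatorial existence claim about subsets of $\mathbb{F}_q$, and then prove that claim by a character-sum count in the style of Li and Wan's work on Reed--Solomon deep holes.

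First observe that $\bigl((1+\varepsilon)/2\bigr)(2k)=(1+\varepsilon)k\ge h$. So it suffices to exhibit, for each $\vec{u}\in\{0,1\}^R$, a \emph{binary} vector $\vec{z}\in\{0,1\}^q$ of Hamming weight exactly $h$ with $\mathbf{P}_R\vec{z}=\vec{u}$ and $\vec{z}-\vec{y}\in\mathcal{L}_{q,k}$. Indeed, for such $\vec{z}$ we have $\|\vec{z}\|_p^p=h\le(1+\varepsilon)k$ for every finite $p$, so $\vec{z}$ lies in the required ball.

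Identify $\vec{z}$ with its support $S\subseteq\mathbb{F}_q$, using the fixed ordering $a_1,\dots,a_q$. The congruence $\mathbf{H}_q(k)\vec{z}\equiv\mathbf{H}_q(k)\vec{y}$ then says two things:
\[
|S|\equiv h \pmod q, \qquad \sum_{a\in S}a^j=\sum_{i\le h}a_i^j \ \text{ for } 1\le j\le k-1 .
\]
The first condition holds automatically when $|S|=h$. Write $S_0:=\{a_i: i\le R,\ u_i=1\}$ and $w:=|S_0|\le R$. We look for $S=S_0\cup T$ with $T\subseteq A:=\{a_{R+1},\dots,a_q\}$, $|T|=h':=h-w$, and with $T$ having prescribed power sums $\vec{\pi}\in\mathbb{F}_q^{k-1}$, namely the target sums minus those of $S_0$. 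Since $q$ is large, $R=\lfloor q^\delta\rfloor\ll k$, so $h'\ge(1+\varepsilon)k-q^\delta-1>k$.

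The core step is to show that the number $N(h',\vec{\pi})$ of $h'$-subsets of $A$ with power sums $\vec{\pi}$ is positive, for every $\vec{\pi}$ and every $h'$ in this range. I would expand the count with additive characters:
\[
N(h',\vec{\pi})=q^{-(k-1)}\sum_{\vec{c}\in\mathbb{F}_q^{k-1}}\psi(-\vec{c}\cdot\vec{\pi})\sum_{\substack{T\subseteq A\\ |T|=h'}}\prod_{a\in T}\psi\bigl(f_{\vec{c}}(a)\bigr),
\qquad f_{\vec{c}}(x)=\sum_{j=1}^{k-1}c_jx^j .
\]
The $\vec{c}=0$ term gives the main term $\binom{q-R}{h'}/q^{k-1}$. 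For $\vec{c}\ne0$ the inner sum is an elementary symmetric function of the values $\psi(f_{\vec{c}}(a))$, $a\in A$. I would bound it using Li--Wan's sieve for distinct coordinates, which expresses it through power sums $\sum_{a\in A}\psi(mf_{\vec{c}}(a))$. Each of these has absolute value at most $(k-2)\sqrt q+R$ by the Weil bound, since removing $R$ points costs at most $R$. The outcome is an error term bounded roughly by $\binom{(k-2)\sqrt q+R+h'}{h'}$.

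The main obstacle is the final numerical comparison showing that the main term dominates this error. We need
\[
\binom{q-R}{h'}\,q^{-(k-1)}>\binom{k\sqrt q+R+h'}{h'},
\]
with $k\approx q^{\varepsilon_1}/2$ and $h'\approx(1+\varepsilon)k$. The constraint $\delta<\varepsilon_1<\varepsilon/(2(1+\varepsilon))$ is presumably exactly what makes the exponents of $q$ on the two sides compare correctly. A naive estimate of the ratio as $(\sqrt q/k)^{h'}$ against $q^{k}$ does not obviously suffice, so I would first try to sharpen the error term. The plan is to exploit that only $\vec{c}$ with $\deg f_{\vec{c}}\ge1$ contribute, and to use the refined Li--Wan bound, in which the sieve weight depends on the cycle type and the large-$m$ terms are controlled by $\gcd(m,q)$. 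If that still falls short, I would borrow Wan's precise estimate from his Theorem~7.3 and substitute it at this step.
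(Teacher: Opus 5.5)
Your reduction is sound. For $p\ge\log_2 3$, Lemma~\ref{lem:short-rs-coset-is-binary} shows that every vector in this ball is binary of weight exactly $h$, so looking for $S=S_0\cup T$ with prescribed power sums is forced. Your character expansion of $N(h',\vec{\pi})$ is also correct. The paper gives no proof of this proposition (it imports it from Wan's Theorem~7.3), so your argument has to stand on its own.

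\textbf{The gap.} The argument fails at the step you flagged. The final comparison is not merely ``not obvious''; it is false throughout the stated range. Your error bound is at least $B^{h'}/h'!$ with $B\ge(k-2)\sqrt q$, while your main term is at most $q^{h'}/(h'!\,q^{k-1})$. Use $k-2\ge q^{\varepsilon_1}/3$ and $k<h'\le(1+\varepsilon)k<2k$. Then the ratio of main term to error bound is at most
\[
    \Bigl(\frac{\sqrt q}{k-2}\Bigr)^{h'}q^{-(k-1)}
    \le 3^{h'}q^{\,h'(1/2-\varepsilon_1)-k+1}
    \le 9^{k}\,q^{-(1-\varepsilon)k/2+1}\longrightarrow 0 .
\]
For the Li--Wan/Weil error to fall below the main term you would need, asymptotically, $h'(\tfrac12-\varepsilon_1)\ge k$. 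That forces $h'>2k$, i.e.\ $\alpha^p>1$, which is useless for local density.

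\textbf{The proposed repairs do not help.} None of them moves this exponent.
\begin{itemize}
\item Every nonzero $\vec c$ already has $\deg f_{\vec c}\ge1$, so that restriction removes nothing.
\item Stratifying by degree gains nothing: all but about a $1/q$ fraction of the characters have $\deg f_{\vec c}=k-1$ and carry the full Weil constant.
\item Since $q$ is prime and every cycle length satisfies $m\le h'<q$, you always have $\gcd(m,q)=1$. So the refined Li--Wan bound coincides with $\binom{B+h'-1}{h'}$.
\item Borrowing ``Wan's precise estimate from Theorem~7.3'' is circular, since that theorem is the statement you are proving.
\end{itemize}

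\textbf{What is missing.} The hypothesis tells you. The condition $\varepsilon_1<\varepsilon/(2(1+\varepsilon))$ is equivalent to $(1+\varepsilon)(\tfrac12-\varepsilon_1)>\tfrac12$, i.e.\ to $(\sqrt q/k)^{h'}\gg q^{(k-1)/2}$ rather than $q^{k-1}$. So you need an extra saving of about $q^{(k-1)/2}$. It can only come from cancellation \emph{across} the $q^{k-1}-1$ nontrivial characters, which is exactly what the triangle inequality over $\vec c$ discards.

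Equivalently, consider the solution variety $V_{\vec\pi}=\{\vec x\in\mathbb{A}^{h'}:\sum_i x_i^j=\pi_j,\ 1\le j\le k-1\}$, which counts ordered tuples. You need an error that is square-root in its dimension $h'-k+1$, of order $k^{(1+o(1))h'}q^{(h'-k+1)/2}$. Bounding each character separately only gives square-root in the ambient dimension, $(k\sqrt q)^{h'}$. The main term $q^{h'-k+1}$ beats the former precisely under the stated hypothesis.

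\textbf{A possible route.} One natural approach, not checked against the cited proof, is a Deligne-type point count for $V_{\vec\pi}$ as a whole. Its leading forms $\sum_i x_i^j$ cut out a smooth complete intersection. Indeed, a nonzero common zero must take at least $k$ distinct nonzero values; otherwise a Vandermonde system forces some multiplicity in $[1,h']$, with $h'<q$, to vanish mod $q$. Hence the Jacobian $(j x_i^{j-1})$ has rank $k-1$.

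Care is still needed in three places:
\begin{itemize}
\item affine points with fewer than $k-1$ distinct coordinates, where $V_{\vec\pi}$ can be singular;
\item the diagonals $x_i=x_{i'}$, which must be removed;
\item the forbidden values $x_i\in\{a_1,\dots,a_R\}$, which must also be removed by the same kind of estimate.
\end{itemize}
Without some input of this kind, the Fourier/Li--Wan argument cannot reach $h<2k$.
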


\begin{proof}[Proof of Theorem \ref{lem:binary-rs-gadget-lp}]
Choose a rational constant $\varepsilon$ such that $0<\varepsilon<2\alpha^p-1$, and set
\[
    \beta := \left(\frac{1+\varepsilon}{2}\right)^{1/p}.
\]
Then $0<\varepsilon<1$ and $2^{-1/p}<\beta<\alpha$.
Choose rational constants $\delta$ and $\varepsilon_1$ satisfying
\[
    0<\delta<\varepsilon_1
    <\frac{\varepsilon}{2(1+\varepsilon)}.
\]
For a sufficiently large prime $q$, define
\[
    k
    :=
    \left\lfloor\frac{q^{\varepsilon_1}}{2}\right\rfloor,
    \qquad
    h
    :=
    \lfloor(1+\varepsilon)k\rfloor,
    \qquad
    R
    :=
    \lfloor q^\delta\rfloor.
\]
Let $\mathcal{L}_{q,k}$ be the Reed--Solomon lattice defined at the
beginning of this appendix, and let
\[
    \vec{y}
    :=
    (\underbrace{1,\ldots,1}_{h},0,\ldots,0)^{\mathsf T}
    \in\{0,1\}^q.
\]

We first show that every prescribed projection is realized by a binary
vector of Hamming weight $h$. Put
\[
    p_0:=\log_2 3
    \qquad\text{and}\qquad
    \alpha_0
    :=
    \left(\frac{1+\varepsilon}{2}\right)^{1/p_0}.
\]
Applying \Cref{prop:wan-rs-projection} with $p=p_0$, for all
sufficiently large primes $q$,
\begin{equation}\label{eq:wan-projection}
    \{0,1\}^R
    \subseteq
    \left\{
        \mathbf{P}_R\vec{z}:
        \vec{z}\in
        (\vec{y}+\mathcal{L}_{q,k})
        \cap
        \mathcal{B}_{p_0}^q\bigl(\alpha_0(2k)^{1/p_0}\bigr)
    \right\}.
\end{equation}

Every vector $\vec{z}$ in the intersection on the right satisfies
\[
    \|\vec{z}\|_{p_0}^{p_0}
    \leq
    \alpha_0^{p_0}(2k)
    =
    (1+\varepsilon)k
    <
    h+1.
\]
Moreover, $h<2k\leq q^{\varepsilon_1}$, and hence $q>2h$ for all sufficiently large $q$.
Lemma~\ref{lem:short-rs-coset-is-binary} therefore shows that every
such $\vec{z}$ is binary and has Hamming weight $h$.

Consequently, \Cref{eq:wan-projection} implies
\begin{equation}\label{eq:binary-projection-surjective}
    \{0,1\}^R
    \subseteq
    \left\{
        \mathbf{P}_R\vec{z}:
        \vec{z}\in
        (\vec{y}+\mathcal{L}_{q,k})\cap\{0,1\}^q,
        \ \|\vec{z}\|_1=h
    \right\}.
\end{equation}

We now describe the algorithm on input $r$. Fix a sufficiently large
constant $q_0$, depending only on
$p,\alpha,\varepsilon,\varepsilon_1$, and $\delta$, such that the
preceding conclusions and all inequalities used below hold for every
prime $q\geq q_0$. Let
\[
    Q
    :=
    \max\left\{
        q_0,
        \left\lceil(r+1)^{1/\delta}\right\rceil
    \right\}.
\]
By Bertrand's postulate, there is a prime $q\in[Q,2Q]$.
Such a prime can be found deterministically by testing the integers in
this interval with a deterministic polynomial-time primality test.
In particular,
\[
    q=r^{O(1)}
    \qquad\text{and}\qquad
    R=\lfloor q^\delta\rfloor\geq r.
\]

In \Cref{prop:wan-rs-basis}, choose
$\widehat{\boldsymbol{\Lambda}}$ so that every entry belongs to
$\{0,\ldots,q-1\}$, and let
$\mathbf{B}_{q,k}\in\mathbb{Z}^{q\times q}$ be the resulting basis of
$\mathcal{L}_{q,k}$. Its construction uses only arithmetic and matrix
inversion over $\mathbb{F}_q$ and is deterministic polynomial time.

Set
\[
    a:=\lceil p\rceil,
    \qquad
    m:=(2k)^a,
    \qquad
    N:=mq,
    \qquad
    d:=q.
\]
Define
\[
    \mathbf{A}
    :=
    \begin{pmatrix}
        \mathbf{B}_{q,k}\\
        \vdots\\
        \mathbf{B}_{q,k}
    \end{pmatrix}
    \in\mathbb{Z}^{N\times d},
    \qquad
    \vec{s}
    :=
    \begin{pmatrix}
        \vec{y}\\
        \vdots\\
        \vec{y}
    \end{pmatrix}
    \in\mathbb{Z}^N,
\]
where both displays contain $m$ blocks.
Since $\mathbf{B}_{q,k}$ is nonsingular, $\mathbf{A}$ has full column
rank and is therefore a lattice basis.

Finally, define
\[
    \ell
    :=
    \left\lceil
        \frac{(mh)^{1/p}}{\alpha}
    \right\rceil,
\]
and let
$\mathbf{P}_r=(\mathbf{I}_r\ \mathbf{0}_{r\times(N-r)})
\in\mathbb{Z}^{r\times N}$ be the projection onto the first $r$
coordinates.

Every vector in $\mathcal{L}(\mathbf{A})$ has the form $(\vec{v}^{\mathsf T},\ldots,\vec{v}^{\mathsf T})^{\mathsf T}$
for some $\vec{v}\in\mathcal{L}_{q,k}$.
For all sufficiently large $q$, we have $k\leq q/2$. Therefore,
using \Cref{lem:wan-rs-minimum-distance},
\begin{equation}\label{eq:repeated-minimum-distance}
    \lambda_1^{(p)}(\mathcal{L}(\mathbf{A}))
    =
    m^{1/p}\lambda_1^{(p)}(\mathcal{L}_{q,k})
    \geq
    (2km)^{1/p}.
\end{equation}
Since $h \leq (1+\varepsilon)k = 2\beta^p k$, we have
\[
    \ell
    <
    \frac{(mh)^{1/p}}{\alpha}+1
    \leq
    \frac{\beta}{\alpha}(2km)^{1/p}+1.
\]
Because $\beta<\alpha$ and $(2km)^{1/p}$ tends to infinity with $q$,
we may include in the choice of $q_0$ the condition
\[
    1
    \leq
    \left(1-\frac{\beta}{\alpha}\right)(2km)^{1/p}.
\]
It follows that $\ell\leq(2km)^{1/p}$.
Together with \Cref{eq:repeated-minimum-distance}, this proves
\Cref{item:ldl-1}.

To prove \Cref{item:ldl-2}, fix $\vec{u}\in\{0,1\}^r$. Apply
\Cref{eq:binary-projection-surjective} to the vector in
$\{0,1\}^R$ obtained by appending $R-r$ zeros to $\vec{u}$. There is
$\vec{z}\in(\vec{y}+\mathcal{L}_{q,k})\cap\{0,1\}^q$ with
$\|\vec{z}\|_1=h$ whose first $r$ coordinates are $\vec{u}$.
Define $\vec{x} := (\vec{z}^{\mathsf T},\ldots,\vec{z}^{\mathsf T})^{\mathsf T} \in\mathbb{Z}^N$, with $m$ blocks.
Then
\[
    \vec{x}\in\vec{s}+\mathcal{L}(\mathbf{A}),
    \qquad
    \mathbf{P}_r\vec{x}=\vec{u},
\]
and, because $\vec{z}$ is binary of Hamming weight $h$,
\[
    \|\vec{x}\|_p
    =
    \bigl(m\|\vec{z}\|_p^p\bigr)^{1/p}
    =
    (mh)^{1/p}
    \leq
    \alpha\ell.
\]
This proves \Cref{item:ldl-2}.

It remains to prove the additional assertion. Suppose that $p\geq\log_2 3$.
Since $h<2k$, we have $h+1\leq2k$. By the mean value theorem,
\begin{align*}
    (m(h+1))^{1/p}-(mh)^{1/p}
    &=
    m^{1/p}
    \bigl((h+1)^{1/p}-h^{1/p}\bigr)
    \\
    &\geq
    \frac{m^{1/p}}
         {p(h+1)^{1-1/p}}
    \\
    &\geq
    \frac{(2k)^{a/p}}
         {p(2k)^{1-1/p}}
    \\
    &=
    \frac{(2k)^{(a+1)/p-1}}{p}.
\end{align*}
The exponent
\[
    \frac{a+1}{p}-1
    =
    \frac{a+1-p}{p}
\]
is positive because $a=\lceil p\rceil$. Thus, by increasing $q_0$ if
necessary, the last expression is larger than $\alpha$. Since
$\lceil t\rceil<t+1$ for every real $t$, it follows that
\begin{equation}\label{eq:rounding-below-next-weight}
    \alpha\ell
    <
    (mh)^{1/p}+\alpha
    <
    (m(h+1))^{1/p}.
\end{equation}

Take any $\vec{x}\in(\vec{s}+\mathcal{L}(\mathbf{A})) \cap \mathcal{B}_p^N(\alpha\ell)$.
There is a vector $\vec{z}\in\vec{y}+\mathcal{L}_{q,k}$
such that $\vec{x} = (\vec{z}^{\mathsf T},\ldots,\vec{z}^{\mathsf T})^{\mathsf T}$.
By \Cref{eq:rounding-below-next-weight},
\[
    m\|\vec{z}\|_p^p
    =
    \|\vec{x}\|_p^p
    \leq
    (\alpha\ell)^p
    <
    m(h+1).
\]
Hence $\|\vec{z}\|_p^p<h+1$.
Since $q>2h$, \Cref{lem:short-rs-coset-is-binary} yields $\vec{z}\in\{0,1\}^q$.
Therefore $\vec{x}\in\{0,1\}^N$,
which proves the additional assertion.

Finally, $q=r^{O(1)}$, and
\[
    k,h,R,m,N,d=r^{O(1)}.
\]
Under the standard convention that the fixed norm parameters are
efficiently computable, the integer $\ell$ can be computed exactly in
time polynomial in $\log q$. In particular, this is immediate when
$p$ and $\alpha$ are rational constants.

The entries of $\mathbf{B}_{q,k}$ have $O(\log q)$ bits,
$\mathbf{A}$ consists of $m$ copies of this matrix, and
$\vec{s}$ and $\mathbf{P}_r$ are binary. Moreover,
\[
    \ell
    \leq
    (2km)^{1/p}
    =
    r^{O(1)}.
\]
Hence the total bit length of $(\mathbf{A},\vec{s},\ell,\mathbf{P}_r)$
is bounded by $r^{O(1)}$, and all these objects are produced in
deterministic polynomial time.
\end{proof}
\section{Parametric Honest Hardness of LDLP Implies Locally Dense Lattice Constructions}\label{appendix:meta}
In this appendix, we show a simple meta-observation: if LDLP is hard under reductions that are \emph{honest} with respect to the hypercube dimension, then locally dense lattices can be constructed explicitly.

\begin{definition}[parametric honest reduction to LDLP]
\label{def:k-parametric-honest-reduction}
Let \(L\subseteq\{0,1\}^*\) be a language.
A deterministic polynomial-time many-one reduction \(R\) from \(L\) to \((p,\alpha)\)-\(\LDLP\) is called \emph{parametric honest} if there exists a constant \(\varepsilon>0\) such that for every input \(x\in\{0,1\}^*\), if
\[
    R(x)=(\mathbf A_x,\vec s_x,\ell_x,\mathbf T_x)
    \qquad\text{and}\qquad
    \mathbf T_x\in\mathbb Z^{k_x\times m_x},
\]
then $k_x\ge |x|^\varepsilon$.
In other words, the hypercube dimension of the output LDLP instance is polynomially lower-bounded in the input length.
\end{definition}

The next claim shows that parametric honest hardness of LDLP already implies an explicit construction of locally dense lattices.
The reason is that a YES instance of LDLP with \(k\) rows in its transformation matrix is exactly a \((p,\alpha,k)\)-coset-locally dense lattice.
Hence, by applying a parametric honest many-one reduction to a padded satisfiable SAT instance, one obtains a YES instance of LDLP with large enough hypercube dimension.

\begin{proposition}
\label{thm:parametric-honest-hardness-implies-ldl}
Fix \(p\in[1,\infty]\) and \(\alpha>0\).
Suppose that there exists a parametric honest deterministic polynomial-time many-one reduction from \(\mathrm{SAT}\) to \((p,\alpha)\)-\(\LDLP\).
Then there exists a deterministic algorithm that, given a positive integer \(r\), outputs a \((p,\alpha,r)\)-coset-locally dense lattice.
Moreover, the total bit length of the output is bounded by \(r^{O(1)}\).
\end{proposition}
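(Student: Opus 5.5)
The plan is to feed the honest reduction a padded satisfiable SAT formula and then strip the output down to exactly $r$ hypercube coordinates. Let $R$ be the parametric honest reduction from $\mathrm{SAT}$ to $(p,\alpha)$-$\LDLP$, with constant $\varepsilon>0$ from \Cref{def:k-parametric-honest-reduction}.

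\textbf{Step 1: choose a padded formula.} Given $r$, the algorithm first builds a trivially satisfiable CNF formula $x_r$ of length $|x_r|\ge r^{1/\varepsilon}$ and $|x_r|=r^{O(1)}$. For instance, take the conjunction of $\lceil r^{1/\varepsilon}\rceil$ copies of the clause $(y\vee\neg y)$, encoded in the fixed encoding of SAT instances. Since $\varepsilon$ is a fixed constant, $x_r$ is computable in time $r^{O(1)}$. Because $x_r\in\mathrm{SAT}$, the output
\[
    R(x_r)=(\mathbf A,\vec s,\ell,\mathbf T)
\]
is a YES instance of $(p,\alpha)$-$\LDLP$. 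Parametric honesty gives $k\ge |x_r|^\varepsilon\ge r$, where $k$ is the number of rows of $\mathbf T$. The tuple is computed in time polynomial in $|x_r|$, so its bit length is $r^{O(1)}$.

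\textbf{Step 2: reduce the hypercube dimension from $k$ to $r$.} Let $\mathbf T_r$ be the first $r$ rows of $\mathbf T$, and output $(\mathbf A,\vec s,\ell,\mathbf T_r)$.
\begin{itemize}
    \item \Cref{item:min-dist} of \Cref{def:ldl} is unchanged, since $\mathbf A$ and $\ell$ are unchanged.
    \item For \Cref{item:conv-hypercube}, fix any $\vec u\in\{0,1\}^r$ and extend it by zeros to $\vec u'\in\{0,1\}^k$. Local density of the original tuple gives $\vec v\in(\vec s+\mathcal L(\mathbf A))\cap\mathcal B_p(\alpha\ell)$ with $\mathbf T\vec v=\vec u'$. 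Then $\mathbf T_r\vec v=\vec u$.
\end{itemize}
So the output is a $(p,\alpha,r)$-coset-locally dense lattice. Its bit length is at most that of $R(x_r)$, hence $r^{O(1)}$.

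\textbf{Main obstacle.} The delicate point is bookkeeping rather than mathematics. First, the padding must guarantee $|x_r|^\varepsilon\ge r$ without making $|x_r|$ superpolynomial; this works because $\varepsilon$ is a fixed constant. Second, the truncation in Step 2 must preserve local density, which it does because a projection of the hypercube onto its first $r$ coordinates is still the full hypercube $\{0,1\}^r$.

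If one only wants \emph{some} hypercube dimension at least $r$, Step 2 can be skipped. Keeping it matches the statement's request for a $(p,\alpha,r)$-coset-locally dense lattice.
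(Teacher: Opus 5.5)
Your proof is correct and takes essentially the same route as the paper. Both pad a satisfiable formula so that parametric honesty forces $k \ge r$, then keep the first $r$ rows of $\mathbf{T}$ and extend each $\vec{b}\in\{0,1\}^r$ by zeros. The only difference is that the paper picks an integer $a$ with $a\varepsilon>1$ and requires $|\phi_r|\ge r^a$, which also avoids having to compute $\lceil r^{1/\varepsilon}\rceil$ when $\varepsilon$ is not rational.
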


\begin{proof}
Let $R$ be a parametric honest deterministic polynomial-time many-one reduction from $\SAT$ to $(p, \alpha)$-$\LDLP$.
Let $\varepsilon > 0$ be the constant from \Cref{def:k-parametric-honest-reduction}.

Fix an integer $a > 0$ such that $a\varepsilon > 1$.
By padding a fixed satisfiable Boolean formula with dummy variables or tautological clauses, we may assume without loss of generality that, for every positive integer $r$, we can construct in time $r^{O(1)}$ a satisfiable SAT instance $\phi_r$ such that $|\phi_r| \geq r^a$.

Run the reduction $R$ on $\phi_r$ as input, and write
\[
    R(\phi_r) = (\mathbf{A}, \vec{s}, \ell, \mathbf{T}), \qquad \mathbf{T} \in \mathbb{Z}^{k\times m}.
\]
Since $\phi_r \in \SAT$, $R$ is a many-one reduction to $(p, \alpha)$-LDLP, we have $(\mathbf{A}, \vec{s}, \ell, \mathbf{T}) \in (p, \alpha)$-LDLP.
Hence $\lambda_1^{(p)}(\mathcal{L}(\mathbf{A})) \geq \ell$ and
\[
    \{0,1\}^k \subseteq \left\{\mathbf{T}\vec{v} : \vec{v} \in (\vec{s} + \mathcal{L}(\mathbf{A})) \cap \mathcal{B}_p^m(\alpha \ell)\right\}.
\]
Moreover, by parametric honesty,
\[
    k \geq |\phi_r|^{\varepsilon} \geq r^{a\varepsilon} \geq r.
\]

Let $\mathbf{T'} \in \mathbb{Z}^{r\times m}$ be the matrix consisting of the first $r$ rows of $\mathbf{T}$.
We claim that $(\mathbf{A}, \vec{s}, \ell, \mathbf{T}')$ is a $(p, \alpha, r)$-locally dense lattice.
The minimum distance of $\mathcal{L}(\mathbf{A})$ is unchanged: $\lambda_1^{(p)}
(\mathcal{L}(\mathbf{A})) \geq \ell$.
It remains to verify \Cref{item:conv-hypercube} of \Cref{def:ldl}.
Fix an arbitrary vector $\vec b\in\{0,1\}^r$.
Extend it to a vector in \(\{0,1\}^k\) by setting $\widetilde{\vec b}:=(\vec b,\vec 0_{k-r})$.
Since \((\mathbf A,\vec s,\ell,\mathbf T)\) is a YES instance of
\((p,\alpha)\)-\(\LDLP\), there exists $\vec v\in (\vec s+\mathcal L(\mathbf A)) \cap \mathcal B_p^m(\alpha\ell)$ such that $\mathbf T\vec v=\widetilde{\vec b}$.
Taking the first \(r\) coordinates gives
\[
    \mathbf{T}'\vec v=\vec b.
\]
Since \(\vec b\in\{0,1\}^r\) was arbitrary,
\[
    \{0,1\}^{r}
    \subseteq
    \left\{
        \mathbf {T}'\vec v:
        \vec v\in
        (\vec s+\mathcal L(\mathbf A))
        \cap \mathcal B_p^m(\alpha\ell)
    \right\}.
\]
Thus $(\mathbf A,\vec s,\ell,\mathbf{T}')$ is a \((p,\alpha,r)\)-coset-locally dense lattice.

Finally, the output size is \(r^{O(1)}\).
Indeed, \(|\phi_r|=r^{O(1)}\), and \(R\) runs in polynomial time, so the bit length of \(R(\phi_r)\) is polynomial in \(|\phi_r|\), hence \(r^{O(1)}\).
\end{proof}

\end{document}